\documentclass[11pt]{article}
\usepackage{amsmath, amssymb, amsthm, amsfonts}
\usepackage[utf8]{inputenc}
\usepackage{geometry}
\usepackage{hyperref}
\usepackage{array}
\usepackage{placeins}
\usepackage{tikz-cd}
\usepackage{url}
\newtheorem{theorem}{Theorem}[section]
\newtheorem{definition}[theorem]{Definition}
\newtheorem{example}[theorem]{Example}
\newtheorem{remark}[theorem]{Remark}
\newtheorem{lemma}[theorem]{Lemma}
\newtheorem{corollary}[theorem]{Corollary}

\newtheorem{proposition}[theorem]{Proposition}

\newtheorem{fact}[theorem]{Fact}

\newcommand{\A}{\mathcal{A}}

\newcommand{\Tr}{\operatorname{Tr}}

\newcommand{\Herm}{\mathcal{H}_{n}}

\DeclareMathOperator{\diag}{diag}
\newcommand{\Ad}{\operatorname{Ad}}

\newcommand{\Sn}{\mathcal{S}^{\mathrm n}}
\title{Finite-Precision Algebraic Quantum Field Theory: 
Quantum Parcels, Locality, Modular Structure and  Factors }
\author{Abbas Edalat\thanks{Corresponding author: Abbas Edalat, a.edalat@imperial.ac.uk}\\Imperial College London\\a.edalat@imperial.ac.uk}
\date{}

\begin{document}

\maketitle

\begin{abstract}
We develop Interval Algebraic Quantum Field Theory (IAQFT), a
finite-precision operational formulation of algebraic quantum field
theory (AQFT). Exact states are indispensable mathematical objects of
AQFT, but physical experiments provide only finitely many observations
at finite resolution and therefore determine regions of compatible
states rather than uniquely specified point states. IAQFT represents
this finite observational information by quantum parcels---non-empty
convex weak$^*$ open regions of state space---with observational parcels
determined directly by finitely many finite-precision expectation
constraints. The local and operator-algebraic structure of AQFT is
unchanged; exact states are retained as ideal limiting objects recovered
under appropriate parcel refinement.

We establish reduction to exact states under successive observational
refinement, measurement update and no-signalling, and formulate locality
through compatible parcel nets. In finite dimensions an exact Jacobian
formula for invertible L\"uders updates yields a general
volume-contraction criterion. Spacelike vacuum correlations and strict
CHSH violations, when present, persist throughout sufficiently small
observational parcels. Reeh--Schlieder cyclicity becomes finite-precision local
reachability, while parcel equivalence of vacuum refinements implies
vacuum GNS equivalence and hence does not evade Haag-type obstructions.

The modular structure of AQFT admits a corresponding finite-precision
formulation. Tomita--Takesaki modular automorphisms act naturally on
parcels, measurement update is modularly covariant, and modular
conjugation induces a duality between normal parcels of a von Neumann
algebra and its commutant. State-dependent modular correlations are
continuous under predual-norm variation of faithful normal states,
while shrinking parcels with vanishing KMS defect recover a KMS state
for fixed dynamics. Bisognano--Wichmann yields a parcel formulation of
the Unruh effect, with the exact KMS relation and Unruh temperature
recovered under refinement. The framework is also compatible with
directed lattice approximations.

Finally, limiting projection faces and their canonical conjugation
actions characterize minimality and finiteness and recover the
Murray--von Neumann classification of factors. For the
Type~$\mathrm{III}$ factors characteristic of local AQFT, the absence of
minimal projections and pure normal states is reflected in the absence
of singleton nonzero projection faces; in Type~$\mathrm{II}_1$ factors,
homogeneous parcel counting reconstructs the normalized trace and
projection equivalence. Thus IAQFT provides an operational state description appropriate to
finite experimental information, while retaining exact AQFT states as
ideal limiting objects; within this framework it distinguishes
structures robust at finite precision, those acquiring a finite-precision
operational interpretation, and those recovered only in the exact limit.
\end{abstract}

\section{Introduction}
\label{sec:intro}

Algebraic quantum field theory (AQFT), developed by Haag, Kastler, Araki
and others, provides a mathematically rigorous formulation of relativistic
quantum field theory in terms of nets of local operator algebras
\cite{HaagKastler,Araki,haag2012local}. Together with the Wightman
framework~\cite{Wightman1956,streaterWightman}, it arose in part from the
effort to place quantum field theory on firm mathematical foundations
beyond the formal perturbative expansions and Feynman-diagram methods that
are extraordinarily successful computationally but raise difficult
questions concerning divergences, renormalisation and interacting fields.
AQFT has led to major structural results, including the Reeh--Schlieder
theorem, Tomita--Takesaki modular theory, the Bisognano--Wichmann theorem,
superselection theory and the Murray--von Neumann analysis of local
von Neumann algebras.

These structures are formulated in terms of exact states, exact
expectation values and exact algebraic relations. Such exact objects are
indispensable to the mathematical theory, but an exactly specified point
state is an idealisation from the operational point of view. A physical
experiment supplies only finitely many observations, each with finite
resolution, and therefore determines a region of states compatible with
the observations rather than a unique point of state space. The question
is consequently not whether finite precision should be added as a
correction to an otherwise operationally exact state description, but how
the finite information actually available in experiments should be
represented within the exact mathematical structure of AQFT.

This distinction between exact mathematical description and operational
accessibility has been present since the foundations of quantum theory.
Heisenberg's original formulation of quantum mechanics was guided by the
principle that the theory should be expressed in terms of observable
quantities~\cite{Heisenberg1925}, while von Neumann likewise emphasized
the limited operational significance of complete microscopic
specification, particularly for macroscopic systems~\cite{vonNeumann1955}.

The finite-precision viewpoint was developed for ordinary quantum
mechanics in Interval Quantum Mechanics (IQM)~\cite{Edalat2026}. There a
finite experimental specification is represented by an
\emph{observational parcel}, defined by finitely many finite-precision
expectation constraints on density matrices, while a \emph{quantum
parcel}, or simply a parcel, is a non-empty convex weak open set of
density matrices. Exact density operators remain part of the mathematical
theory and are recovered as ideal limits of successive parcel refinement.
The same viewpoint has also been studied in the context of ergodicity and
thermalisation~\cite{EdalatErgodicity2026}.

The present paper extends this principle to AQFT. The mathematical
setting is substantially different: AQFT begins with the state space
$\mathcal S(\mathcal A)$ of a quasi-local $C^*$-algebra, naturally
equipped with the weak$^*$ topology; where von Neumann algebras are
required, the relevant objects are normal states. We define a
\emph{quantum parcel}, or simply a \emph{parcel}, to be a non-empty
convex weak$^*$ open region of state space. When the region is determined
directly by finitely many finite-precision expectation constraints, it is
called an \emph{observational parcel}; such parcels form a basis for the
relative weak$^*$ topology. The local algebras, their net structure,
locality, covariance and modular structure remain unchanged: IAQFT
modifies the operational representation of state information, not the
underlying operator-algebraic theory.

The first part of the theory develops the basic consequences of this
replacement of operationally specified point states by parcels.
Measurement update acts naturally on finite-precision state information,
and global reduction gives conditions under which successive refinement
recovers a unique exact state. In finite dimensions IAQFT is canonically
equivalent to IQM; observational coordinates provide an operational
volume, and an exact Jacobian formula for invertible L\"uders updates
gives a general volume-contraction criterion. Thus exact states remain
the limiting mathematical objects while parcels carry the information
available at finite experimental resolution.

Locality requires the parcel description to respect the net structure of
AQFT. Compatible parcel nets assign finite-precision state information to
nested spacetime regions consistently under restriction, and a local
reduction theorem gives conditions under which decreasing observational
oscillations determine a unique limiting local state. Nonzero spacelike
vacuum correlations and strict CHSH violations, when present, persist
throughout sufficiently small observational parcels, while nonselective
local measurement preserves spacelike-separated expectation values.
Finite observational uncertainty therefore retains both the
nonclassical correlations and the no-signalling structure of AQFT.

With this basic framework in place, we turn to characteristic structural
results of AQFT. The Reeh--Schlieder theorem states that, for every
non-empty bounded region $\mathcal O$, the vacuum vector is cyclic for
the local algebra. Its operational interpretation has long appeared
counterintuitive~\cite{Halvorson2001,VALENTE2014147}, since mathematical
density can be read too readily as the physical ability to prepare an
arbitrarily specified global target state by acting locally. At finite
precision the natural question is instead whether every parcel around a
target state is locally reachable. We show that this local parcel
reachability is precisely equivalent to Reeh--Schlieder cyclicity.

Haag's theorem raises a complementary question. Under its standard
hypotheses, free and interacting theories cannot be related by the
vacuum-preserving unitary equivalence suggested by the interaction
picture~\cite{haag2012local,EarmanFraser2006}. Since physical comparison
has finite observational content and precision, one may ask whether
passage from exact states to parcels weakens this obstruction. For the
notion of parcel equivalence introduced below, it does not: equivalence
of complete vacuum refinement structures already forces equivalence of
the limiting vacuum GNS representations. Haag-type inequivalence
therefore survives passage to finite-precision state information.

Tomita--Takesaki theory admits a corresponding parcel formulation.
Modular automorphisms act naturally on normal parcels, measurement update
is modularly covariant, and modular conjugation induces a duality between
normal parcels of a von Neumann algebra and its commutant. State-dependent
modular correlations are continuous under predual-norm variation of
faithful normal states, while shrinking parcels with vanishing KMS defect
recover a KMS state for fixed dynamics. The Bisognano--Wichmann theorem
then yields a finite-precision formulation of the Unruh effect: Rindler
parcels determine ranges of thermal correlation values, while refinement
around the Minkowski vacuum recovers the exact KMS relation and Unruh
temperature
\cite{bisognanowichmann76,Unruh1976,CrispinoHiguchiMatsas2008}.

The von Neumann-algebraic structure of local AQFT provides a further
test of the framework. Local algebras are typically Type~$\mathrm{III}$
factors~\cite{Yngvason2005,FewsterRejzner2020}; they have no minimal
projections or pure normal states and admit no normal trace supporting
the familiar finite-dimensional density-matrix description
\cite{Yngvason2005,Witten2018}. Parcels require none of these structures.
Moreover, limiting projection faces and their canonical conjugation
actions recover substantial Murray--von Neumann structure: singleton
faces characterize minimality, invariant states characterize finiteness,
and these criteria determine factor type. In Type~$\mathrm{II}_1$
factors, homogeneous parcel counting reconstructs the normalized trace
and projection equivalence.

The framework is also compatible with directed lattice approximations.
Lattice refinement approximates the continuum observable algebra, whereas
observational refinement increases the precision with which a state is
specified. Under the stated inductive-limit hypotheses these two
processes are compatible: lattice parcels lift to continuum parcels and
parcel reduction commutes with passage to the continuum limit.

The use of open sets and refinement orders has a natural
information-theoretic background. In geometric logic and domain theory,
finite information is represented by open regions or partial
approximations, with increasing information corresponding to refinement
\cite{johnstone1982stone,vic89,abramsky1991domain,scott1970outline,
smyth1983,abramsky_domain_1994}. This viewpoint motivates the use of
weak$^*$ open parcels and their ordering by reverse inclusion. IAQFT does
not require domain-theoretic machinery for its operator-algebraic
results, but adopts the underlying principle that finite information is
naturally represented by regions of possibilities rather than exact
points.

The relation between AQFT and IAQFT may also be understood through an
analogy with numerical computation. Real analysis provides the exact
mathematical description of the continuum, yet practical computation
uses finite representations such as floating-point numbers or interval
approximations. Likewise, AQFT provides the exact mathematical theory,
whereas IAQFT represents experimentally available state information by
finite-precision parcels. Exact states remain indispensable and are
recovered as ideal limits of refinement, just as real numbers are
approached by increasingly accurate finite representations.

IAQFT should be distinguished from perturbative algebraic quantum field
theory (pAQFT)~\cite{Rejzner2016}, which constructs interacting theories
perturbatively using topological algebras of functionals and formal power
series. IAQFT instead works within the standard $C^*$- and von
Neumann-algebraic setting and introduces finite precision at the level of
operational state information.

Related finite-information approaches pursue different aims. Del Santo
and Gisin~\cite{DelSantoGisin2025} investigate whether features usually
regarded as distinctively quantum may arise from finite information and
fundamental indeterminism. IAQFT instead retains the operator-algebraic
and nonclassical structure of quantum field theory, including Bell
violation, while changing its operational state description. In
\cite{Treppiedi2026}, finite operational resolution is imposed through
consistency conditions on representations of additive symmetry and used
to derive quantum-mechanical and quantum-field-theoretic structure.
IAQFT does not derive the underlying formalism from finite resolution;
it retains AQFT and asks how finite observational information about its
states should be represented.

The resulting picture distinguishes several ways in which exact AQFT
structure appears at finite precision. Some properties, such as
spacelike correlations and strict Bell violations, are already robust on
sufficiently small observational parcels; some, such as
Reeh--Schlieder cyclicity, acquire a finite-precision operational
interpretation; structural obstructions such as Haag-type inequivalence
survive passage to parcels; and exact states and thermal relations emerge
under successive refinement. IAQFT thereby places finite observational
information and exact operator-algebraic structure within a single
framework.

The paper is organised as follows. Section~\ref{sec:state-space} recalls
the required state-space background. Sections~\ref{sec:parcels-update}--
\ref{sec:nets-germs} develop parcels, measurement, reduction,
finite-dimensional volume contraction and locality.
Sections~\ref{sec:Bell-correlation}--\ref{sec:haag} treat spacelike
correlations, Bell violation, Reeh--Schlieder and Haag-type obstructions.
Sections~\ref{sec:modular}--\ref{sec:unruh} develop modular, KMS and
Unruh theory, and Section~\ref{sec:lattice} treats lattice approximation.
Section~\ref{sec:minimality-finiteness} develops the factor-theoretic
results. The final sections discuss open problems and conclude with the
conceptual consequences of IAQFT.
\section{State Spaces and the Weak* Topology}
\label{sec:state-space}

We assume familiarity with the basic theory of $C^*$- and von Neumann
algebras, including states, representations and the GNS construction, and
with the Haag--Kastler formulation of AQFT; standard references include
\cite{KadisonRingrose1,KadisonRingrose2,brattelirobinson1,
haag2012local,Araki}. We recall only the state-space and topological facts
needed for the finite-precision construction below. For background on
weak and weak$^*$ topologies, Banach--Alaoglu compactness and state spaces
of $C^*$-algebras, see, for example,
\cite{KadisonRingrose1,brattelirobinson1}.

Let $\A$ be a unital $C^*$-algebra. A \emph{state} on $\A$ is a linear
functional $\omega:\A\to\mathbb C$ satisfying $\omega(A^*A)\geq0$ for all
$A\in\A$ and $\omega(I)=1$. The state space
$\mathcal S(\A)\subseteq\A^*$ is convex and is equipped with the
weak$^*$ topology
\[
\sigma(\A^*,\A),
\]
the coarsest topology making every evaluation
$\omega\mapsto\omega(A)$, $A\in\A$, continuous. Thus a net
$\{\omega_\alpha\}$ converges weak$^*$ to $\omega$ precisely when
\[
\omega_\alpha(A)\longrightarrow\omega(A)
\qquad\text{for every }A\in\A.
\]

For each $A\in\A$, the condition $\omega(A^*A)\geq0$ is weak$^*$
closed, as is the condition $\omega(I)=1$. Hence
$\mathcal S(\A)$ is a weak$^*$ closed subset of the closed unit ball of
$\A^*$ and is therefore weak$^*$ compact by the Banach--Alaoglu theorem.
It is Hausdorff because distinct states differ on some $A\in\A$. 

Since every $A\in\A$ can be written $A=B+iC$ with $B,C$ self-adjoint,
the relative weak$^*$ topology on $\mathcal S(\A)$ has a basis consisting
of finite intersections of expectation constraints
\[
\left\{
\omega\in\mathcal S(\A):
a_i<\omega(A_i)<b_i,\quad i=1,\ldots,m
\right\},
\qquad
A_i=A_i^*,\quad a_i,b_i\in\mathbb R.
\]
Thus finitely many finite-precision expectation measurements naturally
determine weak$^*$ open regions of state space. These will be formalized
as observational parcels in the next section, where general quantum
parcels are defined as non-empty convex weak$^*$ open subsets of
$\mathcal S(\A)$.

An AQFT is specified by a net of local $C^*$-algebras
$\{\A(\mathcal O)\}_{\mathcal O}$ indexed by bounded spacetime regions
and satisfying isotony
\cite{HaagKastler,haag2012local}:
\[
\mathcal O_1\subseteq\mathcal O_2
\quad\Longrightarrow\quad
\A(\mathcal O_1)\subseteq\A(\mathcal O_2).
\]
More formally, these inclusions may be represented by compatible
injective unital $*$-homomorphisms. After identifying the local algebras
with their images, the associated \emph{quasi-local algebra} is
\[
\A
=
\overline{\bigcup_{\mathcal O}\A(\mathcal O)}^{\,\|\cdot\|}.
\]
The quasi-local state space $\mathcal S(\A)$ provides the global state
space for the finite-precision construction introduced in the next
section, while locality remains encoded by the net of local algebras and
the induced restriction maps on their state spaces.
\section{Quantum Parcels and Measurement Update}
\label{sec:parcels-update}
A physical experiment does not determine the expectation values of
observables exactly. Measurements are finite in number and are performed
with apparatus of finite resolution and accuracy, so the resulting
observational information constrains expectation values to lie within
finite intervals rather than assigning them exact real values. Such
information therefore does not single out a unique state: it determines
the collection of all states whose expectation values satisfy the
observed constraints. Since finite intersections of expectation intervals
form a basis for the weak$^*$ topology, these finite-precision state
descriptions are naturally represented by open regions of state space.
Requiring convexity additionally reflects the fact that if two states are
compatible with the available information, then so is any probabilistic
mixture of them. This motivates the central operational object of IAQFT.
\begin{definition}[Observational and quantum parcels]
\label{def:quantum-parcel}
Let $\A$ be a unital $C^*$-algebra.
An \emph{observational parcel} is a non-empty set of the form
\[
O
=
\left\{
\omega\in\mathcal S(\A):
a_i<\omega(A_i)<b_i,\quad i=1,\ldots,m
\right\},
\]
where $A_i=A_i^*\in\A$ and $a_i,b_i\in\mathbb R$ with $a_i<b_i$.
It represents the states compatible with finitely many finite-precision
expectation-value observations.

A \emph{quantum parcel}, or simply a \emph{parcel}, is a non-empty
convex weak$^*$ open subset
\[
O\subseteq\mathcal S(\A).
\]
\end{definition}

Every observational parcel is a parcel, since each evaluation
$\omega\mapsto\omega(A_i)$ is weak$^*$ continuous and affine.
Observational parcels form a basis for the relative weak$^*$ topology on
$\mathcal S(\A)$. Thus observational parcels describe finite
experimental specifications directly, while parcels provide the natural
class of finite-precision state regions used throughout IAQFT.

The three requirements have direct operational meanings: openness represents
finite observational tolerance, convexity expresses closure under
probabilistic mixing, and non-emptiness ensures compatibility with at least
one state. We order parcels by reverse inclusion,
$O\sqsubseteq U$ iff $O\supseteq U$, so that refinement corresponds to
increasing information.

\begin{example}[A finite-precision observational parcel]
Let $A=A^*\in\A$ and suppose an experiment determines its expectation
value only to lie in the interval $(a,b)$. The compatible states form the observational parcel
\[
O(A;a,b)
=
\{\omega\in\mathcal S(\A):a<\omega(A)<b\}.
\]
With two observed quantities $A,B$, the corresponding observational
parcel is
\[
O(A,B;a,b,c,d)
=
\left\{
\omega\in\mathcal S(\A):
a<\omega(A)<b,\quad
c<\omega(B)<d
\right\}.
\]
Thus increasing the number or precision of observational constraints
shrinks the region of compatible states without requiring an exact state
to be specified.
\end{example}

\subsection{Operational Volume and Information}
\label{sec:operational-volume}
An observational parcel may lie in an infinite-dimensional state space
while being specified by only finitely many observations. Its finite
observational content therefore admits a natural Euclidean representation.

Let
$A_1,\ldots,A_m\in\A$ be self-adjoint observables. Discarding linear
dependencies and constant directions proportional to $I$, choose
self-adjoint observables $H_1,\ldots,H_d$ whose expectation values give
independent observational coordinates, and let
\[
V=\operatorname{span}_{\mathbb R}\{H_1,\ldots,H_d\}.
\]
Define
\[
\Phi_V:\mathcal S(\A)\longrightarrow\mathbb R^d,
\qquad
\Phi_V(\omega)
=
\bigl(\omega(H_1),\ldots,\omega(H_d)\bigr).
\]
For an observational parcel, the defining observations naturally
determine such a finite-dimensional observational space $V$. More
generally, for any parcel $O$ and any fixed finite-dimensional
observational space $V$, the set $\Phi_V(O)$ is its
\emph{observational shadow} in the chosen coordinates.

\begin{definition}[Operational volume]
\label{def:operational-volume}
Let $O$ be a parcel and let $V$ be a fixed finite-dimensional
observational space as above. If $\Phi_V(O)$ has full dimension in
$\mathbb R^d$, define the operational volume of $O$ relative to $V$ by
\[
\operatorname{Vol}_V(O)
:=
\operatorname{Vol}_{\mathbb R^d}\bigl(\Phi_V(O)\bigr),
\]
where $\operatorname{Vol}_{\mathbb R^d}$ denotes Lebesgue volume on
$\mathbb R^d$.
\end{definition}

Operational volume is not an intrinsic volume on the generally
infinite-dimensional state space. It measures only the finite
observational information represented in the chosen coordinates $V$.
In particular, it depends on the observational subspace and on the
normalization of its coordinates. Once these are fixed, however, it is
monotone under parcel refinement: if $U\subseteq O$, then
$\Phi_V(U)\subseteq\Phi_V(O)$ and hence
\[
\operatorname{Vol}_V(U)
\leq
\operatorname{Vol}_V(O).
\]

This gives a corresponding relative measure of information.

\begin{definition}[Operational information]
\label{def:operational-information}
Let $U\subseteq O$ be parcels whose observational shadows in the same
fixed observational space $V$ have positive finite $V$-volume. Define the information gained in refining $O$ to $U$
by
\[
I_V(U:O)
:=
\log
\frac{\operatorname{Vol}_V(O)}
     {\operatorname{Vol}_V(U)}.
\]
\end{definition}

Although the individual volumes depend on the choice and normalization
of observational coordinates, the ratio $I_V(U:O)$ is invariant under
any invertible linear change of coordinates on the fixed observational
space $V$.

Thus $I_V(U:O)\geq0$, with strict inequality whenever the observational
volume decreases strictly. More generally, for
$W\subseteq U\subseteq O$,
\[
I_V(W:O)
=
I_V(W:U)+I_V(U:O),
\]
so successive refinement gives additive information gain.

The construction separates the generally infinite-dimensional geometry
of the AQFT state space from the finite-dimensional information supplied
by an experiment. It requires no trace or density-matrix representation
and therefore remains meaningful for observational parcels on
Type~$\mathrm{III}$ state spaces. In finite dimensions, Section~\ref{sec:finite} identifies the
operational volume associated with an informationally complete
observational space with Hilbert--Schmidt volume and obtains quantitative
measurement-contraction results. Whenever such an update strictly
decreases the corresponding volume, the associated relative operational
information gain is strictly positive.

An observational parcel defined by finitely many local observables is
already determined at a finite stage of the AQFT net. We assume, as is standard, that the
index set of bounded regions is directed under inclusion, so that any
finitely many regions $\mathcal O_1,\ldots,\mathcal O_m$ are contained in
some common bounded region $\mathcal O$; if $A_i\in\A(\mathcal O_i)$ for
$i=1,\ldots,m$, isotony then gives $A_i\in\A(\mathcal O)$ for all $i$. An
arbitrary element of the quasi-local algebra need not itself be local,
although it is a norm limit of local observables.

\subsection{Measurement Update}
\label{sec:measurement-update}

Let $\{E_j\}_{j=1}^m\subset\A$ be a POVM, so that $E_j\geq0$ and
$\sum_jE_j=I$, and write $E_j=M_j^*M_j$. Conditional on outcome $j$, a
state with $\omega(E_j)>0$ is updated by
\[
f_j(\omega)(A)
=
\frac{\omega(M_j^*AM_j)}{\omega(E_j)}.
\]
The corresponding nonselective update is
$\mathcal E^*(\omega)(A)=\sum_j\omega(M_j^*AM_j)$. The former is
nonlinear because of normalization, whereas the latter is linear and
always defined. Selective updates describe conditional parcel evolution;
the nonselective update will be used below in connection with
no-signalling.

If $M_j$ is invertible, $f_j$ is a fractional-linear weak$^*$
homeomorphism of its domain; see, for example,
\cite[Sec.~2.3]{brattelirobinson1}. It also maps line segments to line
segments up to reparametrisation.

\begin{proposition}[Stability under invertible L\"uders updates]
\label{prop:fj}
If $M_j$ is invertible, then $f_j(O)$ is a parcel for every parcel $O$.
\end{proposition}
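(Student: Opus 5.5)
The plan is to show that $f_j$ is a weak$^*$ homeomorphism of the whole state space $\mathcal S(\A)$ onto itself that maps line segments onto line segments. Openness, convexity and non-emptiness of $f_j(O)$ then follow directly.

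\emph{Global definition.} Since $M_j$ is invertible, $E_j=M_j^*M_j$ is positive and invertible, so $E_j\geq c\,I$ for some $c>0$. Hence $\omega(E_j)\geq c>0$ for every state $\omega$, and $f_j$ is defined on all of $\mathcal S(\A)$. Each $f_j(\omega)$ is positive and satisfies $f_j(\omega)(I)=1$, so it is a state.

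\emph{Continuity.} For fixed $A$, the map $\omega\mapsto \omega(M_j^*AM_j)/\omega(E_j)$ is a quotient of weak$^*$ continuous functions whose denominator is bounded below by $c$. It is therefore weak$^*$ continuous, and so $f_j$ is continuous.

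\emph{Inverse.} Put $N=M_j^{-1}$ and define
\[
g(\sigma)(A)=\frac{\sigma(N^*AN)}{\sigma(N^*N)} .
\]
This is well defined and continuous by the same argument, since $N^*N$ is also positive and invertible. A direct computation gives
\[
f_j(g(\sigma))(A)=\frac{\sigma(N^*M_j^*AM_jN)}{\sigma(N^*M_j^*M_jN)}=\sigma(A),
\]
and symmetrically $g\circ f_j=\mathrm{id}$. So $f_j$ is a weak$^*$ homeomorphism of $\mathcal S(\A)$ onto itself. It follows that $f_j(O)=g^{-1}(O)$ is weak$^*$ open, and $f_j(O)$ is non-empty because $O$ is.

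\emph{Convexity.} This is the only step needing care, because $f_j$ is fractional-linear rather than affine. For $\omega_1,\omega_2\in O$ and $t\in[0,1]$, write $e_i=\omega_i(E_j)>0$. Then
\[
f_j\bigl(t\omega_1+(1-t)\omega_2\bigr)=s\,f_j(\omega_1)+(1-s)\,f_j(\omega_2),
\qquad
s(t)=\frac{te_1}{te_1+(1-t)e_2}.
\]
The function $s$ is continuous and strictly increasing on $[0,1]$, with $s(0)=0$ and $s(1)=1$, so it maps $[0,1]$ onto $[0,1]$. Since $O$ is convex, $t\omega_1+(1-t)\omega_2\in O$ for all $t$. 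Hence every point of the segment $[f_j(\omega_1),f_j(\omega_2)]$ is the image of a point of $O$, and $f_j(O)$ is convex.

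I expect the convexity step to be the main obstacle: one must verify that the reparametrisation $s$ is onto $[0,1]$, not merely that $f_j$ maps segments into segments. Together with openness and non-emptiness, this shows that $f_j(O)$ is a parcel.
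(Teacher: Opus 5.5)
Your proof is correct and follows the same route as the paper. Openness and non-emptiness come from $f_j$ being a weak$^*$ homeomorphism, and convexity comes from the reparametrisation identity $f_j(t\omega_1+(1-t)\omega_2)=s\,f_j(\omega_1)+(1-s)\,f_j(\omega_2)$. You also supply two details the paper leaves implicit: the explicit inverse $g$ built from $M_j^{-1}$, where the paper only cites the homeomorphism property, and the fact that $s$ maps $[0,1]$ onto $[0,1]$, which is what actually justifies passing from ``segments into segments'' to convexity of the image.
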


\begin{proof}
The weak$^*$ homeomorphism property gives openness and preserves
non-emptiness. For convexity, let
$\omega=\lambda\omega_1+(1-\lambda)\omega_2$, with
$0\leq\lambda\leq1$. Then
\[
f_j(\omega)
=
\mu f_j(\omega_1)+(1-\mu)f_j(\omega_2),
\qquad
\mu=
\frac{\lambda\omega_1(E_j)}
{\lambda\omega_1(E_j)+(1-\lambda)\omega_2(E_j)}
\in[0,1].
\]
Hence $f_j$ maps line segments into line segments and therefore maps
convex sets to convex sets.
\end{proof}

For rank-deficient $M_j$, the image need not be weak$^*$ open in
$\mathcal S(\A)$ and may instead lie in the face associated with the range
projection of $E_j$; Proposition~\ref{prop:fj} therefore does not apply.

\begin{definition}[Measurement update]
\label{def:update}
For a parcel $O$ and observed outcome $j$, define
\[
O'
=
f_j\bigl(O\cap\{\omega:\omega(E_j)>0\}\bigr).
\]
If the intersection is empty, outcome $j$ is impossible for every state
in $O$.
\end{definition}

For invertible $M_j$, Proposition~\ref{prop:fj} ensures that the updated
set is again a parcel. Moreover, the update respects the information
order: a more informative initial parcel gives a more informative updated
parcel for the same observed outcome.
\begin{proposition}[Monotonicity of measurement update]
\label{prop:update-monotone}
Let $M_j$ be invertible. If $O\sqsubseteq U$, then
\[
f_j(O)\sqsubseteq f_j(U).
\]
Thus selective measurement update preserves the information order.
\end{proposition}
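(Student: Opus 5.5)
The claim reduces to the elementary fact that taking images under a fixed map preserves inclusions. First I unwind the information order: $O\sqsubseteq U$ means, by definition, $O\supseteq U$, and the conclusion $f_j(O)\sqsubseteq f_j(U)$ means $f_j(O)\supseteq f_j(U)$.

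Since $f_j$ is only defined on the domain $\{\omega:\omega(E_j)>0\}$, I read $f_j(O)$ as in Definition~\ref{def:update}, namely as $f_j\bigl(O\cap\{\omega:\omega(E_j)>0\}\bigr)$. When $M_j$ is invertible, $E_j=M_j^*M_j$ is invertible and positive, so $E_j\ge\varepsilon I$ for some $\varepsilon>0$. Hence $\omega(E_j)\geq\varepsilon>0$ for every state, and the domain is all of $\mathcal S(\A)$. This observation is not strictly needed, but it means the intersection with the domain is harmless.

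For the inclusion itself, take $\omega'\in f_j(U)$. Then $\omega'=f_j(\omega)$ for some $\omega\in U$ with $\omega(E_j)>0$. Since $U\subseteq O$, this $\omega$ also lies in $O$ and in the domain, so $\omega'\in f_j(O)$. Therefore $f_j(U)\subseteq f_j(O)$, which is exactly $f_j(O)\sqsubseteq f_j(U)$.

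Finally, for the statement to be a claim about parcels, I invoke Proposition~\ref{prop:fj}. It guarantees that both $f_j(O)$ and $f_j(U)$ are parcels, so the relation $\sqsubseteq$ compares two genuine elements of the parcel order.

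There is no real obstacle here. The only points requiring care are the direction of reverse inclusion and making sure the images are parcels, and Proposition~\ref{prop:fj} settles the latter.
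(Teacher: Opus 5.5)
Your proof is correct and matches the paper's argument: unwind $\sqsubseteq$ as reverse inclusion, use that images under $f_j$ preserve inclusion, and invoke Proposition~\ref{prop:fj} to see that both images are parcels. Your added remark is also correct: for invertible $M_j$ one has $E_j\geq\varepsilon I$, so the domain restriction in Definition~\ref{def:update} is vacuous. The paper leaves this implicit.
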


\begin{proof}
Since $O\sqsubseteq U$ means $O\supseteq U$, we have
$f_j(O)\supseteq f_j(U)$. By Proposition~\ref{prop:fj} both images are
parcels, so this is precisely $f_j(O)\sqsubseteq f_j(U)$.
\end{proof}

Thus IAQFT lifts the usual state update to finite-precision regions of
state space. Exact state dynamics is retained and recovered when parcel
refinement approaches a point state.
\section{Reduction Theory}\label{sec:reduction}

A central question in IAQFT is whether increasingly precise parcel
descriptions recover the exact states of AQFT. The first result treats a
prescribed limiting state; the second shows that shrinking observational
variation alone determines a unique state.

\begin{theorem}[Reduction to a prescribed state]
\label{thm:prescribedstate}
Let $\A$ be a unital $C^*$-algebra, $s\in\mathcal S(\A)$, and
$\{O_\alpha\}_{\alpha\in I}$ a decreasing net of parcels containing $s$.
Let $\A_0\subseteq\A$ be a norm-dense $*$-subalgebra. If, for every
$A\in\A_0$,
\[
\sup_{\omega\in O_\alpha}|\omega(A)-s(A)|\longrightarrow0,
\]
then $\bigcap_\alpha O_\alpha=\{s\}$.
\end{theorem}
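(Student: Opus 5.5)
Since every $O_\alpha$ contains $s$, the inclusion $\{s\}\subseteq\bigcap_\alpha O_\alpha$ is immediate, so the whole content of the theorem is the reverse inclusion. I would take an arbitrary $\omega\in\bigcap_\alpha O_\alpha$ and prove $\omega(A)=s(A)$ for every $A\in\A$. This forces $\omega=s$, since states are linear functionals on $\A$.

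The first step uses the hypothesis on the dense subalgebra. Fix $A\in\A_0$. Because $\omega\in O_\alpha$ for every $\alpha$, we have
\[
|\omega(A)-s(A)|\le\sup_{\omega'\in O_\alpha}|\omega'(A)-s(A)|
\]
for every $\alpha$. The right-hand side tends to $0$ along the directed set $I$, so $|\omega(A)-s(A)|$, a fixed number independent of $\alpha$, must equal $0$. Hence $\omega$ and $s$ agree on $\A_0$. The fact that the net is decreasing is not needed here; only the pointwise bound along the net matters.

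The second step passes from $\A_0$ to all of $\A$ by continuity. States on a unital $C^*$-algebra are bounded with norm $\omega(I)=1$, so $\omega-s$ is a bounded linear functional of norm at most $2$. Given $A\in\A$ and $\varepsilon>0$, choose $A_0\in\A_0$ with $\|A-A_0\|<\varepsilon$. Then
\[
|\omega(A)-s(A)|\le|\omega(A-A_0)|+|s(A-A_0)|<2\varepsilon .
\]
Since $\varepsilon$ is arbitrary, $\omega(A)=s(A)$, and therefore $\omega=s$.

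I do not expect a genuine obstacle. The only points to handle carefully are that the limit is taken along a net rather than a sequence, and that $\A_0$ need not be closed under anything beyond its algebraic structure. Neither causes trouble: the first step only uses that a nonnegative constant bounded by a net tending to $0$ must vanish, and the second uses only norm density of $\A_0$ and the uniform bound $\|\omega\|=1$ on states. I will also remark that openness and convexity of the $O_\alpha$ play no role.
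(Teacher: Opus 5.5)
Your proposal is correct and follows essentially the same route as the paper's proof: you bound $|\omega(A)-s(A)|$ by $\sup_{\eta\in O_\alpha}|\eta(A)-s(A)|$ for each $\alpha$ to get agreement on $\A_0$, then extend to all of $\A$ using norm density and norm continuity of states. Your side remarks are also accurate: the decreasing property, openness and convexity of the $O_\alpha$ are not used in this argument.
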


\begin{proof}
Clearly $s\in\bigcap_\alpha O_\alpha$. If
$\omega\in\bigcap_\alpha O_\alpha$, then for every $A\in\A_0$ and every
$\alpha$,
\[
|\omega(A)-s(A)|
\leq
\sup_{\eta\in O_\alpha}|\eta(A)-s(A)|.
\]
Hence $\omega(A)=s(A)$ on $\A_0$. Norm density of $\A_0$ and norm
continuity of states imply $\omega=s$ on $\A$.
\end{proof}

The preceding result assumes the limiting state in advance. The parcels
themselves determine the limit under an intrinsic shrinking condition.

\begin{theorem}[Reduction by parcel refinement]
\label{thm:reduction}
Let $\{O_\alpha\}_{\alpha\in I}$ be a decreasing net of parcels in
$\mathcal S(\A)$ and set $C_\alpha=\overline{O_\alpha}$. Let
$\A_0\subseteq\A$ be a norm-dense $*$-subalgebra and suppose that, for
every $A\in\A_0$,
\[
\operatorname{osc}_{O_\alpha}(A)
:=
\sup_{\omega,\eta\in O_\alpha}|\omega(A)-\eta(A)|
\longrightarrow0.
\]
Then $\bigcap_\alpha C_\alpha$ consists of exactly one state.
\end{theorem}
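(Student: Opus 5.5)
Here $C_\alpha$ denotes the weak$^*$ closure of $O_\alpha$ inside $\mathcal S(\A)$. The plan has two halves: nonemptiness from compactness, and uniqueness by passing the oscillation bound to closures and then using density as in Theorem~\ref{thm:prescribedstate}.

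\emph{Nonemptiness.} Each $C_\alpha$ is a non-empty weak$^*$ closed subset of the compact Hausdorff space $\mathcal S(\A)$; nonemptiness holds because parcels are non-empty. Since the net $\{O_\alpha\}$ is decreasing over a directed index set, so is $\{C_\alpha\}$. Hence any finite family $C_{\alpha_1},\dots,C_{\alpha_k}$ contains $C_\beta$ for some $\beta\geq\alpha_1,\dots,\alpha_k$, and so has non-empty intersection. The finite intersection property then gives $\bigcap_\alpha C_\alpha\neq\emptyset$.

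\emph{Uniqueness.} First I check that oscillation does not increase under weak$^*$ closure: $\operatorname{osc}_{C_\alpha}(A)=\operatorname{osc}_{O_\alpha}(A)$. Fix $\omega,\eta\in C_\alpha$ and take nets $\omega_\gamma\to\omega$ and $\eta_\delta\to\eta$ in $O_\alpha$. Evaluation at $A$ is weak$^*$ continuous, so
\[
|\omega(A)-\eta(A)|=\lim|\omega_\gamma(A)-\eta_\delta(A)|\leq\operatorname{osc}_{O_\alpha}(A).
\]
Now let $\omega,\eta\in\bigcap_\alpha C_\alpha$. For each $A\in\A_0$ and every $\alpha$ we get $|\omega(A)-\eta(A)|\leq\operatorname{osc}_{O_\alpha}(A)$. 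The right side tends to $0$, so $\omega(A)=\eta(A)$ on $\A_0$.

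To extend equality from $\A_0$ to $\A$: states have norm one, so they are norm-continuous, and $\A_0$ is norm-dense. Therefore $\omega=\eta$ on all of $\A$, and the intersection is a single state.

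The only subtle point is the closure step. The hypothesis controls oscillation on $O_\alpha$, whereas the limit point lives in the closures, and $\bigcap O_\alpha$ itself may be empty. The weak$^*$ continuity argument above handles this. Note also that the finite intersection property needs the net to be decreasing along a directed set, which is part of the hypotheses.
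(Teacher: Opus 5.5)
Your proposal is correct and follows essentially the same route as the paper's proof: nonemptiness from weak$^*$ compactness and the finite intersection property of the decreasing closures $C_\alpha$, then uniqueness via $\operatorname{osc}_{C_\alpha}(A)=\operatorname{osc}_{O_\alpha}(A)$ together with norm density of $\A_0$. Your explicit verification of the closure step simply spells out what the paper states in one line. The double limit there is taken along the product of the two directed sets, or one variable at a time.
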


\begin{proof}
Each $C_\alpha$ is non-empty and weak$^*$ compact, and the decreasing
family $\{C_\alpha\}$ has the finite intersection property. Weak$^*$
compactness of $\mathcal S(\A)$ therefore gives
$C_\infty:=\bigcap_\alpha C_\alpha\neq\varnothing$.

If $\omega,\eta\in C_\infty$ and $A\in\A_0$, weak$^*$ continuity of
$\rho\mapsto\rho(A)$ gives
\[
|\omega(A)-\eta(A)|
\leq
\operatorname{osc}_{C_\alpha}(A)
=
\operatorname{osc}_{O_\alpha}(A)
\longrightarrow0.
\]
Thus $\omega=\eta$ on $\A_0$, and norm density and continuity imply
$\omega=\eta$ on $\A$. Hence $C_\infty$ is a singleton.
\end{proof}

\begin{remark}
Closures are essential in Theorem~\ref{thm:reduction}: decreasing
non-empty open sets can have empty intersection, as
$O_n=(0,1/n)\subset\mathbb R$ shows, while
$\bigcap_n\overline{O_n}=\{0\}$. Thus increasingly precise parcels
determine the limiting state through their closures even when no state
belongs to every parcel.
\end{remark}
\section{Finite-Dimensional IAQFT and Information Gain}
\label{sec:finite}

Let $\dim\mathcal H=n<\infty$ and
$\mathcal A=B(\mathcal H)\cong M_n(\mathbb C)$. The canonical map
\[
\Theta:\mathcal D(\mathcal H)\longrightarrow\mathcal S(\mathcal A),
\qquad
\Theta(\rho)(A)=\operatorname{Tr}(\rho A),
\]
is an affine bijection. Since in finite dimensions the Euclidean and weak
topologies on $\mathcal D(\mathcal H)$ correspond to the weak$^*$ topology
on $\mathcal S(\mathcal A)$, $\Theta$ is an affine homeomorphism. Moreover,
it intertwines the L\"uders updates: if
\[
f_j(\rho)
=
\frac{M_j\rho M_j^*}{\operatorname{Tr}(\rho E_j)},
\]
then
\[
\Theta(f_j(\rho))(A)
=
\frac{\Theta(\rho)(M_j^*AM_j)}{\Theta(\rho)(E_j)}.
\]
Thus finite-dimensional IAQFT and IQM~\cite{Edalat2026} are canonically
equivalent at the level of states, parcels, their information order and
measurement updates. The corresponding finite-dimensional IQM results
therefore transfer directly to IAQFT and will not be restated here.

\begin{proposition}[Operational and Hilbert--Schmidt volume]
\label{prop:operational-HS}
Let $d=n^2-1$ and let $H_1,\ldots,H_d$ be a Hilbert--Schmidt
orthonormal basis of the real vector space of traceless self-adjoint
operators on $\mathcal H$. Set
\[
V=\operatorname{span}_{\mathbb R}\{H_1,\ldots,H_d\}.
\]
Then
\[
\Phi_V(\Theta(\rho))
=
\bigl(
\operatorname{Tr}(\rho H_1),\ldots,
\operatorname{Tr}(\rho H_d)
\bigr)
\]
is the restriction to $\mathcal S(\mathcal A)$ of an affine isometry
from the trace-one self-adjoint affine space, equipped with the
Hilbert--Schmidt metric, onto $\mathbb R^d$. Consequently, for every
parcel $O\subseteq\mathcal S(\mathcal A)$,
\[
\operatorname{Vol}_V(O)
=
\operatorname{Vol}_{\mathrm{HS}}\bigl(\Theta^{-1}(O)\bigr).
\]
For any other basis of the same observational space, the volumes are
multiplied by a fixed positive coordinate factor; hence volume ratios and
the corresponding relative operational information gains are unchanged.
\end{proposition}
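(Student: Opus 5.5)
The plan is to make the geometry explicit by working in the real vector space $\mathcal H_{\mathrm{sa}}$ of self-adjoint operators with the Hilbert--Schmidt inner product $\langle X,Y\rangle=\Tr(XY)$. This space splits orthogonally as $\mathbb R I\oplus V_0$, where $V_0$ is the $d=n^2-1$ dimensional subspace of traceless operators and $H_1,\ldots,H_d$ is an orthonormal basis of $V_0$. The trace-one affine space is then $T=\tfrac1n I+V_0$. For $\rho\in T$, write $\rho=\tfrac1nI+X$ with $X\in V_0$. Since each $H_i$ is traceless, $\Tr(\rho H_i)=\tfrac1n\Tr(H_i)+\Tr(XH_i)=\langle X,H_i\rangle$. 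So the formula for $\Phi_V\circ\Theta$ extends to the map $\Psi:T\to\mathbb R^d$, $\rho\mapsto(\langle \rho-\tfrac1nI,H_i\rangle)_i$. This map is a translation followed by the coordinate map of an orthonormal basis, hence an affine isometry onto $\mathbb R^d$, and $\Phi_V\circ\Theta=\Psi|_{\mathcal D(\mathcal H)}$.

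For the volume identity, I take $\operatorname{Vol}_{\mathrm{HS}}$ to be the $d$-dimensional Lebesgue (Hausdorff) measure on $T$ induced by the Hilbert--Schmidt metric; I state this interpretation explicitly. An affine isometry onto $\mathbb R^d$ carries this measure to Lebesgue measure. Therefore
\[
\operatorname{Vol}_V(O)=\operatorname{Vol}_{\mathbb R^d}\bigl(\Psi(\Theta^{-1}(O))\bigr)=\operatorname{Vol}_{\mathrm{HS}}(\Theta^{-1}(O)).
\]

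I also need to check the full-dimensionality hypothesis in Definition~\ref{def:operational-volume}. The set $\Theta^{-1}(O)$ is a non-empty open subset of $\mathcal D(\mathcal H)$, by the homeomorphism property of $\Theta$. The state space $\mathcal D(\mathcal H)$ has non-empty interior in $T$, since it contains a ball around $I/n$. Any non-empty relatively open subset of a convex body with non-empty interior meets that interior in a non-empty open set: take any point of the set, move it slightly toward $I/n$ along the segment, and the result lies in the interior and still in the set. Hence the shadow has positive Lebesgue measure.

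For a change of basis, let $H'_j=\sum_i c_{ij}H_i$ with $C=(c_{ij})$ invertible. Then $\Phi'_V=C^{\mathsf T}\Phi_V$ by linearity of $\rho\mapsto\Tr(\rho H)$. This gives $\operatorname{Vol}'_V(O)=|\det C|\operatorname{Vol}_V(O)$ for every parcel $O$. The factor cancels in the ratios, so $I_V(U:O)$ is unchanged.

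The main obstacle is not computational. It is pinning down the meaning of $\operatorname{Vol}_{\mathrm{HS}}$ and verifying the full-dimensionality hypothesis, and both are handled above. Everything else is linear algebra.
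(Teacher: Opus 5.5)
Your proof is correct and takes essentially the same route as the paper. Both arguments use the fact that trace-one operators differ by traceless ones, so the coordinates $\Tr(\rho H_k)$ are orthonormal-basis coordinates of $\rho-\tfrac1nI$ (the paper phrases this through $\rho-\sigma$), and a change of basis rescales every volume by the same factor $|\det C|$; your explicit checks of surjectivity and of the full-dimensionality hypothesis, via the segment toward $I/n$, supply details the paper leaves implicit.
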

\begin{proof}
If $\rho$ and $\sigma$ have trace one, then $\rho-\sigma$ is traceless,
and hence
\[
\rho-\sigma
=
\sum_{k=1}^d
\operatorname{Tr}\bigl((\rho-\sigma)H_k\bigr)H_k.
\]
Hilbert--Schmidt orthonormality therefore gives
\[
\|\rho-\sigma\|_{\mathrm{HS}}^2
=
\sum_{k=1}^d
\left|
\operatorname{Tr}\bigl((\rho-\sigma)H_k\bigr)
\right|^2
=
\|\Phi_V(\Theta(\rho))-\Phi_V(\Theta(\sigma))\|_{\mathbb R^d}^2.
\]
Thus the observational-coordinate map is an affine isometry and
preserves the corresponding Lebesgue volume. An invertible change of
observational coordinates multiplies all such volumes by the same
absolute determinant, so their ratios are unchanged.
\end{proof}

The preceding proposition identifies the operational volume of
Section~\ref{sec:operational-volume}, for an informationally complete
finite-dimensional observational space, with Hilbert--Schmidt volume.
Thus the volume-contraction results below are also quantitative
operational-information results. We now strengthen the corresponding
volume results of IQM~\cite{Edalat2026} by deriving the exact Jacobian for arbitrary
invertible L\"uders updates and a contraction theorem for arbitrary
finite-dimensional Hilbert spaces.

For notational convenience, we transport Hilbert--Schmidt volume from
$\mathcal D(\mathcal H)$ to $\mathcal S(\mathcal A)$ through $\Theta$ and
write
\[
\operatorname{Vol}(O)
:=
\operatorname{Vol}_{\mathrm{HS}}\bigl(\Theta^{-1}(O)\bigr)
=
\operatorname{Vol}_V(O)
\]
for the Hilbert--Schmidt orthonormal observational coordinates fixed
above.

\begin{definition}[$j$-positive parcel]
\label{def:jpositive}
For a POVM effect $E_j$, a parcel $O\subseteq\mathcal S(\mathcal A)$ is
\emph{$j$-positive} if
\[
p_{\min}:=\inf_{\omega\in O}\omega(E_j)>0.
\]
\end{definition}

The following elementary Jacobian results are proved in the Appendix,
Section~\ref{sec:proofs-vo}.

\begin{lemma}[Projective-linear Jacobian]
\label{lem:proj}
Let $\mathcal V$ be a real vector space of dimension $d+1$,
$\ell:\mathcal V\to\mathbb R$ a nonzero linear functional,
$W=\{x:\ell(x)=1\}$, and $L:\mathcal V\to\mathcal V$ invertible. For
$f(x)=Lx/\ell(Lx)$, wherever $\ell(Lx)>0$,
\[
|\det Df(x)|
=
\frac{|\det L|}{\ell(Lx)^{d+1}}.
\]
\end{lemma}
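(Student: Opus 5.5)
The plan is to compute the derivative of $f$ directly and then read off its determinant in an adapted basis. Here $f$ is regarded as a map from the open subset $\{x\in W:\ell(Lx)>0\}$ of the $d$-dimensional affine hyperplane $W$ into $W$. Note that $\ell(f(x))=\ell(Lx)/\ell(Lx)=1$, so $f$ indeed lands in $W$. The Jacobian $|\det Df(x)|$ is taken with respect to volume on $W$, transported from a fixed identification of the tangent space $W_0=\ker\ell$ with $\mathbb R^d$. Equivalently, one can fix a translation-invariant volume on $W$ normalized compatibly with a volume form on $\mathcal V$. The identity should not depend on this normalization, provided the same one is used on source and target, since both sides of the formula are then computed consistently.

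First, I will differentiate. For $x\in W$ and $h\in W_0$, the quotient rule gives
\[
Df(x)h=\frac{Lh}{\ell(Lx)}-\frac{\ell(Lh)}{\ell(Lx)^2}\,Lx
=\frac{1}{\ell(Lx)}\bigl(Lh-\ell(Lh)\,f(x)\bigr).
\]
This vector lies in $W_0$, as expected. Next, I will extend $Df(x)$ to a linear map $T$ on all of $\mathcal V$, choosing the extension so that its determinant is easy to relate to $\det L$.

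Fix $x$, set $c=\ell(Lx)>0$ and $y=f(x)\in W$, and decompose $\mathcal V=W_0\oplus\mathbb R x$ and $\mathcal V=W_0\oplus\mathbb R y$. Define $P:\mathcal V\to\mathcal V$ as the projection onto $W_0$ along $y$, namely $P(v)=v-\ell(v)y$. Then $Df(x)=c^{-1}PL|_{W_0}$. I will compute $\det L$ in these splittings, using the source basis (basis of $W_0$, $x$) and the target basis (same basis of $W_0$, $y$). In block form $L$ is lower/upper triangular: $L$ sends $x$ to $Lx=c\,y$, so the column of $x$ is $(0,c)$. On $W_0$ its $W_0$-component is $PL|_{W_0}$. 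Hence
\[
\det L=c\cdot\det\bigl(PL|_{W_0}\bigr)\cdot\frac{\mathrm{vol}(b,y)}{\mathrm{vol}(b,x)},
\]
where $b$ denotes the $W_0$-basis. Since $x-y\in W_0$ (both lie in $W$), the wedge $b\wedge y$ equals $b\wedge x$, so this ratio is $1$. Therefore $\det(PL|_{W_0})=\det L/c$, and
\[
\det Df(x)=c^{-d}\det(PL|_{W_0})=\frac{\det L}{c^{d+1}}.
\]
Taking absolute values gives the formula.

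The main obstacle is bookkeeping. One must make the volume normalization on $W$ versus $\mathcal V$ explicit, and check that the change-of-basis factor cancels because $x$ and $y$ differ by an element of $\ker\ell$. This cancellation is exactly what makes the result basis-independent. Everything else is routine linear algebra.
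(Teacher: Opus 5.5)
Your proposal is correct and matches the paper's proof essentially step for step. Both use the same quotient-rule derivative $Df(x)=c^{-1}P_yL|_{\ker\ell}$ with $P_y(v)=v-\ell(v)y$, and the same key cancellation $x-y\in\ker\ell$. The only difference is presentation: you compare determinants through the block-triangular matrix of $L$ in the mixed bases $(b,x)$ and $(b,y)$, whereas the paper writes the identical computation with wedge products $Tu_1\wedge\cdots\wedge Tu_d\wedge y$.
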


Let $\operatorname{Herm}(\mathcal H)$ denote the real vector space of
Hermitian operators on $\mathcal H$, of real dimension $n^2$.

\begin{lemma}[Determinant of the linear part]
\label{lem:detL}
For $L:\operatorname{Herm}(\mathcal H)\to
\operatorname{Herm}(\mathcal H)$ defined by $L(\rho)=M_j\rho M_j^*$,
where $E_j=M_j^*M_j$,
\[
\det_{\mathbb R}L=(\det E_j)^n.
\]
\end{lemma}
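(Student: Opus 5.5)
The determinant will be computed by factoring $L$ and using multiplicativity. I would work with $\det_{\mathbb C}$ of the complexification and argue that it equals $\det_{\mathbb R}$ on $\operatorname{Herm}(\mathcal H)$.

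\textbf{Step 1: pass to the complexification.} The map $L(\rho)=M_j\rho M_j^*$ is real-linear on $\operatorname{Herm}(\mathcal H)$ and preserves hermiticity. It extends to the complex-linear map $\tilde L(X)=M_jXM_j^*$ on $M_n(\mathbb C)=\operatorname{Herm}(\mathcal H)\otimes_{\mathbb R}\mathbb C$. Choosing a real basis of $\operatorname{Herm}(\mathcal H)$, which is also a complex basis of $M_n(\mathbb C)$, the matrices of $L$ and $\tilde L$ coincide. Hence $\det_{\mathbb R}L=\det_{\mathbb C}\tilde L$.

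\textbf{Step 2: compute the determinant of $\tilde L$.} Write $\tilde L=\lambda_{M_j}\circ\rho_{M_j^*}$, where $\lambda_M(X)=MX$ and $\rho_N(X)=XN$. On $M_n(\mathbb C)$, the map $\lambda_M$ acts as $M$ on each of the $n$ columns, so $\det\lambda_M=(\det M)^n$. Likewise $\rho_N$ acts on each of the $n$ rows by $N^{T}$, so $\det\rho_N=(\det N)^n$. Equivalently, $\tilde L\cong M_j\otimes\overline{M_j}$ under vectorisation, and $\det(A\otimes B)=(\det A)^n(\det B)^n$.

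\textbf{Step 3: conclude.} Combining the two factors gives
\[
\det_{\mathbb R}L=(\det M_j)^n(\overline{\det M_j})^n=|\det M_j|^{2n}.
\]
Since $\det E_j=\det(M_j^*M_j)=|\det M_j|^2$, this equals $(\det E_j)^n$.

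The only step needing care is Step 1: one must check that real and complex determinants agree under complexification. That holds because the real basis of Hermitian matrices also serves as the complex basis, so no squaring or absolute-value factor enters. Everything else is routine linear algebra. The identity also holds when $M_j$ is not invertible, since both sides are then zero.
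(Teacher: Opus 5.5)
Your proof is correct, but it takes a different route from the paper's.

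\textbf{The paper's route.} The paper works entirely inside the real space $\operatorname{Herm}(\mathcal H)$.
\begin{itemize}
\item It writes $M_j=UE_j^{1/2}$ by polar decomposition and factors $L=\Ad_U\circ T_{E_j^{1/2}}$.
\item It shows $\det_{\mathbb R}\Ad_U=1$: the map $\Ad_U$ is Hilbert--Schmidt orthogonal, so its determinant is $\pm1$, and connectedness of $U(n)$ fixes the sign.
\item It diagonalises $E_j^{1/2}=VDV^*$ and computes $\det T_D$ in the explicit orthonormal basis $F_{kk},F^{\pm}_{kl}$. There the eigenvalues are $\lambda_k$ and $\sqrt{\lambda_k\lambda_l}$ (each of the latter twice), and counting exponents gives $(\det E_j)^n$.
\end{itemize}

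\textbf{Your route.} You complexify, and this preserves the determinant. A real basis of Hermitian matrices is a complex basis of $M_n(\mathbb C)$, because each $X$ decomposes uniquely as $A+iB$ with $A,B$ Hermitian. You then use multiplicativity for left and right multiplication on $M_n(\mathbb C)$.

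\textbf{What each approach buys.} Your argument is shorter and purely algebraic. It needs no polar decomposition, no spectral theorem and no connectedness argument. The sign question disappears, since your formula gives $\det\Ad_U=|\det U|^{2n}=1$ directly. It also produces the manifestly nonnegative form $|\det M_j|^{2n}$, and it extends verbatim to any map $X\mapsto AXA^*$.

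The paper's argument gives more than the determinant. Because $\Ad_U$ is Hilbert--Schmidt orthogonal and $T_{E_j^{1/2}}$ is Hilbert--Schmidt positive, the factorisation is in effect the polar decomposition of $L$ for the Hilbert--Schmidt metric. It therefore exhibits the singular values $\sqrt{\lambda_k\lambda_l}$ of $L$. That is the natural geometric information behind the volume-distortion results in which the lemma is used.
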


\begin{proposition}[Exact Jacobian of the L\"uders map]
\label{prop:jacobian}
If $M_j$ is invertible and $\rho\in\mathcal D(\mathcal H)$, then
\[
\left|\det Df_j(\rho)\right|
=
\frac{(\det E_j)^n}
     {\operatorname{Tr}(E_j\rho)^{n^2}}.
\]
\end{proposition}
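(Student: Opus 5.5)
The plan is to combine Lemma~\ref{lem:proj} and Lemma~\ref{lem:detL}. Take $\mathcal V=\operatorname{Herm}(\mathcal H)$, which has real dimension $n^2$, so in the notation of Lemma~\ref{lem:proj} we have $d+1=n^2$ and $d=n^2-1$. Take $\ell(X)=\operatorname{Tr}(X)$, so that $W=\{X:\operatorname{Tr}X=1\}$ is the trace-one affine hyperplane containing $\mathcal D(\mathcal H)$. Let $L(\rho)=M_j\rho M_j^*$. Since $M_j$ is invertible, $L$ is invertible, with inverse $X\mapsto M_j^{-1}X(M_j^*)^{-1}$.

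Next observe that the L\"uders map is exactly of the projective-linear form in the lemma. By cyclicity of the trace,
\[
\ell(L\rho)=\operatorname{Tr}(M_j\rho M_j^*)=\operatorname{Tr}(E_j\rho),
\]
so
\[
f_j(\rho)=\frac{L\rho}{\ell(L\rho)}.
\]
For $\rho\in\mathcal D(\mathcal H)$ we have $\operatorname{Tr}(E_j\rho)>0$. Indeed, $E_j=M_j^*M_j$ is positive definite when $M_j$ is invertible, and $\rho\ge0$ has trace one. The hypothesis $\ell(Lx)>0$ of Lemma~\ref{lem:proj} therefore holds on a neighbourhood of $\mathcal D(\mathcal H)$ in $W$.

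Applying Lemma~\ref{lem:proj} gives
\[
|\det Df_j(\rho)|=\frac{|\det L|}{\operatorname{Tr}(E_j\rho)^{n^2}}.
\]
Lemma~\ref{lem:detL} gives $\det_{\mathbb R}L=(\det E_j)^n$, which is positive since $E_j>0$. Substituting this yields the stated formula.

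The main point needing care is how the Jacobian is interpreted. The derivative $Df_j(\rho)$ is a map between tangent spaces of $W$, namely the traceless Hermitian operators, and $f_j$ is regarded as a map $W\to W$. Its determinant is computed in Hilbert--Schmidt orthonormal coordinates, consistent with Proposition~\ref{prop:operational-HS}. I would check that Lemma~\ref{lem:proj} is stated in exactly this sense, with $Df(x)$ restricted to $\ker\ell$ and volumes on $W$ induced from a fixed inner product on $\mathcal V$. The value of $|\det Df(x)|$ is independent of the choice of coordinates on the $d$-dimensional tangent space, because $f$ maps $W$ to itself and the same translate of $\ker\ell$ serves as both domain and codomain. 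The exponent $n^2=d+1$ then comes directly from the lemma.
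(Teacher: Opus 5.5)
Your proposal is correct and is essentially the paper's proof: apply Lemma~\ref{lem:proj} with $\mathcal V=\operatorname{Herm}(\mathcal H)$, $\ell=\operatorname{Tr}$ and $L(\rho)=M_j\rho M_j^*$, so that $d+1=n^2$, then substitute $\det_{\mathbb R}L=(\det E_j)^n$ from Lemma~\ref{lem:detL} and use cyclicity of the trace. Your additional checks are sound and make explicit what the paper leaves implicit: the positivity of $\operatorname{Tr}(E_j\rho)$, and the fact that $\det Df_j(\rho)$ is basis-independent because it is an endomorphism of the traceless Hermitian operators.
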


This immediately yields an exact volume formula for arbitrary invertible
POVM effects.

\begin{theorem}[Exact volume formula and contraction criterion]
\label{thm:exactvolume}
Let $O\subseteq\mathcal S(\mathcal A)$ be a $j$-positive parcel and
suppose $M_j$ is invertible. Then
\[
\operatorname{Vol}(f_j(O))
=
\int_{\Theta^{-1}(O)}
\frac{(\det E_j)^n}
     {\operatorname{Tr}(E_j\rho)^{n^2}}
\,d\mu(\rho)
\leq
\frac{(\det E_j)^n}{p_{\min}^{\,n^2}}
\operatorname{Vol}(O).
\]
In particular,
$\operatorname{Vol}(f_j(O))<\operatorname{Vol}(O)$ whenever
$(\det E_j)^{1/n}<p_{\min}$.
\end{theorem}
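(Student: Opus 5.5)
The argument is a change of variables through the Lüders map, carried out in the Hilbert--Schmidt coordinates of Proposition~\ref{prop:operational-HS}, with the exact Jacobian supplied by Proposition~\ref{prop:jacobian}.

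\textbf{Transport to the density-matrix picture.} Via $\Theta$ and the identification $\operatorname{Vol}(O)=\operatorname{Vol}_{\mathrm{HS}}(\Theta^{-1}(O))$, it suffices to work with $D:=\Theta^{-1}(O)\subseteq\mathcal D(\mathcal H)$. Because $\Theta$ intertwines the updates, $\Theta^{-1}(f_j(O))=f_j(D)$, where now $f_j(\rho)=M_j\rho M_j^*/\operatorname{Tr}(\rho E_j)$.

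Since $O$ is $j$-positive, $\operatorname{Tr}(E_j\rho)\ge p_{\min}>0$ on $D$. Hence $f_j$ is defined on all of $D$, and $D$ is open in the $d=n^2-1$ dimensional affine space $W=\{\rho:\operatorname{Tr}\rho=1\}\cap\operatorname{Herm}(\mathcal H)$; openness holds because parcels are open in $\mathcal S(\mathcal A)$.

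\textbf{Injectivity and smoothness.} The map $f_j$ is smooth on $D$. It is injective there: if $f_j(\rho)=f_j(\sigma)$, then $M_j\rho M_j^*=c\,M_j\sigma M_j^*$ for some $c>0$. Invertibility of $M_j$ gives $\rho=c\sigma$, and taking traces forces $c=1$.

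\textbf{Change of variables.} Lebesgue measure on $W$ is the one induced by the HS isometry of Proposition~\ref{prop:operational-HS}. The change-of-variables formula for an injective $C^1$ map on an open set then gives
\[
\operatorname{Vol}(f_j(O))=\int_D|\det Df_j(\rho)|\,d\mu(\rho).
\]
Substituting the exact Jacobian $(\det E_j)^n/\operatorname{Tr}(E_j\rho)^{n^2}$ from Proposition~\ref{prop:jacobian} yields the stated integral.

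\textbf{The bound.} On $D$ the integrand is at most $(\det E_j)^n/p_{\min}^{n^2}$, so the integral is bounded by this constant times $\mu(D)=\operatorname{Vol}(O)$. Here $\operatorname{Vol}(O)$ is finite because $\mathcal D(\mathcal H)$ is compact.

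\textbf{Contraction criterion.} If $(\det E_j)^{1/n}<p_{\min}$, then
\[
(\det E_j)^n=\bigl((\det E_j)^{1/n}\bigr)^{n^2}<p_{\min}^{n^2},
\]
so the constant is $<1$. Since $O$ is non-empty and open, $\operatorname{Vol}(O)>0$, and strict inequality follows.

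\textbf{Main obstacle.} The delicate point is making sure Proposition~\ref{prop:jacobian} is the Jacobian of the restriction to $W$ measured in HS-orthonormal coordinates. That is, Lemma~\ref{lem:proj} has to be applied with $\mathcal V=\operatorname{Herm}(\mathcal H)$ and $\ell=\operatorname{Tr}$, so that $d+1=n^2$ and $\ell(L\rho)=\operatorname{Tr}(E_j\rho)$, and the determinant on $W$ has to be computed in the orthonormal coordinates fixed above. Both are already built into Proposition~\ref{prop:jacobian}, so this reduces to a consistency check.
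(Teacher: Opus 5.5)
Your proposal is correct and follows the paper's proof. You transport to $\mathcal D(\mathcal H)$ via $\Theta$, apply change of variables to the injective smooth L\"uders map with the Jacobian of Proposition~\ref{prop:jacobian}, and bound the integrand using $\operatorname{Tr}(E_j\rho)\ge p_{\min}$; your explicit injectivity argument and your remark that strictness needs $\operatorname{Vol}(O)>0$ usefully fill in steps the paper leaves implicit.

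One small correction: $\Theta^{-1}(O)$ is open in $\mathcal D(\mathcal H)$, but not in general in $W$, since a parcel may contain rank-deficient states on the boundary of $\mathcal D(\mathcal H)$. You should therefore apply change of variables to the Borel set $\Theta^{-1}(O)$ inside the open set $\{\rho\in W:\operatorname{Tr}(E_j\rho)>0\}\supseteq\mathcal D(\mathcal H)$, on which $f_j$ is a smooth injection. Likewise, obtain $\operatorname{Vol}(O)>0$ from the fact that a non-empty relatively open subset of the convex body $\mathcal D(\mathcal H)$ has non-empty interior in $W$, not from openness in $W$.
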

\begin{proof}
Under the identification $\Theta$, the state-space update is represented
on $\mathcal D(\mathcal H)$ by
$\rho\mapsto M_j\rho M_j^*/\operatorname{Tr}(\rho E_j)$.
Since $M_j$ is invertible, this map is a smooth diffeomorphism onto its
image. The change-of-variables formula and
Proposition~\ref{prop:jacobian} give the equality. Since
$\operatorname{Tr}(E_j\rho)\geq p_{\min}$ on $\Theta^{-1}(O)$, the
stated bound follows, and it is strict when
$(\det E_j)^n<p_{\min}^{n^2}$.
\end{proof}

\begin{example}[Qubit volume contraction]
\label{ex:qubit-volume}
Let $\mathcal H=\mathbb C^2$ and consider an invertible POVM effect
\[
E=
\begin{pmatrix}
\lambda_1&0\\
0&\lambda_2
\end{pmatrix},
\qquad
0<\lambda_1,\lambda_2\leq1.
\]
For the corresponding L\"uders update, Proposition~\ref{prop:jacobian}
gives
\[
\left|\det Df(\rho)\right|
=
\frac{(\lambda_1\lambda_2)^2}
     {\operatorname{Tr}(E\rho)^4}.
\]
If a parcel $O\subseteq\mathcal S(M_2(\mathbb C))$ satisfies
\[
p_{\min}
:=
\inf_{\rho\in\Theta^{-1}(O)}
\operatorname{Tr}(E\rho)>0,
\]
Theorem~\ref{thm:exactvolume} yields
\[
\operatorname{Vol}(f(O))
\leq
\frac{(\lambda_1\lambda_2)^2}{p_{\min}^4}
\operatorname{Vol}(O).
\]
Hence the update strictly contracts Hilbert--Schmidt volume whenever
\[
\sqrt{\lambda_1\lambda_2}<p_{\min}.
\]

For example, take $\lambda_1=0.9$, $\lambda_2=0.1$ and a parcel for
which $p_{\min}=0.4$. Then
$\sqrt{\lambda_1\lambda_2}=0.3<0.4$, and
\[
\operatorname{Vol}(f(O))
\leq
\frac{(0.09)^2}{(0.4)^4}\operatorname{Vol}(O)
=
\frac{81}{256}\operatorname{Vol}(O)
\approx0.316\,\operatorname{Vol}(O).
\]
Thus this finite-resolution measurement reduces the parcel volume by
at least about $68\%$.
\end{example}

 The contraction factor separates the measurement-dependent quantity
$(\det E_j)^{1/n}$ from the parcel-dependent quantity $p_{\min}$. Thus
the relative operational information gain produced by a contracting
update depends both on the measurement and on the position of the parcel:
approaching the zero-probability boundary
$\{\rho:\operatorname{Tr}(E_j\rho)=0\}$ increases the normalization in
the L\"uders update and can counteract geometric contraction.
The next result extends the fuzzy-projective contraction results of
IQM, which were obtained for qubit and multi-qubit systems, to arbitrary
finite-dimensional Hilbert spaces.
\begin{theorem}[Volume contraction for fuzzy projective measurements]
\label{thm:volumecontraction}
Let $\dim\mathcal H=n$, let $\Pi$ be a nonzero proper projection, and for
$0<\eta<1$ set
\[
E^{(\eta)}
=
\eta\Pi+\frac{1-\eta}{n}I,
\qquad
M^{(\eta)}=(E^{(\eta)})^{1/2}.
\]
If $O\subseteq\mathcal S(\mathcal A)$ is a parcel satisfying
\[
c:=
\inf_{\rho\in\Theta^{-1}(O)}
\operatorname{Tr}(\rho\Pi)>0,
\]
then there exists $\eta_0<1$ such that
\[
\operatorname{Vol}(f^{(\eta)}(O))
<
\operatorname{Vol}(O)
\qquad(\eta_0<\eta<1).
\]
\end{theorem}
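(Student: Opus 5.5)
The plan is to reduce Theorem~\ref{thm:volumecontraction} directly to the contraction criterion of Theorem~\ref{thm:exactvolume}. We need to check three things for $\eta$ close to $1$: that $M^{(\eta)}$ is invertible, that $O$ is $j$-positive for the effect $E^{(\eta)}$, and that $(\det E^{(\eta)})^{1/n}<p_{\min}(\eta)$. The point is that $\det E^{(\eta)}\to0$ as $\eta\to1$, because $\Pi$ is proper, while $p_{\min}(\eta)$ stays bounded below by a positive multiple of $c$.

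First I would compute the spectrum. Let $r=\operatorname{rank}\Pi$, so $1\le r\le n-1$. Then $E^{(\eta)}$ has eigenvalue $\eta+\frac{1-\eta}{n}$ with multiplicity $r$ and eigenvalue $\frac{1-\eta}{n}$ with multiplicity $n-r\ge1$. For $0<\eta<1$ all eigenvalues lie in $(0,1]$. Hence $E^{(\eta)}$ is an invertible effect, $I-E^{(\eta)}\ge0$ completes the POVM, and $M^{(\eta)}=(E^{(\eta)})^{1/2}$ is invertible. Moreover,
\[
\det E^{(\eta)}=\Bigl(\eta+\tfrac{1-\eta}{n}\Bigr)^{r}\Bigl(\tfrac{1-\eta}{n}\Bigr)^{n-r}\le\Bigl(\tfrac{1-\eta}{n}\Bigr)^{n-r}\longrightarrow0\qquad(\eta\to1).
\]

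Next I would bound $p_{\min}$. For $\rho\in\Theta^{-1}(O)$ we have $\operatorname{Tr}(\rho E^{(\eta)})=\eta\operatorname{Tr}(\rho\Pi)+\frac{1-\eta}{n}\ge\eta c$. So for $\eta\ge\frac12$, $p_{\min}(\eta)\ge c/2>0$, and $O$ is $j$-positive. Now choose $\eta_0\in[\frac12,1)$ such that $\bigl(\frac{1-\eta}{n}\bigr)^{(n-r)/n}<c/2$ for all $\eta>\eta_0$. This is possible because $n-r\ge1$, and the left side is decreasing in $\eta$. Then for $\eta_0<\eta<1$ we get $(\det E^{(\eta)})^{1/n}<c/2\le p_{\min}(\eta)$.

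Theorem~\ref{thm:exactvolume} then gives
\[
\operatorname{Vol}(f^{(\eta)}(O))\le K_\eta\operatorname{Vol}(O),\qquad K_\eta=\frac{(\det E^{(\eta)})^n}{p_{\min}^{n^2}}<1.
\]
The one remaining subtlety, which I expect to be the only real obstacle, is turning this into a strict inequality. That needs $0<\operatorname{Vol}(O)<\infty$. Finiteness holds because $\Theta^{-1}(O)$ lies in the compact set $\mathcal D(\mathcal H)$. Positivity holds because $O$ is non-empty and open in the $(n^2-1)$-dimensional state space, and by Proposition~\ref{prop:operational-HS} its Hilbert--Schmidt volume equals the Lebesgue volume of a non-empty open subset of $\Phi_V(\mathcal S(\mathcal A))$. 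That image has non-empty interior in $\mathbb R^{d}$, since the maximally mixed state is interior. Hence $K_\eta\operatorname{Vol}(O)<\operatorname{Vol}(O)$, which completes the argument.
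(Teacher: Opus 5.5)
Your proof is correct and follows essentially the same route as the paper: the eigenvalue computation showing $\det E^{(\eta)}\to0$ because $\Pi$ is proper, the lower bound $\operatorname{Tr}(\rho E^{(\eta)})\geq\eta c$ (the paper uses $\alpha_\eta c$), and the Jacobian contraction factor $(\det E^{(\eta)})^n/p_{\min}^{n^2}<1$, which you reach through Theorem~\ref{thm:exactvolume} instead of the pointwise Jacobian bound. You also check explicitly that $0<\operatorname{Vol}(O)<\infty$, which is needed to turn the bound into a strict inequality and which the paper leaves implicit. Your argument for this is sound; the one step you elide is that a non-empty relatively open subset of a convex body with non-empty interior has positive Lebesgue measure, which follows from convexity.
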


\begin{proof}
Let $r=\operatorname{rank}\Pi$, so $1\leq r<n$. The eigenvalues of
$E^{(\eta)}$ are
$\alpha_\eta=\eta+(1-\eta)/n$ with multiplicity $r$ and
$\beta_\eta=(1-\eta)/n$ with multiplicity $n-r$. Hence
$\det E^{(\eta)}=\alpha_\eta^r\beta_\eta^{\,n-r}\to0$ as
$\eta\to1^-$, while for $\rho\in\Theta^{-1}(O)$,
$\operatorname{Tr}(E^{(\eta)}\rho)\geq\alpha_\eta c\to c>0$.
Proposition~\ref{prop:jacobian} therefore gives
\[
\sup_{\rho\in\Theta^{-1}(O)}
\left|\det Df^{(\eta)}(\rho)\right|
\leq
\frac{(\det E^{(\eta)})^n}
     {(\alpha_\eta c)^{n^2}}
\longrightarrow0.
\]
Thus the Jacobian is uniformly less than one for all $\eta$ sufficiently
close to $1$, and the change-of-variables formula gives the claimed strict
volume contraction.
\end{proof}

\begin{remark}
If successive selective measurements satisfy
$\operatorname{Vol}(O_{k+1})\leq c_k\operatorname{Vol}(O_k)$ with
$0<c_k\leq1$, then
$\operatorname{Vol}(O_k)\leq
(\prod_{i=0}^{k-1}c_i)\operatorname{Vol}(O_0)$. Hence the volumes tend to
zero whenever $\prod_{k=0}^\infty c_k=0$, a weaker requirement than a
uniform bound $c_k\leq c<1$.
\end{remark}

Thus finite-dimensional IAQFT inherits the parcel and measurement theory
of IQM but strengthens its quantitative geometry: the exact Jacobian and
volume formula apply to arbitrary invertible POVM effects, while the
fuzzy-projective contraction theorem applies to arbitrary
finite-dimensional Hilbert spaces rather than only qubit tensor-product
systems.
\section{Nets and Local Reduction}\label{sec:nets-germs}

The reduction results of Section~\ref{sec:reduction} apply to the
quasi-local algebra directly. Locality enters IAQFT by assigning compatible
parcels to the local algebras of the AQFT net.

Let $\mathcal K$ denote the bounded spacetime regions, directed by
inclusion, and let $\{\A(\mathcal O)\}_{\mathcal O\in\mathcal K}$ be the
local net. For $\mathcal O_1\subseteq\mathcal O_2$, let
\[
\iota_{12}:\A(\mathcal O_1)\longrightarrow\A(\mathcal O_2)
\]
denote the injective unital $*$-homomorphism that identifies each
observable of the smaller region $\mathcal O_1$ with the corresponding
observable in the larger region $\mathcal O_2$, as required by isotony.
It induces the restriction map on state spaces
\[
\iota_{12}^*:
\mathcal S(\A(\mathcal O_2))
\longrightarrow
\mathcal S(\A(\mathcal O_1)),
\qquad
\iota_{12}^*(\omega)=\omega\circ\iota_{12}.
\]

\begin{definition}[Parcel net]
\label{def:parcelnet}
A \emph{parcel net} assigns to each $\mathcal O\in\mathcal K$ a parcel
$\mathcal P(\mathcal O)\subseteq\mathcal S(\A(\mathcal O))$ such that,
whenever $\mathcal O_1\subseteq\mathcal O_2$,
\[
\iota_{12}^*\bigl(\mathcal P(\mathcal O_2)\bigr)
\subseteq
\mathcal P(\mathcal O_1).
\]
Thus every state compatible with the parcel information on the larger
region remains compatible with the parcel information specified on the
smaller region after restriction. Equality is not required, so
compatibility under restriction does not preclude additional
observational information at different stages.
\end{definition}

Operationally only finitely many regions and observational constraints are
available at any stage; the compatibility condition expresses when such
finite local data can be regarded as part of a common parcel net.

Let $\{\mathcal O_\alpha\}_{\alpha\in I}$ be a decreasing neighbourhood
basis of a point $x$. Identifying the local algebras with subalgebras of
the quasi-local algebra, set
\[
\A_x:=\bigcap_{\alpha\in I}\A(\mathcal O_\alpha).
\]
Since all local algebras contain the same identity,
$\mathbb C1\subseteq\A_x$, although $\A_x$ may equal $\mathbb C1$.
A family of states $\{\omega_\alpha\}$, with
$\omega_\alpha\in\mathcal S(\A(\mathcal O_\alpha))$, is called
\emph{compatible} if
$\omega_\beta=\iota_{\beta\alpha}^*(\omega_\alpha)$ whenever
$\mathcal O_\beta\subseteq\mathcal O_\alpha$. Such a compatible family
induces a state on $\A_x$ by $\omega_x(A)=\omega_\alpha(A)$;
compatibility makes this independent of $\alpha$. Global states provide
compatible families by restriction, although distinct compatible families
may induce the same state on $\A_x$.

Parcel-net compatibility and observational refinement are distinct
requirements. The former expresses consistency of finite-precision
descriptions under restriction between local algebras; it does not by
itself imply that the information induced on a limiting algebra becomes
more precise. Local reduction requires, in addition, that the induced
closed state sets on the limiting algebra form a decreasing family whose
observational oscillations vanish.

\begin{theorem}[Local Reduction Theorem]
\label{thm:localreduction}
Let $\{\mathcal O_\alpha\}_{\alpha\in I}$ be the decreasing neighbourhood
basis of $x$ introduced above, and let
$\mathcal P(\mathcal O_\alpha)$, for $\alpha\in I$,  be a compatible family of parcels in the
sense of Definition~\ref{def:parcelnet}. Set
\[
C_\alpha
=
\overline{
r_\alpha(\mathcal P(\mathcal O_\alpha))
}^{\,w^*},
\]
where $r_\alpha$ denotes restriction to $\A_x$, for $\alpha\in I$. Assume that $\{C_\alpha\}_{\alpha\in I}$ is decreasing and that, for
every $A$ in a norm-dense $*$-subalgebra
$\A_0\subseteq\A_x$,
\[
\lim_{\alpha\in I}
\operatorname{osc}_{\mathcal P(\mathcal O_\alpha)}(A)
=
0.
\]
Then
\[
\bigcap_{\alpha\in I} C_\alpha=\{\omega_x\}
\]
for a unique state $\omega_x\in\mathcal S(\A_x)$. Moreover, every compatible family
$\{\omega_\alpha\}_{\alpha\in I}$ satisfying
$\omega_\alpha\in\mathcal P(\mathcal O_\alpha)$ for every
$\alpha\in I$ induces this same state on $\A_x$.

\end{theorem}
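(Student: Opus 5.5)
The plan is to mirror the proof of Theorem~\ref{thm:reduction}, transported to the limiting algebra $\A_x$ via the restriction maps $r_\alpha$.

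\textbf{Step 1: a common intersection point exists.} Each $r_\alpha:\mathcal S(\A(\mathcal O_\alpha))\to\mathcal S(\A_x)$ is restriction along the inclusion $\A_x\subseteq\A(\mathcal O_\alpha)$, and this inclusion is unital. Hence $r_\alpha(\mathcal P(\mathcal O_\alpha))$ is a non-empty subset of $\mathcal S(\A_x)$. Its weak$^*$ closure $C_\alpha$ is therefore non-empty and weak$^*$ compact, because $\mathcal S(\A_x)$ is weak$^*$ compact by Banach--Alaoglu as recalled in Section~\ref{sec:state-space}. Since $\{C_\alpha\}$ is assumed decreasing and $I$ is directed, the family has the finite intersection property. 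Compactness then gives $\bigcap_\alpha C_\alpha\neq\varnothing$.

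\textbf{Step 2: the intersection is a single point.} Take $\omega,\eta\in\bigcap_\alpha C_\alpha$ and $A\in\A_0\subseteq\A_x$. For $\mu=r_\alpha(\mu_\alpha)$ and $\nu=r_\alpha(\nu_\alpha)$ with $\mu_\alpha,\nu_\alpha\in\mathcal P(\mathcal O_\alpha)$, we have
\[
|\mu(A)-\nu(A)|=|\mu_\alpha(A)-\nu_\alpha(A)|\leq\operatorname{osc}_{\mathcal P(\mathcal O_\alpha)}(A).
\]
The map $\rho\mapsto\rho(A)$ is weak$^*$ continuous on $\mathcal S(\A_x)$, so this bound passes to the closure. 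It follows that $\operatorname{osc}_{C_\alpha}(A)\leq\operatorname{osc}_{\mathcal P(\mathcal O_\alpha)}(A)$, and hence $|\omega(A)-\eta(A)|\leq\operatorname{osc}_{\mathcal P(\mathcal O_\alpha)}(A)$ for every $\alpha$. Taking the limit along $I$ gives $\omega=\eta$ on $\A_0$. Norm density of $\A_0$ in $\A_x$ and norm continuity of states then give $\omega=\eta$. So the intersection is $\{\omega_x\}$ for a unique $\omega_x$.

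\textbf{Step 3: every compatible family induces $\omega_x$.} Let $\{\omega_\alpha\}$ be compatible with $\omega_\alpha\in\mathcal P(\mathcal O_\alpha)$, and let $\tilde\omega_x$ be the state it induces on $\A_x$. As noted before the theorem, compatibility makes $\tilde\omega_x=r_\alpha(\omega_\alpha)$ independent of $\alpha$. Hence $\tilde\omega_x\in r_\alpha(\mathcal P(\mathcal O_\alpha))\subseteq C_\alpha$ for every $\alpha$. Therefore $\tilde\omega_x\in\bigcap_\alpha C_\alpha=\{\omega_x\}$. Note that this step does not use the decreasing hypothesis.

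\textbf{Main obstacle.} The delicate point is Step~1. The parcel-net compatibility condition does not by itself make the restricted sets $r_\alpha(\mathcal P(\mathcal O_\alpha))$ decrease, which is exactly why the theorem assumes that the $C_\alpha$ are decreasing. I will use that assumption explicitly to obtain the finite intersection property. I will also check that the oscillation hypothesis is used only for elements $A\in\A_0\subseteq\A_x\subseteq\A(\mathcal O_\alpha)$, so that $\operatorname{osc}_{\mathcal P(\mathcal O_\alpha)}(A)$ is well defined for every $\alpha$. Beyond these two points, the argument is the compactness-plus-oscillation argument of Theorem~\ref{thm:reduction}.
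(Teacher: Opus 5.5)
Your proposal is correct and follows the paper's proof essentially step for step: compactness of the non-empty closed sets $C_\alpha\subseteq\mathcal S(\mathcal A_x)$ together with the decreasing hypothesis gives a non-empty intersection; the oscillation bound carried over to weak$^*$ closures, together with norm density of $\mathcal A_0$, forces that intersection to be a singleton; and compatibility places the induced state of any compatible family in every $C_\alpha$. Your inequality $\operatorname{osc}_{C_\alpha}(A)\leq\operatorname{osc}_{\mathcal P(\mathcal O_\alpha)}(A)$ is all the argument needs; the paper states equality, which also holds by continuity of evaluation.
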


\begin{proof}
Each $C_\alpha$ is non-empty and weak$^*$ compact, being a non-empty
closed subset of the weak$^*$ compact state space $\mathcal S(\A_x)$.
Since the family is decreasing, compactness gives
$C_\infty
:=
\bigcap_{\alpha\in I}C_\alpha
\neq\varnothing$. If $\omega,\eta\in C_\infty$ and $A\in\A_0$, then for every
$\alpha\in I$, weak$^*$ continuity of evaluation gives
\[
|\omega(A)-\eta(A)|
\leq
\operatorname{osc}_{C_\alpha}(A)
=
\operatorname{osc}_{\mathcal P(\mathcal O_\alpha)}(A).
\]
Taking the limit along the directed set $I$ and using the hypothesis,
\[
\lim_{\alpha\in I}
\operatorname{osc}_{\mathcal P(\mathcal O_\alpha)}(A)
=
0,
\]
we obtain
\[
|\omega(A)-\eta(A)|=0.
\]
Thus $\omega=\eta$ on $\A_0$, hence on $\A_x$ by norm density and
continuity. Finally, if
$\{\omega_\alpha\}$ is a compatible family with
$\omega_\alpha\in\mathcal P(\mathcal O_\alpha)$, for $\alpha\in I$, its restrictions to
$\A_x$ coincide by compatibility; denote the resulting state by
$\omega_x'$. Since
$r_\alpha(\omega_\alpha)\in
r_\alpha(\mathcal P(\mathcal O_\alpha))\subseteq C_\alpha$
for every $\alpha\in I$, we have
$\omega_x'\in
\bigcap_{\alpha\in I}C_\alpha=\{\omega_x\}$.
Hence $\omega_x'=\omega_x$.
\end{proof}

For spacelike separated regions $\mathcal O_1,\mathcal O_2$, Einstein
causality gives
$[\A(\mathcal O_1),\A(\mathcal O_2)]=0$. This yields the usual
no-signalling property directly at parcel level.

\begin{proposition}[Parcel no-signalling]
\label{prop:nosignalling}
Let $\{M_j\}\subseteq\A(\mathcal O_1)$ satisfy
$\sum_jM_j^*M_j=I$, with $\mathcal O_1$ spacelike separated from
$\mathcal O_2$, and let
$\mathcal E^*(\omega)(A)=\sum_j\omega(M_j^*AM_j)$ be the nonselective
update. Then, for every parcel $P$ and every $A\in\A(\mathcal O_2)$,
\[
\{\mathcal E^*(\omega)(A):\omega\in P\}
=
\{\omega(A):\omega\in P\}.
\]
Hence all finite-precision expectation intervals in $\mathcal O_2$ are
unchanged by a nonselective measurement in $\mathcal O_1$.
\end{proposition}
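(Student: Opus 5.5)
The plan is to prove a pointwise identity and then read off equality of the two sets. We will show that
\[
\mathcal E^*(\omega)(A)=\omega(A)
\qquad\text{for every state }\omega\in\mathcal S(\A)\text{ and every }A\in\A(\mathcal O_2).
\]
Once this holds, the map $\omega\mapsto\mathcal E^*(\omega)(A)$ agrees with $\omega\mapsto\omega(A)$ on all of $\mathcal S(\A)$, in particular on $P$. The two image sets in the statement are therefore literally the same set. No property of $P$ is used: not openness, not convexity, not non-emptiness. Hence the statement holds for arbitrary subsets of state space, and a fortiori for parcels and observational parcels.

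For the pointwise identity, we first identify $\A(\mathcal O_1)$ and $\A(\mathcal O_2)$ with subalgebras of the quasi-local algebra $\A$, as in Section~\ref{sec:state-space}. Then $M_j$, $M_j^*$ and $A$ all lie in $\A$. Einstein causality, $[\A(\mathcal O_1),\A(\mathcal O_2)]=0$, gives $M_jA=AM_j$, and since $A$ commutes with every $M_j$, the products can be rearranged as
\[
M_j^*AM_j=M_j^*M_jA=AM_j^*M_j .
\]
Summing over $j$ and using the normalization $\sum_jM_j^*M_j=I$ gives
\[
\sum_jM_j^*AM_j=\Bigl(\sum_jM_j^*M_j\Bigr)A=A.
\]
Linearity of $\omega$ then yields $\mathcal E^*(\omega)(A)=\omega\bigl(\sum_jM_j^*AM_j\bigr)=\omega(A)$.

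The only point needing care is the setting. The commutation relation must be read inside the common algebra $\A$, or equivalently inside any $\A(\mathcal O)$ containing both $\mathcal O_1$ and $\mathcal O_2$, which exists by directedness. We must also note that $\mathcal E^*(\omega)$ is the same functional $A\mapsto\sum_j\omega(M_j^*AM_j)$ evaluated on elements of $\A(\mathcal O_2)$. No positivity or selective normalization is involved, because the nonselective update is linear and everywhere defined.

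For the final sentence of the statement, we apply the set identity to every observational constraint $a<\omega(A)<b$ with $A=A^*\in\A(\mathcal O_2)$. This shows that the range of expectation values over $P$, and hence every interval bound derived from it, is unchanged. I do not expect a genuine obstacle here; the argument is a direct computation.
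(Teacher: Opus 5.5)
Your proposal is correct and follows the paper's own argument: you use Einstein causality to commute $A$ past each $M_j$, apply $\sum_j M_j^*M_j=I$ to obtain $\mathcal E^*(\omega)(A)=\omega(A)$ for every state, and then read off equality of the two value sets pointwise over $P$. Your further remarks, that no parcel property of $P$ is used and that the commutation should be read inside the quasi-local algebra, are accurate refinements of the same proof.
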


\begin{proof}
Since $A$ commutes with every $M_j$,
\[
\mathcal E^*(\omega)(A)
=
\sum_j\omega(AM_j^*M_j)
=
\omega(A)
\]
for every state $\omega$. The assertion follows pointwise over $P$.
\end{proof}

A selective update may change the corresponding expectation intervals
because conditioning can exploit correlations between spacelike separated
regions. This is not signalling: without knowledge of the selected outcome,
the relevant description is the nonselective update, for which
Proposition~\ref{prop:nosignalling} gives exact invariance.
\section{Bell Correlations and Geometric Reduction}
\label{sec:Bell-correlation}

Throughout this and the following sections, let $\omega_0$ denote the
distinguished vacuum state of the quasi-local algebra $\A$, with GNS
representation $(\pi_0,\mathcal H_0,\Omega_0)$, so that
$\omega_0(A)=\langle\Omega_0,\pi_0(A)\Omega_0\rangle$. IAQFT retains the
usual AQFT vacuum; finite-precision information about it is represented
by parcels containing $\omega_0$.

Let $\mathcal O_1,\mathcal O_2$ be spacelike separated. For
$A\in\A(\mathcal O_1)$ and $B\in\A(\mathcal O_2)$, a nonzero vacuum
correlation
$\omega_0(AB)-\omega_0(A)\omega_0(B)$ is a property of the state, not a
signal between the regions. Selective measurement may change conditional
information in the remote region because of such correlations, whereas
the nonselective update leaves its expectation values unchanged by
Proposition~\ref{prop:nosignalling}.

\begin{definition}[Correlation interval]
\label{def:correlationinterval}
For a parcel $O$ define
\[
C_O(A,B)
=
\{\omega(AB)-\omega(A)\omega(B):\omega\in O\}.
\]
\end{definition}

\begin{theorem}[Persistence of spacelike correlations]
\label{thm:vacuumcorrelation}
Let $A\in\A(\mathcal O_1)$ and $B\in\A(\mathcal O_2)$ be self-adjoint and
set $c=\omega_0(AB)-\omega_0(A)\omega_0(B)\neq0$. Then there is an observational parcel
$O\ni\omega_0$ such that every $\omega\in O$ satisfies
\[
|\omega(AB)-\omega(A)\omega(B)-c|<\frac{|c|}{2}.
\]
In particular, the correlation has the same sign as $c$ and magnitude
greater than $|c|/2$ throughout $O$.
\end{theorem}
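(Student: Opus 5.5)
The plan is a continuity argument in the weak$^*$ topology, made explicit by choosing observational constraints on three expectation values. The functional $\omega\mapsto\omega(AB)-\omega(A)\omega(B)$ is a polynomial in the three evaluations $\omega(A)$, $\omega(B)$ and $\omega(AB)$, each weak$^*$ continuous. So a small enough weak$^*$ neighbourhood of $\omega_0$ keeps the correlation within $|c|/2$ of $c$. The only subtlety is that $AB$ need not be self-adjoint, while observational parcels in Definition~\ref{def:quantum-parcel} use self-adjoint observables. Since $A$ and $B$ are self-adjoint and commute by Einstein causality ($\mathcal O_1,\mathcal O_2$ spacelike separated), $AB$ is in fact self-adjoint: $(AB)^*=BA=AB$. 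I would record this first. Without commutativity one would instead split $AB$ into $\tfrac12(AB+BA)$ and $\tfrac1{2i}(AB-BA)$ and constrain both parts.

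For the explicit construction, fix $\delta>0$ and define
\[
O=\{\omega\in\mathcal S(\A):|\omega(A)-\omega_0(A)|<\delta,\ |\omega(B)-\omega_0(B)|<\delta,\ |\omega(AB)-\omega_0(AB)|<\delta\}.
\]
This is an observational parcel, being a set of three interval constraints on self-adjoint elements of $\A$, and it is non-empty because it contains $\omega_0$.

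For $\omega\in O$, write $x=\omega(A)$, $y=\omega(B)$ and $x_0=\omega_0(A)$, $y_0=\omega_0(B)$. The key estimate is
\[
|\omega(AB)-\omega(A)\omega(B)-c|\le |\omega(AB)-\omega_0(AB)|+|xy-x_0y_0|.
\]
Then use $|xy-x_0y_0|\le |x-x_0||y|+|x_0||y-y_0|\le \delta(\|B\|+\|A\|)$, which follows from $|\omega(B)|\le\|B\|$ and $|\omega_0(A)|\le\|A\|$ for states. This gives the total bound $\delta(1+\|A\|+\|B\|)$. Choosing $\delta<|c|/\bigl(2(1+\|A\|+\|B\|)\bigr)$ gives the strict inequality claimed in the theorem.

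The final clause is then immediate: a real number within $|c|/2$ of $c\neq0$ has the sign of $c$ and modulus greater than $|c|/2$. Realness of the correlation follows because all three quantities are real when $A$, $B$ and $AB$ are self-adjoint.

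There is no real obstacle here. The step needing care is justifying that $AB$ is an admissible observable for an observational parcel, which is exactly where locality (commutativity) is used.
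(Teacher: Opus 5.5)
Your proof is correct and follows the same route as the paper. The paper simply notes that $\omega\mapsto\omega(AB)-\omega(A)\omega(B)$ is weak$^*$ continuous and that observational parcels form a basis, so some observational parcel around $\omega_0$ lies in $\{\omega:|F(\omega)-c|<|c|/2\}$; you make this explicit with three interval constraints and the bound $\delta(1+\|A\|+\|B\|)$. Your observation that $AB=BA$ is self-adjoint by Einstein causality is a worthwhile addition that the paper leaves implicit, since it is what makes $c$ and the correlation real, so that the ``same sign'' clause is meaningful.
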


\begin{proof}
The map $F(\omega)=\omega(AB)-\omega(A)\omega(B)$ is weak$^*$ continuous
and $F(\omega_0)=c$. Hence
\[
\{\omega:|F(\omega)-c|<|c|/2\}
\]
is a weak$^*$ neighbourhood of $\omega_0$ and therefore contains an
observational parcel containing $\omega_0$. Since every observational
parcel is a parcel, the conclusion follows.
\end{proof}

Strict Bell nonclassicality is similarly stable under finite precision.

\begin{theorem}[Persistence of Bell violation]
\label{thm:bellviolation}
Let $A_1,A_2\in\A(\mathcal O_1)$ and
$B_1,B_2\in\A(\mathcal O_2)$ be self-adjoint contractions and define
$\mathcal B=A_1(B_1+B_2)+A_2(B_1-B_2)$. If
$|\omega_0(\mathcal B)|>2$, then there is an observational parcel $O\ni\omega_0$ such
that $|\omega(\mathcal B)|>2$ for every $\omega\in O$.
\end{theorem}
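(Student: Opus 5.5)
The argument mirrors the proof of Theorem~\ref{thm:vacuumcorrelation}, with the Bell functional in place of the covariance functional. It is in fact simpler, because $\omega\mapsto\omega(\mathcal B)$ is linear rather than quadratic.

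The first step is to check that $\mathcal B$ is self-adjoint. By Einstein causality $[\A(\mathcal O_1),\A(\mathcal O_2)]=0$, so $A_iB_k$ is self-adjoint whenever $A_i,B_k$ are. Hence $\mathcal B=\mathcal B^*\in\A$, and $\omega(\mathcal B)\in\mathbb R$ for every state $\omega$. This ensures that $|\omega(\mathcal B)|>2$ is a condition on a real number. Write $v=\omega_0(\mathcal B)$ and $\delta=|v|-2>0$.

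Next, take the observational parcel
\[
O=\{\omega\in\mathcal S(\A):\ v-\delta<\omega(\mathcal B)<v+\delta\},
\]
which is of the form in Definition~\ref{def:quantum-parcel} with the single self-adjoint observable $A_1=\mathcal B$. It is non-empty because $\omega_0\in O$. It is weak$^*$ open and convex because evaluation at $\mathcal B$ is weak$^*$ continuous and affine.

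To conclude, suppose $v>2$. Then every $\omega\in O$ satisfies $\omega(\mathcal B)>v-\delta=2$. If instead $v<-2$, then $\omega(\mathcal B)<v+\delta=-2$. In either case $|\omega(\mathcal B)|>2$ throughout $O$. One may add that if $\mathcal B$ is not regarded as a local observable, we can still use the local observables $A_iB_k\in\A(\mathcal O)$, where $\mathcal O\supseteq\mathcal O_1\cup\mathcal O_2$ by directedness. The parcel then becomes an intersection of four product-expectation constraints, each with tolerance $\delta/4$. The triangle inequality, with coefficients $\pm1$, gives $|\omega(\mathcal B)-v|<\delta$ on that parcel.

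The only point needing care is the self-adjointness of $\mathcal B$, which is where spacelike commutativity enters. Without it, $\omega(\mathcal B)$ could be complex, and $\mathcal B$ would not be an admissible observable in an observational parcel. One could then use $\operatorname{Re}\mathcal B$ together with continuity of $\omega\mapsto|\omega(\mathcal B)|$, but locality makes this unnecessary.
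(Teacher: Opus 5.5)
Your proof is correct and is essentially the paper's argument: one expectation constraint on $\mathcal B$ centred at $\omega_0(\mathcal B)$, with tolerance set by $|\omega_0(\mathcal B)|-2$. The paper uses half that margin and the reverse triangle inequality, whereas you use the full margin and split by sign; your explicit check that $\mathcal B=\mathcal B^*$ via spacelike commutativity, left implicit in the paper's standing assumption, is a useful addition because it is what makes the set literally an observational parcel.
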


\begin{proof}
Set $\varepsilon=(|\omega_0(\mathcal B)|-2)/2>0$. The set
\[
O=
\{\omega\in\mathcal S(\A):
|\omega(\mathcal B)-\omega_0(\mathcal B)|<\varepsilon\}
\]
is an observational parcel containing $\omega_0$, and for
$\omega\in O$,
\[
|\omega(\mathcal B)|
>
|\omega_0(\mathcal B)|-\varepsilon
=
2+\varepsilon>2.
\]
\end{proof}

\begin{remark}
Theorem~\ref{thm:bellviolation} is conditional on a vacuum CHSH violation
for the specified local observables; it does not assert their existence.
Strong Bell-correlation results for spacelike-separated local algebras are
known under suitable AQFT hypotheses; see, for example,
Summers--Werner~\cite{SummersWerner}. Whenever such a strict violation occurs, the theorem shows that it
persists throughout a sufficiently small observational parcel.
\end{remark}

These results give a geometric interpretation of reduction without
weakening quantum nonlocality. Spacelike correlations, including strict
Bell violations, persist on open regions of state space and are therefore
not artefacts of infinitely precise point states. A selective measurement
updates the parcel of admissible states and can thereby change conditional
information about a spacelike-separated region, while the nonselective
update leaves its local expectation values unchanged. Thus IAQFT retains
the nonlocal correlations of AQFT while representing collapse as a
geometric transformation of finite-precision state information rather
than as a superluminal physical influence.
\section{Reeh--Schlieder as Local Parcel Reachability}
\label{sec:Reeh-Schlieder}
Let $(\pi_0,\mathcal H_0,\Omega_0)$ be the GNS representation of the
vacuum state $\omega_0$, and assume that $\omega_0$ is pure, equivalently
that $\pi_0$ is irreducible. Hence
\[
\mathcal M_0:=\pi_0(\A)''=B(\mathcal H_0).
\]
We regard $\omega_0$ also as the vector state
$\omega_0(C)=\langle\Omega_0,C\Omega_0\rangle$ on $\mathcal M_0$.

Recall that a state on a von Neumann algebra $\mathcal M$ is
\emph{normal} if it belongs to the predual $\mathcal M_*$, equivalently
if it is weak$^*$ (ultraweakly) continuous on $\mathcal M$. The normal
state space
$\Sn(\mathcal M_0)\subset\mathcal M_{0*}$ is equipped with the weak
topology $\sigma(\mathcal M_{0*},\mathcal M_0)$. Its pure normal states
are precisely the vector states
\[
\omega_\Psi(C)=\langle\Psi,C\Psi\rangle
\]
for unit $\Psi\in\mathcal H_0$.

For $B\in\pi_0(\A(\mathcal O))$ with
$\omega_0(B^*B)>0$, define the local selective update
\[
(\omega_0)_B(C)
=
\frac{\omega_0(B^*CB)}{\omega_0(B^*B)}.
\]
It is the vector state induced by
$B\Omega_0/\|B\Omega_0\|$. After rescaling $B$ if necessary,
$B^*B\leq I$, so this update is the first outcome of the POVM
$\{B^*B,I-B^*B\}$.

\begin{proposition}[State-space Reeh--Schlieder property]
\label{prop:RS-state}
For a non-empty bounded open region $\mathcal O$, the following are
equivalent:
\begin{enumerate}
\item $\Omega_0$ is cyclic for $\pi_0(\A(\mathcal O))$;
\item
\[
\mathcal R_{\mathcal O}(\omega_0)
:=
\{(\omega_0)_B:
B\in\pi_0(\A(\mathcal O)),\ \omega_0(B^*B)>0\}
\]
is dense in the pure normal state space of $\mathcal M_0$ with respect
to the relative topology induced by
$\sigma(\mathcal M_{0*},\mathcal M_0)$.
\end{enumerate}
\end{proposition}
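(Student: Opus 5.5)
The plan is to pass through the map $\Psi\mapsto\omega_\Psi$ from unit vectors to pure normal states, and to compare the weak topology on pure normal states with the norm topology on unit vectors modulo phase. First, $(\omega_0)_B=\omega_{\Psi_B}$ with $\Psi_B=B\Omega_0/\|B\Omega_0\|$, so
\[
\mathcal R_{\mathcal O}(\omega_0)=\{\omega_\Psi:\Psi=\Phi/\|\Phi\|,\ \Phi\in\pi_0(\A(\mathcal O))\Omega_0\setminus\{0\}\}.
\]
The key elementary estimate is $\|\omega_\Psi-\omega_\Phi\|\le 2\|\Psi-\Phi\|$ for unit vectors. Hence norm convergence of unit vectors gives predual-norm convergence of the vector states, and a fortiori convergence in $\sigma(\mathcal M_{0*},\mathcal M_0)$.

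For (1)$\Rightarrow$(2), fix a unit vector $\Psi$ and a weak neighbourhood of $\omega_\Psi$. By cyclicity I choose $B_k\in\pi_0(\A(\mathcal O))$ with $B_k\Omega_0\to\Psi$. Then $\|B_k\Omega_0\|\to1$, so $\omega_0(B_k^*B_k)>0$ eventually, and the normalized vectors converge to $\Psi$ as well. The estimate above then shows that $(\omega_0)_{B_k}\to\omega_\Psi$ in norm, and in particular weakly. This direction is routine.

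For (2)$\Rightarrow$(1), the step I expect to be the main obstacle, I argue by contraposition. Suppose $K:=\overline{\pi_0(\A(\mathcal O))\Omega_0}\neq\mathcal H_0$. Then $K$ is a closed subspace, and since it is invariant under the algebra, its projection $P$ lies in the commutant $\pi_0(\A(\mathcal O))'$; I will only use that $P$ is a proper projection with $P\ne I$. Choose a unit vector $\Psi\perp K$. The set
\[
U=\{\rho\in\Sn(\mathcal M_0):\rho(I-P)>1/2\}
\]
is weakly open because $P\in B(\mathcal H_0)=\mathcal M_0$; this is where irreducibility is used, since it guarantees that $P$ is available as an observable in $\mathcal M_0$. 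Moreover $\omega_\Psi(I-P)=1$, so $\omega_\Psi\in U$. Every $(\omega_0)_B$ is a vector state of a unit vector lying in $K$, so $(\omega_0)_B(I-P)=0$. Thus $U$ is a nonempty relatively open subset of the pure normal states that misses $\mathcal R_{\mathcal O}(\omega_0)$, which contradicts density.

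The only subtle points are the following. The topology in (2) is the relative weak topology on vector states, so it suffices to exhibit a single separating observable, and here $I-P$ serves that role. Norm-to-weak continuity settles the forward direction. If one prefers to avoid phases, I note that the map from vectors to states depends only on rays, so no phase issue arises in either direction.
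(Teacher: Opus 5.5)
Your proof is correct. Direction (1)$\Rightarrow$(2) is the same as the paper's: approximate $\Psi$ in norm by the normalized vectors $B\Omega_0/\|B\Omega_0\|$, then use $|\omega_\Phi(C)-\omega_\Psi(C)|\le 2\|C\|\,\|\Phi-\Psi\|$.

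For (2)$\Rightarrow$(1) you take a different route. The paper argues directly, one target at a time. For a unit vector $\Psi$ it uses the rank-one projection $P_\Psi$ and the neighbourhood $\{\omega:\omega(P_\Psi)>1-\varepsilon\}$. A reachable state in this neighbourhood satisfies $|\langle\Psi,\Phi_\varepsilon\rangle|^2>1-\varepsilon$. After a phase adjustment this gives $\|\Psi-\Phi_\varepsilon\|^2<2-2\sqrt{1-\varepsilon}$, so $\Psi\in\overline{\pi_0(\A(\mathcal O))\Omega_0}$.

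You instead argue by contraposition with a single global observable. Take $I-P$, where $P$ projects onto the cyclic subspace $K$. It vanishes on every $(\omega_0)_B$ but equals $1$ on $\omega_\Psi$ for $\Psi\perp K$. Both arguments use irreducibility in the same way: to ensure the separating projection ($P_\Psi$ for the paper, $I-P$ for you) lies in $\mathcal M_0=B(\mathcal H_0)$.

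What each route buys:
\begin{itemize}
\item Your version is shorter and needs no overlap or phase estimate. It also exhibits a failure of cyclicity as one explicit single-constraint observational parcel that no local update of the vacuum can reach, which fits Theorem~\ref{thm:reehschlieder} well.
\item The paper's version is constructive and quantitative. It turns a finite-precision constraint on the target into an explicit norm bound on the approximating vector, up to phase. Together with the forward estimate, this in effect identifies the relative weak topology on pure normal states with the norm topology on unit rays, which suits the reachability reading of the result.
\end{itemize}
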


\begin{proof}
If $\Omega_0$ is cyclic and $\Psi\in\mathcal H_0$ is a unit vector,
choose the net $B_\alpha\in\pi_0(\A(\mathcal O))$, $\alpha\in I$, such that
$
B_\alpha\Omega_0\longrightarrow\Psi
$
in Hilbert-space norm. Then
\[
\Phi_\alpha
:=
\frac{B_\alpha\Omega_0}{\|B_\alpha\Omega_0\|}
\longrightarrow\Psi
\]
in Hilbert-space norm. For every $C\in\mathcal M_0$,
\[
(\omega_0)_{B_\alpha}(C)
=
\langle\Phi_\alpha,C\Phi_\alpha\rangle,
\qquad
\omega_\Psi(C)
=
\langle\Psi,C\Psi\rangle.
\]
Hence
\[
\begin{aligned}
\left|
(\omega_0)_{B_\alpha}(C)-\omega_\Psi(C)
\right|
&\leq
\left|
\langle\Phi_\alpha-\Psi,C\Phi_\alpha\rangle
\right|
+
\left|
\langle\Psi,C(\Phi_\alpha-\Psi)\rangle
\right|\\
&\leq
2\|C\|\,\|\Phi_\alpha-\Psi\|
\longrightarrow0.
\end{aligned}
\]
Thus
$
(\omega_0)_{B_\alpha}
\longrightarrow
\omega_\Psi
$
in the weak topology
$\sigma(\mathcal M_{0*},\mathcal M_0)$.

Conversely, suppose $\mathcal R_{\mathcal O}(\omega_0)$ is dense and let
$\Psi$ be a unit vector. For $\varepsilon>0$, the set
\[
V_{\Psi,\varepsilon}
=
\{\omega\in\Sn(\mathcal M_0):
\omega(P_\Psi)>1-\varepsilon\},
\qquad
P_\Psi=|\Psi\rangle\langle\Psi|,
\]
is a neighbourhood of $\omega_\Psi$. Hence some
$(\omega_0)_{B_\varepsilon}\in V_{\Psi,\varepsilon}$.Writing
\[
\Phi_\varepsilon
=
\frac{B_\varepsilon\Omega_0}
{\|B_\varepsilon\Omega_0\|},
\]
we have
$
|\langle\Psi,\Phi_\varepsilon\rangle|^2
=
(\omega_0)_{B_\varepsilon}(P_\Psi)
>
1-\varepsilon.
$
Multiplying $B_\varepsilon$ by a scalar of modulus one if necessary,
which leaves $(\omega_0)_{B_\varepsilon}$ unchanged, we may assume that
$\langle\Psi,\Phi_\varepsilon\rangle$ is real and nonnegative. Hence
\[
\|\Psi-\Phi_\varepsilon\|^2
=
2-2|\langle\Psi,\Phi_\varepsilon\rangle|
<
2-2\sqrt{1-\varepsilon}
\longrightarrow0.
\]
Thus
$\Psi\in\overline{\pi_0(\A(\mathcal O))\Omega_0}$, and cyclicity follows.\end{proof}
\begin{definition}[Local parcel reachability]
\label{def:localreachability}
A normal parcel $V\subseteq\Sn(\mathcal M_0)$ is \emph{locally reachable
from the vacuum through $\mathcal O$} if
$(\omega_0)_B\in V$ for some
$B\in\pi_0(\A(\mathcal O))$ with $\omega_0(B^*B)>0$.
\end{definition}
The terminology is analogous to reachability in control theory
\cite[Ch.~3]{sontag2013mathematical}: the vacuum plays the role of the
initial state, local operations in $\A(\mathcal O)$ are the admissible
controls, and the target is a finite-precision parcel rather than an
exact state.
\begin{example}[Finite-precision local target]
Let $\Psi\in\mathcal H_0$ be a unit vector and let $V$ be a normal parcel
containing the vector state $\omega_\Psi$. If the Reeh--Schlieder
property holds for $\mathcal O$, Proposition~\ref{prop:RS-state} gives
some $B\in\pi_0(\A(\mathcal O))$ such that $(\omega_0)_B\in V$.
Thus a local operation need not prepare $\omega_\Psi$ exactly: it can
prepare a state satisfying any prescribed finite-precision neighbourhood
of that target.
\end{example}
\begin{theorem}[Reeh--Schlieder as local parcel reachability]
\label{thm:reehschlieder}
Under the assumptions above, the Reeh--Schlieder property for
$\mathcal O$ holds if and only if every normal parcel containing a pure
normal state is locally reachable from the vacuum through $\mathcal O$.
\end{theorem}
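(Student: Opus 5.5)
The plan is to reduce the theorem to Proposition~\ref{prop:RS-state}, using only the fact that normal parcels are open in the relative weak topology $\sigma(\mathcal M_{0*},\mathcal M_0)$ on $\Sn(\mathcal M_0)$. The argument has two directions, each of which translates between density of $\mathcal R_{\mathcal O}(\omega_0)$ in the pure normal states and the statement that every open neighbourhood of a pure normal state meets $\mathcal R_{\mathcal O}(\omega_0)$. Throughout, a ``normal parcel'' means a non-empty convex weakly open subset of $\Sn(\mathcal M_0)$, by analogy with Definition~\ref{def:quantum-parcel}.

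\emph{Forward direction.} Assume the Reeh--Schlieder property for $\mathcal O$. Let $V$ be a normal parcel containing a pure normal state, which is a vector state $\omega_\Psi$ for some unit $\Psi\in\mathcal H_0$. Proposition~\ref{prop:RS-state} shows that $\mathcal R_{\mathcal O}(\omega_0)$ is dense in the pure normal state space for the relative weak topology. Since $V$ is weakly open, $V$ intersected with the pure normal states is a relatively open set containing $\omega_\Psi$. Density therefore yields some $(\omega_0)_B\in V$ with $B\in\pi_0(\A(\mathcal O))$ and $\omega_0(B^*B)>0$, which is exactly local reachability in the sense of Definition~\ref{def:localreachability}. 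Each $(\omega_0)_B$ is itself a vector state, hence pure and normal, so no extra care is needed about where the approximants lie.

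\emph{Converse direction.} Assume every normal parcel containing a pure normal state is locally reachable. To recover density, and hence cyclicity via Proposition~\ref{prop:RS-state}, take a pure normal state $\omega_\Psi$ and an arbitrary relatively open neighbourhood $N$ of it. The main point is to shrink $N$ to a normal parcel. Basic weak neighbourhoods in $\Sn(\mathcal M_0)$ have the form
\[
\{\omega\in\Sn(\mathcal M_0): a_i<\omega(C_i)<b_i,\ i=1,\ldots,m\},\qquad C_i=C_i^*\in\mathcal M_0 .
\]
These sets are convex, because the evaluations are affine, and they are open and non-empty, since they contain $\omega_\Psi$. So $N$ contains a basic set of this form around $\omega_\Psi$, and that set is a normal parcel containing a pure normal state. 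Local reachability then gives some $(\omega_0)_B$ inside it, hence inside $N$. This proves density, and the implication (2)$\Rightarrow$(1) of Proposition~\ref{prop:RS-state} gives cyclicity.

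If anything here deserves to be called an obstacle, it is checking that the basic neighbourhoods are genuinely normal parcels: they must be convex and open in the relative topology of $\Sn(\mathcal M_0)$, not merely in the pure-state space. Once that is in place, everything else is a direct appeal to Proposition~\ref{prop:RS-state}. A more economical route for the converse is also available. One can apply the hypothesis directly to $V_{\Psi,\varepsilon}=\{\omega:\omega(P_\Psi)>1-\varepsilon\}$, which is itself a normal parcel, and then repeat the norm estimate from the proof of Proposition~\ref{prop:RS-state} to get $\Psi\in\overline{\pi_0(\A(\mathcal O))\Omega_0}$.
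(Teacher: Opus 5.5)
Your proposal is correct and follows the paper's proof essentially verbatim: both directions reduce to Proposition~\ref{prop:RS-state}, and the converse shrinks an arbitrary neighbourhood of a pure normal state to a convex weakly open one (the paper invokes local convexity, while you exhibit the basic observational neighbourhoods explicitly). One point of phrasing: a relatively open $N$ in the pure normal state space contains only the trace of such a basic set on the pure states, not the basic set itself, but each $(\omega_0)_B$ is a vector state, hence pure, so the reachable state still lies in $N$, exactly as you noted in the forward direction.
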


\begin{proof}
By Proposition~\ref{prop:RS-state}, Reeh--Schlieder is equivalent to
density of $\mathcal R_{\mathcal O}(\omega_0)$ in the pure normal state
space. Density immediately implies that every open parcel containing a
pure state meets $\mathcal R_{\mathcal O}(\omega_0)$.

Conversely, if every such parcel is reachable, let $\varphi$ be a pure
normal state and $U$ any neighbourhood of $\varphi$. Local convexity gives
a convex open neighbourhood $V$ with $\varphi\in V\subseteq U$.
Reachability gives
$V\cap\mathcal R_{\mathcal O}(\omega_0)\neq\varnothing$, hence also
$U\cap\mathcal R_{\mathcal O}(\omega_0)\neq\varnothing$. Thus
$\mathcal R_{\mathcal O}(\omega_0)$ is dense, and
Proposition~\ref{prop:RS-state} gives Reeh--Schlieder.
\end{proof}

Thus the classical statement that local operations on the vacuum generate
a dense set of vectors becomes, at finite precision, the statement that
every parcel around a pure normal target state is reachable by a suitable
local operation. This clarifies the apparently counterintuitive
``remote preparation'' aspect of the Reeh--Schlieder theorem: its
finite-precision operational content does not require exact preparation
of an arbitrarily specified global target state, but reachability of every
finite-precision neighbourhood of such a target. The underlying
Reeh--Schlieder theorem and its compatibility with relativistic causality
are unchanged; IAQFT makes explicit its operational content when target
states can be specified only to finite precision.
\section{Haag's Theorem and Parcel Equivalence}
\label{sec:haag}

Haag-type results obstruct vacuum-preserving unitary equivalence between
the local nets of genuinely interacting and free quantum field theories.
The corresponding finite-precision question is whether such an
obstruction can disappear when exact vacuum states are replaced by parcel
refinements. We show that it cannot: parcel equivalence already implies
equivalence of the limiting vacuum representations.

\begin{definition}[Vacuum refinement family]
A \emph{refinement family} on a unital $C^*$-algebra $\A$ is a decreasing
family of parcels $\mathcal V=\{O_\alpha\}_{\alpha\in I}$ for which there
exists a norm-dense $*$-subalgebra $\A_0\subseteq\A$ such that, for every
$A\in(\A_0)_{\rm sa}:=\{A\in\A_0:A=A^*\}$,
\[
\lim_{\alpha\in I}
\operatorname{osc}_{O_\alpha}(A)
=
0.
\]
By Theorem~\ref{thm:reduction},
$\bigcap_{\alpha\in I}\overline{O_\alpha}^{\,w^*}$ consists of a unique
state, denoted $\omega_{\mathcal V}$. If this is the vacuum state,
$\mathcal V$ is called a \emph{vacuum refinement family}.
\end{definition}

\begin{definition}[Parcel equivalence]
\label{def:parceleq}
Let $\mathfrak A_1,\mathfrak A_2$ be AQFT nets with quasi-local algebras
$\A_1,\A_2$ and vacuum refinement families
$\mathcal V_1=\{O_\alpha^1\}_{\alpha\in I}$ and
$\mathcal V_2=\{O_\beta^2\}_{\beta\in J}$. They are
\emph{parcel equivalent} if there is a $*$-isomorphism
$\Phi:\A_1\to\A_2$ satisfying
$\Phi(\A_1(\mathcal O))=\A_2(\mathcal O)$ for every region $\mathcal O$
and such that the transported family
\[
\Phi_*(\mathcal V_1)
=
\{\Phi_*(O_\alpha^1)\}_{\alpha\in I},
\qquad
\Phi_*(\omega)=\omega\circ\Phi^{-1},
\]
is cofinal with $\mathcal V_2$ under reverse inclusion:
for every $\alpha\in I$ some $\beta\in J$ satisfies
$O_\beta^2\subseteq\Phi_*(O_\alpha^1)$, and for every $\beta\in J$ some
$\alpha\in I$ satisfies
$\Phi_*(O_\alpha^1)\subseteq O_\beta^2$.
\end{definition}

Parcel equivalence is stronger than equality of the limiting vacuum
states: it compares their complete finite-precision refinement structures.

\begin{lemma}[Cofinal refinement families have the same limit]
\label{lem:cofinal-refinements}
If $\{O_\alpha\}$ and $\{V_\beta\}$ are cofinal under reverse inclusion,
then
\[
\bigcap_\alpha\overline{O_\alpha}^{\,w^*}
=
\bigcap_\beta\overline{V_\beta}^{\,w^*}.
\]
Consequently, if the two intersections are singletons, their limiting
states coincide.
\end{lemma}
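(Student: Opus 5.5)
The argument is a double-inclusion that uses only monotonicity of weak$^*$ closure and the cofinality hypothesis.

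\emph{First inclusion.} Fix $\alpha$. By cofinality choose $\beta$ with $V_\beta\subseteq O_\alpha$. Then
\[
\overline{V_\beta}^{\,w^*}\subseteq\overline{O_\alpha}^{\,w^*},
\]
so
\[
\bigcap_{\beta'}\overline{V_{\beta'}}^{\,w^*}
\subseteq\overline{V_\beta}^{\,w^*}
\subseteq\overline{O_\alpha}^{\,w^*}.
\]
Since $\alpha$ was arbitrary, intersecting over all $\alpha$ gives $\bigcap_\beta\overline{V_\beta}^{\,w^*}\subseteq\bigcap_\alpha\overline{O_\alpha}^{\,w^*}$.

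\emph{Second inclusion.} This is the same argument with the roles of the two families exchanged. For each $\beta$, cofinality supplies some $\alpha$ with $O_\alpha\subseteq V_\beta$, hence $\overline{O_\alpha}^{\,w^*}\subseteq\overline{V_\beta}^{\,w^*}$. Arguing as above yields the reverse inclusion, and so the two intersections are equal.

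\emph{Consequence.} If both intersections are singletons, equality of the sets means equality of their unique elements. So the limiting states coincide. In the situation of refinement families, Theorem~\ref{thm:reduction} already guarantees singletons, and the limits are then the same state.

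Nothing here is a genuine obstacle. The only point needing care is the convention: "cofinal under reverse inclusion" must be read in the two-sided form of Definition~\ref{def:parceleq}, meaning each set of one family contains some set of the other. Neither the decreasing property nor directedness of the index sets is used.
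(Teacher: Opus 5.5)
Your proof is correct and follows essentially the same route as the paper's. Both use a double inclusion built from the two-sided cofinality condition and monotonicity of weak$^*$ closure, and the singleton consequence then follows immediately.
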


\begin{proof}
For fixed $\beta$, choose $\alpha$ with $O_\alpha\subseteq V_\beta$;
then $\overline{O_\alpha}\subseteq\overline{V_\beta}$. Hence every point
of $\bigcap_\alpha\overline{O_\alpha}^{\,w^*}$ lies in every
$\overline{V_\beta}$. The reverse inclusion follows identically from the
other cofinality condition.
\end{proof}

\begin{theorem}[Parcel equivalence implies vacuum GNS equivalence]
\label{thm:parceleq}
Let $(\mathfrak A_1,\mathcal V_1)$ and
$(\mathfrak A_2,\mathcal V_2)$ be parcel-equivalent AQFT models, with
limiting vacuum states $\omega_1,\omega_2$, and let
$\Phi:\A_1\to\A_2$ implement the parcel equivalence. Then
\[
\omega_1=\omega_2\circ\Phi.
\]
Consequently their vacuum GNS representations are unitarily equivalent:
there is a unitary $U:\mathcal H_1\to\mathcal H_2$ such that
\[
U\Omega_1=\Omega_2,
\qquad
U\pi_1(A)U^{-1}=\pi_2(\Phi(A)),
\quad A\in\A_1,
\]
and therefore
\[
U\pi_1(\A_1(\mathcal O))U^{-1}
=
\pi_2(\A_2(\mathcal O))
\]
for every region $\mathcal O$.
\end{theorem}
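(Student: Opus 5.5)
The plan has two stages. First I identify the limiting states using Lemma~\ref{lem:cofinal-refinements}. Then I invoke uniqueness of the GNS representation.

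\emph{Step 1: the transported family is a refinement family with limit $\omega_1\circ\Phi^{-1}$.} The map $\Phi_*:\mathcal S(\A_1)\to\mathcal S(\A_2)$, $\omega\mapsto\omega\circ\Phi^{-1}$, is affine and bijective. It is a weak$^*$ homeomorphism, since evaluation at $B\in\A_2$ pulls back to evaluation at $\Phi^{-1}(B)$. Hence $\Phi_*$ maps parcels to parcels, preserves decreasing order, and commutes with weak$^*$ closures. Therefore
\[
\bigcap_\alpha\overline{\Phi_*(O^1_\alpha)}^{\,w^*}
=\Phi_*\Bigl(\bigcap_\alpha\overline{O^1_\alpha}^{\,w^*}\Bigr)
=\{\omega_1\circ\Phi^{-1}\},
\]
using Theorem~\ref{thm:reduction} for $\mathcal V_1$. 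The oscillation hypothesis also transports: $\operatorname{osc}_{\Phi_*(O)}(\Phi(A))=\operatorname{osc}_O(A)$, and $\Phi(\A_0)$ is a norm-dense $*$-subalgebra of $\A_2$. Strictly, though, only the intersection identity is needed.

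\emph{Step 2: equate the limits.} By Definition~\ref{def:parceleq}, $\Phi_*(\mathcal V_1)$ and $\mathcal V_2$ are cofinal under reverse inclusion. Lemma~\ref{lem:cofinal-refinements} then gives equal closure-intersections. Both intersections are singletons, namely $\{\omega_1\circ\Phi^{-1}\}$ and $\{\omega_2\}$. Hence $\omega_2=\omega_1\circ\Phi^{-1}$, that is, $\omega_1=\omega_2\circ\Phi$.

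\emph{Step 3: build the unitary from GNS uniqueness.} Consider $(\pi_2\circ\Phi,\mathcal H_2,\Omega_2)$. It is a representation of $\A_1$, and it is cyclic because $\Phi$ is surjective, so $\pi_2(\Phi(\A_1))\Omega_2=\pi_2(\A_2)\Omega_2$ is dense. Its vector state satisfies $\langle\Omega_2,\pi_2(\Phi(A))\Omega_2\rangle=\omega_2(\Phi(A))=\omega_1(A)$. Uniqueness of the GNS triple up to unitary equivalence then yields the unitary. Concretely, define $U\pi_1(A)\Omega_1:=\pi_2(\Phi(A))\Omega_2$. This is isometric since
\[
\|\pi_2(\Phi(A))\Omega_2\|^2=\omega_2(\Phi(A^*A))=\omega_1(A^*A)=\|\pi_1(A)\Omega_1\|^2,
\]
so $U$ is well defined and extends by density to a unitary onto $\mathcal H_2$. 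It satisfies $U\Omega_1=\Omega_2$ and intertwines $\pi_1(A)$ with $\pi_2(\Phi(A))$. Finally, $\Phi(\A_1(\mathcal O))=\A_2(\mathcal O)$ gives the regionwise identity $U\pi_1(\A_1(\mathcal O))U^{-1}=\pi_2(\A_2(\mathcal O))$.

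\emph{Expected obstacle.} No step is deep. The only care needed is in Step~1: checking that $\Phi_*$ commutes with weak$^*$ closures and intersections, which follows from its being a homeomorphism, so that Lemma~\ref{lem:cofinal-refinements} can be applied to the transported family.
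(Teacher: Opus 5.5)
Your proposal is correct and follows essentially the same route as the paper. Both transport $\mathcal V_1$ along the weak$^*$ homeomorphism $\Phi_*$ to obtain the limiting state $\omega_1\circ\Phi^{-1}$, use cofinality and Lemma~\ref{lem:cofinal-refinements} to identify it with $\omega_2$, and conclude via GNS uniqueness together with $\Phi(\A_1(\mathcal O))=\A_2(\mathcal O)$. You simply spell out details the paper leaves implicit: that $\Phi_*$ commutes with weak$^*$ closures and intersections, and the explicit construction of $U$.
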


\begin{proof}
The induced state-space homeomorphism
$\Phi_*(\omega)=\omega\circ\Phi^{-1}$ sends the limiting state
$\omega_1$ of $\mathcal V_1$ to the limiting state
$\omega_1\circ\Phi^{-1}$ of $\Phi_*(\mathcal V_1)$. By parcel
equivalence, $\Phi_*(\mathcal V_1)$ and $\mathcal V_2$ are cofinal, so
Lemma~\ref{lem:cofinal-refinements} gives
$\omega_1\circ\Phi^{-1}=\omega_2$, equivalently
$\omega_1=\omega_2\circ\Phi$.

The usual uniqueness of the GNS construction now gives a unitary $U$
with $U\Omega_1=\Omega_2$ and
$U\pi_1(A)U^{-1}=\pi_2(\Phi(A))$. Since $\Phi$ preserves the local net,
the final assertion follows.
\end{proof}

The converse need not hold: two refinement families may converge to the
same vacuum state without being cofinal. Parcel equivalence therefore
retains finite-precision information that is absent from equivalence of
the limiting vacuum representations.

\begin{corollary}[Parcel form of a Haag-type obstruction]
\label{thm:haag}
If, under the hypotheses of a Haag-type result, two AQFT models cannot be
related by a vacuum-preserving unitary equivalence of their local nets,
then they cannot be parcel equivalent.
\end{corollary}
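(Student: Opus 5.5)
The corollary is the contrapositive of Theorem~\ref{thm:parceleq}, so I would argue by contradiction. Suppose two AQFT models $(\mathfrak A_1,\mathcal V_1)$ and $(\mathfrak A_2,\mathcal V_2)$, with vacuum refinement families $\mathcal V_1,\mathcal V_2$, are parcel equivalent via a net-preserving $*$-isomorphism $\Phi:\A_1\to\A_2$. The goal is to produce exactly the vacuum-preserving unitary equivalence of local nets that the Haag-type hypothesis excludes.

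First I identify the limiting states. Since $\mathcal V_1$ and $\mathcal V_2$ are vacuum refinement families, Theorem~\ref{thm:reduction} shows that their limiting states $\omega_{\mathcal V_1}$ and $\omega_{\mathcal V_2}$ are the distinguished vacua $\omega_1,\omega_2$. This matters: parcel equivalence alone only compares the limits of the refinement families, and the definition of a vacuum refinement family is what ties those limits to the vacua.

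Second, I apply Theorem~\ref{thm:parceleq}. It gives $\omega_1=\omega_2\circ\Phi$ and a unitary $U:\mathcal H_1\to\mathcal H_2$ with
\[
U\Omega_1=\Omega_2,\qquad U\pi_1(A)U^{-1}=\pi_2(\Phi(A)),\qquad U\pi_1(\A_1(\mathcal O))U^{-1}=\pi_2(\A_2(\mathcal O))
\]
for every region $\mathcal O$. This is a unitary equivalence of the represented local nets that maps vacuum vector to vacuum vector.

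Third, I would check that this is the notion ruled out by the Haag-type result. Such results are typically phrased as the non-existence of a unitary $U$ carrying one represented local net (or the field algebras at a time) onto the other and mapping vacuum to vacuum. Applying $\Ad U$ to the weak closures extends the equivalence to the local von Neumann algebras $\pi_i(\A_i(\mathcal O))''$, so whichever level the obstruction is stated at is covered.

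The main obstacle is this last, interpretive step: matching the formal notion of ``vacuum-preserving unitary equivalence of local nets'' in the Haag-type hypothesis with the conclusion of Theorem~\ref{thm:parceleq}. Since the corollary is stated schematically, I would fix the notion to be exactly the displayed one, so that the conclusion of Theorem~\ref{thm:parceleq} contradicts the hypothesis and the proof closes.
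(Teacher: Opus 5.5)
Your proposal is correct and matches the paper's proof: the corollary is the contrapositive of Theorem~\ref{thm:parceleq}, whose conclusion is precisely a vacuum-preserving unitary equivalence of the represented local nets. One minor point: the limiting states are the vacua by the definition of a vacuum refinement family, and Theorem~\ref{thm:reduction} supplies only the uniqueness of each limit.
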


\begin{proof}
Parcel equivalence would imply precisely such a vacuum-preserving unitary
equivalence by Theorem~\ref{thm:parceleq}.
\end{proof}

This clarifies what finite precision does, and does not, change about
Haag's theorem. One might hope that replacing exact vacuum states by
finite-precision neighbourhoods could remove an obstruction that is
formulated at the level of exact representations. The result shows that
this is not so for parcel equivalence as defined here: equivalence of the
complete finite-precision vacuum refinement structures already forces
equivalence of their limiting vacuum GNS representations. Haag-type
obstructions therefore survive this strong notion of finite-precision
equivalence. Whether weaker notions based only on finitely many
observations or finite stages of refinement can be useful remains a
separate question.

\section{Tomita--Takesaki Theory in IAQFT}
\label{sec:modular}
Let $\mathcal M$ be a von Neumann algebra and let $\omega$ be a faithful
normal state, meaning that
\[
\omega(A^*A)=0
\quad\Longrightarrow\quad
A=0
\qquad(A\in\mathcal M).
\]
In the GNS representation
$(\pi_\omega,\mathcal H_\omega,\Omega_\omega)$, the vector
$\Omega_\omega$ is cyclic and separating for $\pi_\omega(\mathcal M)$.
The Tomita operator is the closure of the densely defined antilinear
operator
\[
S_\omega\,
\pi_\omega(A)\Omega_\omega
=
\pi_\omega(A)^*\Omega_\omega,
\qquad A\in\mathcal M.
\]
Its polar decomposition is
\[
S_\omega
=
J_\omega\Delta_\omega^{1/2},
\]
where $\Delta_\omega$ is the positive self-adjoint modular operator and
$J_\omega$ is the antiunitary modular conjugation. The modular operator
generates the modular automorphism group through
\[
\pi_\omega\!\left(\sigma_t^\omega(A)\right)
=
\Delta_\omega^{it}\pi_\omega(A)\Delta_\omega^{-it},
\qquad
A\in\mathcal M,\quad t\in\mathbb R,
\]
while modular conjugation exchanges the represented algebra with its
commutant:
\[
J_\omega\pi_\omega(\mathcal M)J_\omega
=
\pi_\omega(\mathcal M)'.
\]
Thus $\Delta_\omega$ determines the canonical modular dynamics associated
with the state $\omega$, whereas $J_\omega$ implements the corresponding
algebra--commutant relation. We examine how these structures act on
finite-precision state information.

For each $t$, modular evolution acts on states by pullback,
$(\sigma_t^{\omega *}\rho)(A)=\rho(\sigma_t^\omega(A))$.

\begin{proposition}[Modular evolution of parcels]
\label{prop:modularliftsparcel}
For every $t\in\mathbb R$, $\sigma_t^{\omega *}$ is an affine weak$^*$
homeomorphism of $\mathcal S(\mathcal M)$ which preserves
$\mathcal S^{\sf n}(\mathcal M)$. Consequently it maps parcels to parcels
and induces a one-parameter group of parcel homeomorphisms.
\end{proposition}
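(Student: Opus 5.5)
The plan is to verify each property directly from the fact that $\sigma_t^\omega$ is a $*$-automorphism of $\mathcal M$ with inverse $\sigma_{-t}^\omega$ and with $\sigma_s^\omega\sigma_t^\omega=\sigma_{s+t}^\omega$.

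\emph{The pullback maps states to states.} Write $T_t:=\sigma_t^{\omega *}$. For a state $\rho$ we have $T_t\rho(A^*A)=\rho(\sigma_t^\omega(A)^*\sigma_t^\omega(A))\geq0$. Unitality of $\sigma_t^\omega$ gives $T_t\rho(I)=1$. So $T_t$ maps $\mathcal S(\mathcal M)$ into itself.

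\emph{Affinity, group law, continuity.} Affinity is immediate because $T_t$ is the restriction of a linear map on $\mathcal M^*$. The group law $T_sT_t=T_{s+t}$, with $T_0=\mathrm{id}$, follows from the automorphism group law; note the order of composition commutes here since the group is abelian. In particular $T_t$ is a bijection with inverse $T_{-t}$. For weak$^*$ continuity, fix $A\in\mathcal M$. The map $\rho\mapsto T_t\rho(A)=\rho(\sigma_t^\omega(A))$ is evaluation at the fixed element $\sigma_t^\omega(A)$, hence weak$^*$ continuous. Since the weak$^*$ topology is initial for all evaluations, $T_t$ is continuous, and the same holds for $T_{-t}$. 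So $T_t$ is a homeomorphism.

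\emph{Preservation of normal states.} This is the one step needing a nontrivial input. I would use that $\sigma_t^\omega$ is implemented spatially, $\pi_\omega(\sigma_t^\omega(A))=\Delta_\omega^{it}\pi_\omega(A)\Delta_\omega^{-it}$. Here $\pi_\omega$ is faithful and normal, so it is an isomorphism onto the von Neumann algebra $\pi_\omega(\mathcal M)$ that is a homeomorphism for the $\sigma$-weak topologies. Conjugation by a unitary is $\sigma$-weakly continuous. Therefore $\sigma_t^\omega$ is $\sigma$-weakly continuous. Alternatively, every $*$-automorphism of a von Neumann algebra is automatically normal, because it preserves suprema of bounded increasing nets of positive elements, being an order isomorphism. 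Either route shows that if $\rho$ is normal then $\rho\circ\sigma_t^\omega$ is normal, and likewise for $-t$. Hence $T_t$ restricts to a bijection of $\Sn(\mathcal M)$. It is also a homeomorphism for $\sigma(\mathcal M_*,\mathcal M)$, by the same evaluation argument as before.

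\emph{Parcels go to parcels.} Let $O$ be a parcel, in $\mathcal S(\mathcal M)$ or, in the normal setting, in $\Sn(\mathcal M)$. Then $T_t(O)$ is open because $T_t$ is a homeomorphism, convex because $T_t$ is affine, and non-empty because $O$ is. Combined with the group law, $t\mapsto T_t$ is a one-parameter group of parcel homeomorphisms.

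The only point requiring care is the normality step. Everything else is formal, and I expect no real obstacle.
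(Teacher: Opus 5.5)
Your proposal is correct and follows essentially the same route as the paper: pullback by the $*$-automorphism $\sigma_t^\omega$ preserves states, normality and convex combinations, is weak$^*$ continuous because it is evaluation at the fixed element $\sigma_t^\omega(A)$, and has inverse $\sigma_{-t}^{\omega *}$, so it maps non-empty open convex sets to non-empty open convex sets. The paper simply asserts that $\sigma_t^\omega$ is normal, whereas you justify this via spatial implementation or the automatic normality of $*$-automorphisms; that is a welcome but inessential addition.
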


\begin{proof}
Pullback by the normal $*$-automorphism $\sigma_t^\omega$ preserves states,
normality and convex combinations. It is weak$^*$ continuous because
$\rho_\lambda\to\rho$ implies
$\rho_\lambda(\sigma_t^\omega(A))\to\rho(\sigma_t^\omega(A))$ for every
$A\in\mathcal M$, and its inverse is $\sigma_{-t}^{\omega *}$. Hence it
maps non-empty open convex sets to non-empty open convex sets.
\end{proof}

Modular evolution is also covariant with selective measurement. Let $f_j$
be the update associated with $E_j=M_j^*M_j$ and define
$\widetilde M_j=\sigma_{-t}^\omega(M_j)$,
$\widetilde E_j=\sigma_{-t}^\omega(E_j)$, with corresponding update
$\widetilde f_j$.

\begin{proposition}[Modular covariance of measurement]
\label{prop:covariancemodular}
For every $\rho$ with $\rho(E_j)>0$,
\[
\sigma_t^{\omega *}(f_j(\rho))
=
\widetilde f_j(\sigma_t^{\omega *}\rho).
\]
The same identity holds pointwise for parcels. If $M_j$ belongs to the
centralizer
\[
\mathcal M_\omega
=
\{A\in\mathcal M:\sigma_t^\omega(A)=A\text{ for all }t\},
\]
then $\sigma_t^{\omega *}\circ f_j=f_j\circ\sigma_t^{\omega *}$.
\end{proposition}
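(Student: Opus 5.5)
The plan is to verify the identity by evaluating both sides on an arbitrary $A\in\mathcal M$ and using only that $\sigma_t^\omega$ is a $*$-automorphism with inverse $\sigma_{-t}^\omega$. First I check that the right-hand side is defined, i.e. that $\sigma_t^{\omega*}\rho$ lies in the domain of $\widetilde f_j$:
\[
(\sigma_t^{\omega*}\rho)(\widetilde E_j)=\rho\bigl(\sigma_t^\omega(\sigma_{-t}^\omega(E_j))\bigr)=\rho(E_j)>0 .
\]
So the normalisations on both sides coincide. I also record that $\widetilde M_j^*\widetilde M_j=\sigma_{-t}^\omega(M_j^*M_j)=\widetilde E_j$, so $\widetilde f_j$ is indeed the update for the transported effect.

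Next I compute the left-hand side:
\[
\sigma_t^{\omega*}(f_j(\rho))(A)=f_j(\rho)(\sigma_t^\omega(A))=\frac{\rho\bigl(M_j^*\sigma_t^\omega(A)M_j\bigr)}{\rho(E_j)}.
\]
I then compute the right-hand side:
\[
\widetilde f_j(\sigma_t^{\omega*}\rho)(A)=\frac{\rho\bigl(\sigma_t^\omega(\widetilde M_j^*A\widetilde M_j)\bigr)}{\rho(E_j)}.
\]
Multiplicativity and $*$-preservation give
\[
\sigma_t^\omega(\widetilde M_j^*A\widetilde M_j)=\sigma_t^\omega(\widetilde M_j)^*\,\sigma_t^\omega(A)\,\sigma_t^\omega(\widetilde M_j),
\]
and $\sigma_t^\omega(\widetilde M_j)=M_j$ by the group law. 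The two numerators therefore agree, which proves the pointwise identity.

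The parcel statement follows by applying this pointwise over $O\cap\{\rho:\rho(E_j)>0\}$. Since $\sigma_t^{\omega*}$ maps this set bijectively onto $\sigma_t^{\omega*}(O)\cap\{\rho:\rho(\widetilde E_j)>0\}$, we get $\sigma_t^{\omega*}(O')=\widetilde f_j(\sigma_t^{\omega*}O)'$ in the sense of Definition~\ref{def:update}.

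For the centralizer case, $M_j\in\mathcal M_\omega$ gives $\widetilde M_j=M_j$ and $\widetilde E_j=E_j^{}$, hence $\widetilde f_j=f_j$. The commutation $\sigma_t^{\omega*}\circ f_j=f_j\circ\sigma_t^{\omega*}$ then follows from the general identity.

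There is no real obstacle here. The only point needing care is domain bookkeeping: checking that the positivity condition $\rho(E_j)>0$ is transported correctly so that both sides are defined on the same set. That is exactly the computation done in the first step.
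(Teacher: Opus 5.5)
Your proof is correct and matches the paper's: both evaluate the two sides on an arbitrary $A\in\mathcal M$ and use the $*$-automorphism property together with $\sigma_t^\omega(\widetilde M_j)=M_j$ and $\sigma_t^\omega(\widetilde E_j)=E_j$. Your explicit check that $\sigma_t^{\omega*}$ maps $\{\rho:\rho(E_j)>0\}$ bijectively onto $\{\eta:\eta(\widetilde E_j)>0\}$ makes precise the domain bookkeeping that the paper leaves implicit in its phrase ``the parcel identity follows pointwise''.
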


\begin{proof}
For $A\in\mathcal M$,
\[
\begin{aligned}
\widetilde f_j(\sigma_t^{\omega *}\rho)(A)
&=
\frac{
(\sigma_t^{\omega *}\rho)
(\widetilde M_j^*A\widetilde M_j)}
{(\sigma_t^{\omega *}\rho)(\widetilde E_j)}
&&
\text{by definition of $\widetilde f_j$}
\\[1ex]
&=
\frac{
\rho\!\left(
\sigma_t^\omega(\widetilde M_j)^*
\sigma_t^\omega(A)
\sigma_t^\omega(\widetilde M_j)
\right)}
{\rho(\sigma_t^\omega(\widetilde E_j))}
&&
\text{by definition of $\sigma_t^{\omega *}$ \&
$*$-automorphism property}
\\[1ex]
&=
\frac{\rho(M_j^*\sigma_t^\omega(A)M_j)}
{\rho(E_j)}
&&
\text{since $\sigma_t^\omega(\widetilde M_j)=M_j$ and
$\sigma_t^\omega(\widetilde E_j)=E_j$}
\\[1ex]
&=
\sigma_t^{\omega *}(f_j(\rho))(A)
&&
\text{by definitions of $f_j$ and $\sigma_t^{\omega *}$.}
\end{aligned}
\]
The parcel identity follows pointwise, and the final assertion follows
when $\widetilde M_j=M_j$.
\end{proof}

The modular conjugation gives a different structure. Tomita--Takesaki
theory gives $J_\omega\mathcal M J_\omega=\mathcal M'$. Since
$B\mapsto J_\omega BJ_\omega$ is conjugate-linear, define instead the
linear $*$-anti-isomorphism
\[
\Theta_J:\mathcal M'\longrightarrow\mathcal M,
\qquad
\Theta_J(B)=J_\omega B^*J_\omega.
\]
For $\rho\in\mathcal S^{\sf n}(\mathcal M)$ set
$\rho^J:=\rho\circ\Theta_J$.

\begin{proposition}[Modular conjugation induces parcel duality]
\label{prop:Jparcel}
The map
\[
\Phi_J:\mathcal S^{\sf n}(\mathcal M)
\longrightarrow
\mathcal S^{\sf n}(\mathcal M'),
\qquad
\rho\longmapsto\rho^J,
\]
is an affine homeomorphism for the relative weak$^*$ topologies, whose
inverse is the same construction with $\mathcal M$ and $\mathcal M'$
interchanged. Consequently it gives a bijection between normal parcels on
$\mathcal M$ and on $\mathcal M'$.
\end{proposition}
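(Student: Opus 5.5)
First I check that $\Theta_J$ is a normal, unital, positive linear bijection. Linearity holds because the two conjugate-linear operations cancel: $B\mapsto B^*$ is conjugate-linear, and so is $X\mapsto J_\omega XJ_\omega$, so the composite is linear. By Tomita--Takesaki, $J_\omega\mathcal M'J_\omega=\mathcal M$, so $\Theta_J$ maps $\mathcal M'$ into $\mathcal M$. Since $J_\omega^2=I$, applying the same construction from $\mathcal M$ to $\mathcal M'$, namely $\Theta_J'(A)=J_\omega A^*J_\omega$, gives a two-sided inverse. The map is anti-multiplicative, $\Theta_J(B_1B_2)=\Theta_J(B_2)\Theta_J(B_1)$, and it commutes with $*$. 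It is unital. It is positive, because
\[
\Theta_J(B^*B)=J_\omega B^*BJ_\omega=(J_\omega BJ_\omega)^*(J_\omega BJ_\omega)\geq0.
\]
Finally, it is ultraweakly continuous, because conjugation by an antiunitary and the adjoint map are both ultraweakly continuous.

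Next I show that $\rho\mapsto\rho^J$ maps $\mathcal S^{\sf n}(\mathcal M)$ into $\mathcal S^{\sf n}(\mathcal M')$. For a normal state $\rho$ on $\mathcal M$, the composite $\rho^J=\rho\circ\Theta_J$ is linear, positive by the previous step, satisfies $\rho^J(I)=\rho(I)=1$, and is normal as a composite of normal maps. There is one point to check here: on a $*$-anti-isomorphism, positivity of $\rho\circ\Theta_J$ needs only that positive elements go to positive elements, and anti-multiplicativity plays no role. The analogous statement with $\mathcal M$ and $\mathcal M'$ interchanged gives a map back. Since $\Theta_J'\circ\Theta_J=\mathrm{id}$, we get $(\rho^J)^{J'}=\rho$, and likewise in the other order. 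So $\Phi_J$ is a bijection with the stated inverse, and it is affine because composition with a fixed linear map preserves convex combinations.

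It remains to prove continuity. If $\rho_\lambda\to\rho$ in $\sigma(\mathcal M_*,\mathcal M)$, then for every $B\in\mathcal M'$,
\[
\rho_\lambda^J(B)=\rho_\lambda(\Theta_J(B))\to\rho(\Theta_J(B))=\rho^J(B),
\]
because $\Theta_J(B)$ is a fixed element of $\mathcal M$. The same argument applies to the inverse. Hence $\Phi_J$ is an affine homeomorphism. The parcel correspondence then follows exactly as in Proposition~\ref{prop:modularliftsparcel}: an affine homeomorphism maps non-empty convex open sets to non-empty convex open sets, and so does its inverse. As a byproduct, observational parcels defined by constraints $a_i<\rho(A_i)<b_i$ correspond to observational parcels defined by the constraints $a_i<\rho^J(\Theta_J^{-1}(A_i))<b_i$, and $\Theta_J^{-1}(A_i)$ is self-adjoint whenever $A_i$ is.

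I expect the main obstacle to be bookkeeping rather than substance, namely getting the first step right. Conjugation by $J_\omega$ alone is conjugate-linear, so $\rho\circ(J_\omega\,\cdot\,J_\omega)$ would be antilinear and not a state. This is why the adjoint is inserted, at the price of $\Theta_J$ being an anti-isomorphism rather than an isomorphism. One must then confirm that states only need linearity, positivity and unitality, so the anti-multiplicativity is harmless. One must also confirm that $J_\omega$ can be applied to $\mathcal M'$ in the identification of $\mathcal M$ with $\pi_\omega(\mathcal M)$. In that identification $\mathcal M'$ is understood as $\pi_\omega(\mathcal M)'$.
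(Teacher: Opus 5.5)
Your proof is correct and follows essentially the same route as the paper. You verify that $\Theta_J$ is a unital, positive, ultraweakly continuous linear $*$-anti-isomorphism, obtain weak$^*$ continuity of $\Phi_J$ by evaluating at the fixed element $\Theta_J(B)$, and get the inverse from $J_\omega^2=I$, with more detail than the paper's terse argument (linearity, positivity, and the identification $\mathcal M'=\pi_\omega(\mathcal M)'$).
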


\begin{proof}
The map $\Theta_J$ is a unital, positive, ultraweakly continuous
$*$-anti-isomorphism, so composition with it sends normal states on
$\mathcal M$ to normal states on $\mathcal M'$. It follows immediately
from
\[
\rho^J(B)=\rho(J_\omega B^*J_\omega)
\]
that $\Phi_J$ is affine and weak$^*$ continuous. Since $J_\omega^2=I$,
applying the construction twice gives $(\rho^J)^J=\rho$; hence the inverse
is continuous by the same argument. The resulting affine homeomorphism
maps normal parcels bijectively to normal parcels.
\end{proof}

The modular constructions above require a faithful normal reference state
but not a vacuum state. In particular, if $\mathcal M=\mathcal M(\mathcal
O)$ is a local algebra in the vacuum representation and Haag duality
$\mathcal M(\mathcal O)'=\mathcal M(\mathcal O')$ holds, the
$J_\omega$-duality gives a bijection between normal parcels on
$\mathcal M(\mathcal O)$ and $\mathcal M(\mathcal O')$. This is an
algebraic parcel duality, not a claim about measurement outcomes in the
causal complement; operational no-signalling remains governed by
Proposition~\ref{prop:nosignalling}. Section~\ref{sec:kms} applies the
modular framework to finite-precision KMS structure.
\section{KMS Condition}
\label{sec:kms}

We now formulate the KMS condition in finite-precision terms. The main
issue is that the modular dynamics $\sigma_t^\psi$ itself depends on
the state $\psi$. The relative-modular continuity result proved in
Appendix~\ref{sec:kms-continuity-proof} shows that this dependence is
continuous, in the sense required below, under predual-norm convergence
of faithful normal states. 

Let $\mathcal M$ be a $\sigma$-finite von Neumann algebra, meaning that
it admits a faithful normal state, and let
$\mathcal S_f(\mathcal M)$ denote its faithful normal states. This set is
non-empty and norm-dense in $\mathcal S^{\sf n}(\mathcal M)$
\cite[Prop.~III.2.2]{Takesaki1}; hence every normal parcel $O$ has its faithful part
$O_f:=O\cap\mathcal S_f(\mathcal M)\neq\varnothing$. For
$\psi\in O_f$, Tomita--Takesaki theory supplies the modular group
$\sigma_t^\psi$ and the usual KMS boundary values
\[
F_\psi^{A,B}(t)=\psi(A\sigma_t^\psi(B)),
\qquad
F_\psi^{A,B}(t+i)=\psi(\sigma_t^\psi(B)A).
\]

\begin{definition}[Modularly continuous parcel]
\label{def:modcontinuous}
A normal parcel $O$ is \emph{modularly continuous} if, for every
$A,B\in\mathcal M$ and $t\in\mathbb R$, the map
\[
\psi\longmapsto\psi(A\sigma_t^\psi(B))
\]
is continuous on $O_f$ in the predual norm.
\end{definition}

\begin{theorem}[Modular continuity of normal parcels]
\label{thm:modular-continuity}
Every normal parcel is modularly continuous.
\end{theorem}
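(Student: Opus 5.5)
The plan is to prove the stronger statement that, for fixed $A,B\in\mathcal M$ and $t\in\mathbb R$, the map $\psi\mapsto\psi(A\sigma_t^\psi(B))$ is predual-norm continuous on all of $\mathcal S_f(\mathcal M)$. The parcel $O$ then plays no role beyond restriction to $O_f\subseteq\mathcal S_f(\mathcal M)$. Since the predual norm is a metric, sequential continuity suffices. So fix $\psi_n\to\psi$ in $\mathcal S_f(\mathcal M)$ in norm.

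\emph{Step 1: move to standard form.} Work in a standard form $(\mathcal M,\mathcal H,J,\mathcal P^\natural)$, using a fixed faithful normal state as reference. Each $\varphi\in\mathcal S_f(\mathcal M)$ has a unique representing vector $\xi_\varphi\in\mathcal P^\natural$, which is cyclic and separating. The Araki--Powers--St{\o}rmer inequality
\[
\|\xi_{\psi_n}-\xi_\psi\|^2\leq\|\psi_n-\psi\|
\]
then gives $\xi_{\psi_n}\to\xi_\psi$ in Hilbert-space norm. Since $\Delta_\psi^{it}\xi_\psi=\xi_\psi$ and $\Delta_\psi^{it}$ implements $\sigma_t^\psi$, we have
\[
\psi(A\sigma_t^\psi(B))
=\langle\xi_\psi,A\Delta_\psi^{it}B\Delta_\psi^{-it}\xi_\psi\rangle
=\langle A^*\xi_\psi,\Delta_\psi^{it}B\xi_\psi\rangle .
\]
The vectors $A^*\xi_{\psi_n}$ and $B\xi_{\psi_n}$ converge in norm. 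The unitaries $\Delta_{\psi_n}^{it}$ have norm one. Hence it suffices to prove
\[
\Delta_{\psi_n}^{it}\eta\to\Delta_\psi^{it}\eta
\qquad\text{strongly, for each }\eta\in\mathcal H .
\]

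\emph{Step 2: strong convergence of the modular unitaries.} By the Trotter--Kato theorem, it is enough to show that $\Delta_{\psi_n}\to\Delta_\psi$ in the strong resolvent sense, or equivalently that $\log\Delta_{\psi_n}\to\log\Delta_\psi$ in that sense. I would obtain this via the closed forms $\|\Delta_\varphi^{1/2}x\xi_\varphi\|^2=\|x^*\xi_\varphi\|^2$. For $x\in\mathcal M$ we have $x\xi_{\psi_n}\to x\xi_\psi$ and $S_{\psi_n}(x\xi_{\psi_n})=x^*\xi_{\psi_n}\to x^*\xi_\psi$. So the graphs of the Tomita operators $S_{\psi_n}$ converge on the common core parametrised by $\mathcal M$. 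Uniqueness of the closure of $S_\psi$ on $\mathcal M\xi_\psi$ should then identify the limit. That identification yields strong graph convergence of $S_{\psi_n}$ to $S_\psi$, hence of $\Delta_{\psi_n}^{1/2}=J S_{\psi_n}$ (the modular conjugation of a natural-cone vector is the fixed $J$), and hence strong resolvent convergence of the positive operators $\Delta_{\psi_n}$.

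An alternative route, which I would keep in reserve, uses Connes' $2\times2$ matrix trick. Consider the balanced weight $\psi_n\oplus\psi$ on $\mathcal M\otimes M_2$ and the Connes cocycle $(D\psi_n:D\psi)_t$. Then $\sigma_t^{\psi_n}(B)=u_t\sigma_t^\psi(B)u_t^*$ with $u_t=(D\psi_n:D\psi)_t$. It would suffice to show $u_t\to1$ strongly, which again reduces to norm-to-strong continuity of relative modular operators (Araki--Masuda).

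\emph{Main obstacle.} The delicate step is Step 2: turning norm convergence of the cone vectors into strong resolvent convergence of unbounded modular operators. The difficulty is that the cores $\mathcal M\xi_{\psi_n}$ move with $n$, so standard core-based convergence theorems do not apply directly. I would first try the generalized-graph (Kato) criterion. This requires showing that every vector in the graph of $S_\psi$ is a limit of graph vectors of $S_{\psi_n}$, which the core computation above provides. It also requires a uniform-closedness argument to identify the strong graph limit with $S_\psi$ itself, rather than an extension of it. For the latter I would use that $S_\psi$ equals the adjoint of the corresponding commutant operator $F_\psi$, and run the same approximation argument for $F_{\psi_n}$ via $\mathcal M'\xi_{\psi_n}$.
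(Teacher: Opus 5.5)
Your proposal is correct. Like the paper, you reduce to sequences by metrizability and actually prove continuity on all of $\mathcal S_f(\mathcal M)$, but your main route differs from the paper's.

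The paper (Lemma~\ref{lem:cocycle_strong}) never uses $\Delta_{\psi_n}$. It works with the relative modular operators $\Delta_{\psi_n,\omega}$, for which $\Delta_{\psi_n,\omega}^{1/2}x\Omega_\omega=Jx^*\Omega_{\psi_n}$. The \emph{fixed} core $\mathcal M\Omega_\omega$ of $\Delta_\omega^{1/2}$ therefore lies in every domain, and convergence on it gives strong resolvent convergence $\Delta_{\psi_n,\omega}\to\Delta_\omega$ directly. The proof then uses $[D\psi_n:D\omega]_t=\Delta_{\psi_n,\omega}^{it}\Delta_\omega^{-it}\to I$ strongly, and concludes with $\sigma_t^{\psi_n}(B)=u_n\sigma_t^\omega(B)u_n^*$. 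That is exactly your reserve route. Fixing the right-hand state is precisely the device that removes the moving-core problem.

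Your main route keeps $\sigma_t^{\psi_n}=\operatorname{Ad}\Delta_{\psi_n}^{it}$ and avoids relative modular theory and cocycles, at the cost of the moving cores $\mathcal M\xi_{\psi_n}$. You overestimate that cost. Because $\Delta_{\psi_n}^{1/2}$ and $\Delta_\psi^{1/2}$ are self-adjoint, it suffices that every graph vector of $\Delta_\psi^{1/2}$ be a strong graph limit. Your core computation provides this, together with a diagonal argument using that $\mathcal M\xi_\psi$ is a core. Indeed, suppose $\phi_n\in D(\Delta_{\psi_n}^{1/2})$, $\phi_n\to\phi$ and $\Delta_{\psi_n}^{1/2}\phi_n\to\Delta_\psi^{1/2}\phi$. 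Then for $\eta=(\Delta_\psi^{1/2}-i)\phi$,
\[
(\Delta_{\psi_n}^{1/2}-i)^{-1}\eta-(\Delta_\psi^{1/2}-i)^{-1}\eta
=(\Delta_{\psi_n}^{1/2}-i)^{-1}\bigl[\eta-(\Delta_{\psi_n}^{1/2}-i)\phi_n\bigr]+\phi_n-\phi\longrightarrow0,
\]
and such $\eta$ exhaust $\mathcal H$.

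So your identification of the limit via $F_\psi$ is unnecessary. It does work: pass to the limit in $\langle y\xi_{\psi_n},S_{\psi_n}\phi_n\rangle=\langle\phi_n,y^*\xi_{\psi_n}\rangle$ for $y\in\mathcal M'$. The paper's step (iii) is the special case $\phi_n=\phi$ of the displayed computation.

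One small point should be made explicit. Passing from strong resolvent convergence of $\Delta_{\psi_n}$ to that of $\log\Delta_{\psi_n}$ uses that faithfulness makes the spectral projections at $0$ vanish, as the paper notes.
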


\begin{proof}
Let $\psi_n,\omega\in O_f$ with
\[
\|\psi_n-\omega\|_{\mathcal M_*}\longrightarrow0.
\]
Lemma~\ref{lem:cocycle_strong} gives, for every
$A,B\in\mathcal M$ and fixed $t\in\mathbb R$,
\[
\psi_n\bigl(A\sigma_t^{\psi_n}(B)\bigr)
\longrightarrow
\omega\bigl(A\sigma_t^\omega(B)\bigr).
\]
Thus the map
\[
\psi\longmapsto\psi(A\sigma_t^\psi(B))
\]
is sequentially continuous on $O_f$. Since the predual norm topology
is metrizable, it is continuous.
\end{proof}

For self-adjoint $A,B$ and $t\in\mathbb R$, define
\[
\begin{aligned}
\mathcal I_O(A,B,t)
&=\{\operatorname{Re}\psi(A\sigma_t^\psi(B)):\psi\in O_f\},&
\mathcal J_O(A,B,t)
&=\{\operatorname{Im}\psi(A\sigma_t^\psi(B)):\psi\in O_f\},\\
\mathcal K_O(A,B,t)
&=\{\operatorname{Re}\psi(\sigma_t^\psi(B)A):\psi\in O_f\},&
\mathcal L_O(A,B,t)
&=\{\operatorname{Im}\psi(\sigma_t^\psi(B)A):\psi\in O_f\}.
\end{aligned}
\]

\begin{proposition}[Parcel correlation intervals]
\label{prop:solid}
For every normal parcel $O$, $\mathcal I_O,\mathcal J_O,\mathcal K_O$
and $\mathcal L_O$ are intervals in $\mathbb R$.
\end{proposition}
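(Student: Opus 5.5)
The plan is to realise each of the four sets as the image of a connected set under a continuous real-valued map. A connected subset of $\mathbb R$ is an interval, so this suffices.

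By Theorem~\ref{thm:modular-continuity}, the map $\psi\mapsto\psi(A\sigma_t^\psi(B))$ is predual-norm continuous on $O_f$. Taking real and imaginary parts is continuous, so $\mathcal I_O$ and $\mathcal J_O$ are continuous images of $O_f$ with the norm topology.

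For $\mathcal K_O$ and $\mathcal L_O$, I use $\psi(X^*)=\overline{\psi(X)}$. With $A,B$ self-adjoint and $\sigma_t^\psi$ a $*$-automorphism,
\[
\psi(\sigma_t^\psi(B)A)=\overline{\psi(A\sigma_t^\psi(B))}.
\]
Hence $\mathcal K_O=\mathcal I_O$ and $\mathcal L_O=-\mathcal J_O$, and it is enough to treat $\mathcal I_O$ and $\mathcal J_O$.

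The key step, and the main obstacle, is showing that $O_f$ is connected in the predual norm. I will show it is convex, hence path-connected along line segments. Let $\psi_0,\psi_1\in O_f$ and $\lambda\in[0,1]$. The combination $\psi_\lambda=(1-\lambda)\psi_0+\lambda\psi_1$ lies in $O$ because $O$ is convex. It is a normal state, and it is faithful: if $\psi_\lambda(X^*X)=0$, then one of the weights is positive and the corresponding faithful state vanishes on $X^*X$, so $X=0$.

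The path $\lambda\mapsto\psi_\lambda$ is norm continuous, since $\|\psi_\lambda-\psi_\mu\|\le|\lambda-\mu|\,\|\psi_1-\psi_0\|$. Composing it with the continuous map of Theorem~\ref{thm:modular-continuity} gives a continuous real function on $[0,1]$. By the intermediate value theorem, this function takes every value between its values at $0$ and $1$.

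Since $\psi_0,\psi_1$ were arbitrary points of $O_f$, each of $\mathcal I_O,\mathcal J_O$ contains every value between any two of its points. It is therefore an interval, and so are $\mathcal K_O=\mathcal I_O$ and $\mathcal L_O=-\mathcal J_O$.

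Openness of $O$ is not needed here, since only convexity and faithfulness are used. Non-emptiness of $O_f$ was noted before Definition~\ref{def:modcontinuous}, so the intervals are non-empty; degenerate one-point intervals are allowed.
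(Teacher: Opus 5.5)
Your proof is correct and follows the paper's argument: $O_f$ is convex, hence path-connected, and Theorem~\ref{thm:modular-continuity} makes the real and imaginary parts continuous on it, so their images are intervals. The only difference is in handling $\mathcal K_O,\mathcal L_O$: the paper uses modular invariance to write $\psi(\sigma_t^\psi(B)A)=\psi(B\sigma_{-t}^\psi(A))$ and reruns the argument, whereas your adjoint identity uses self-adjointness of $A,B$ to obtain the sharper identifications $\mathcal K_O=\mathcal I_O$ and $\mathcal L_O=-\mathcal J_O$ directly.
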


\begin{proof}
The faithful part $O_f$ is convex: a convex combination of faithful
states is faithful. Hence $O_f$ is path connected. Modular continuity
therefore makes the real and imaginary parts of
$\psi(A\sigma_t^\psi(B))$ continuous real-valued functions on a connected
space, so their images are intervals. Since
\[
\psi(\sigma_t^\psi(B)A)
=
\psi(B\sigma_{-t}^\psi(A)),
\]
the same argument gives the other two intervals.
\end{proof}

The corresponding complex value sets are connected subsets of the
rectangles $\mathcal I_O\times\mathcal J_O$ and
$\mathcal K_O\times\mathcal L_O$; they need not fill those rectangles.

We now turn to reduction. Fix a faithful normal reference state
$\omega_0$ and its modular dynamics $\sigma_t^{\omega_0}$, and let
$\mathcal A_0\subseteq\mathcal M$ be a norm-dense $*$-subalgebra of
entire analytic elements for this dynamics.

\begin{definition}[KMS defect]
\label{def:kmsdefect}
For a normal parcel $O$, $A,B\in\mathcal A_0$ and $t\in\mathbb R$, define the {\em KMS defect} as 
\[
\delta_O(A,B,t)
=
\sup_{\rho\in O}
\left|
\rho(A\sigma_{t+i}^{\omega_0}(B))
-
\rho(\sigma_t^{\omega_0}(B)A)
\right|.
\]
\end{definition}

Thus $\delta_O$ measures uniformly across the parcel the failure of the
KMS boundary identity for the fixed dynamics $\sigma_t^{\omega_0}$.

\begin{theorem}[Parcel KMS Reduction]
\label{thm:kmsreduction}
Let $\{O_\alpha\}$ be a decreasing net of normal parcels and
$C_\alpha=\overline{O_\alpha}^{\,\sigma(\mathcal M_*,\mathcal M)}$.
Assume:
\begin{enumerate}
\item[\rm(i)] $\operatorname{osc}_{O_\alpha}(C)\to0$ for every
$C\in\mathcal A_0$;
\item[\rm(ii)] $\delta_{O_\alpha}(A,B,t)\to0$ for every
$A,B\in\mathcal A_0$ and $t\in\mathbb R$;
\item[\rm(iii)] $\bigcap_\alpha C_\alpha\neq\varnothing$.
\end{enumerate}
Then $\bigcap_\alpha C_\alpha=\{\omega_*\}$ for a unique normal state
$\omega_*$, and $\omega_*$ is KMS for
$(\mathcal M,\sigma_t^{\omega_0})$ at inverse temperature $\beta=1$.
\end{theorem}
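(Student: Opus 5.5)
The plan splits into two parts: a uniqueness argument in the style of Theorem~\ref{thm:reduction}, and a passage to the limit in the KMS boundary identity for the single limiting state.

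\textbf{Uniqueness.} Hypothesis (iii) gives non-emptiness directly, so we need not invoke weak$^*$ compactness of the normal state space, which is generally not compact. For uniqueness, take $\omega,\eta\in\bigcap_\alpha C_\alpha$ and $C\in\mathcal A_0$. The map $\rho\mapsto\rho(C)$ is $\sigma(\mathcal M_*,\mathcal M)$-continuous, so
\[
\operatorname{osc}_{C_\alpha}(C)=\operatorname{osc}_{O_\alpha}(C),
\]
and hence $|\omega(C)-\eta(C)|\le\operatorname{osc}_{O_\alpha}(C)\to0$ by (i). Thus $\omega=\eta$ on $\mathcal A_0$. Norm density of $\mathcal A_0$ in $\mathcal M$ and norm continuity of states then give $\omega=\eta$. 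Write $\omega_*$ for this unique state. It is normal because it lies in $C_\alpha\subseteq\mathcal S^{\sf n}(\mathcal M)$, the closure being taken inside the normal state space with the relative weak topology.

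\textbf{KMS identity.} Fix $A,B\in\mathcal A_0$ and $t\in\mathbb R$. Because $B$ is entire analytic, $\sigma_{t+i}^{\omega_0}(B)$ is a fixed element of $\mathcal M$, and so is $\sigma_t^{\omega_0}(B)$. Hence
\[
G(\rho):=\rho\bigl(A\sigma_{t+i}^{\omega_0}(B)\bigr)-\rho\bigl(\sigma_t^{\omega_0}(B)A\bigr)
\]
is a difference of two evaluations at fixed elements of $\mathcal M$, and is therefore weak$^*$ continuous. We get $\sup_{C_\alpha}|G|=\sup_{O_\alpha}|G|=\delta_{O_\alpha}(A,B,t)$. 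Since $\omega_*\in C_\alpha$ for every $\alpha$, $|G(\omega_*)|\le\delta_{O_\alpha}(A,B,t)\to0$ by (ii), so $\omega_*(A\sigma_{t+i}^{\omega_0}(B))=\omega_*(\sigma_t^{\omega_0}(B)A)$ for all $A,B\in\mathcal A_0$ and all $t$.

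\textbf{Main obstacle.} The remaining step is to show that this boundary identity on a norm-dense $*$-subalgebra of entire analytic elements is enough for $\omega_*$ to be $(\sigma^{\omega_0},1)$-KMS. I would invoke the standard characterization: for a $\sigma$-weakly continuous group and a normal state, the identity $\omega(A\sigma_{i}(B))=\omega(BA)$ for $A,B$ in a norm-dense, $\sigma$-invariant $*$-subalgebra of entire analytic elements is equivalent to the KMS condition (Bratteli--Robinson, Def.~5.3.1 and Cor.~5.3.7 / Prop.~5.3.7). Applying this requires two checks. First, $\mathcal A_0$ must be (or be replaceable by) a $\sigma^{\omega_0}$-invariant subalgebra. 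Second, the identity at $t=0$ must suffice; this is automatic because we have it for every $t$, and for invariant $\mathcal A_0$ we may substitute $B\mapsto\sigma_t(B)$. If invariance of $\mathcal A_0$ is not assumed, I would extend the identity from $\mathcal A_0$ to the usual smoothed algebra $\{\sqrt{n/\pi}\int e^{-ns^2}\sigma_s(X)\,ds\}$ by approximation. This is the delicate point, since $\sigma_{i}$ is unbounded and such an approximation needs uniform control of $\sigma_{t+i}(B_k)$. Absent that control, I would state the conclusion as the KMS condition relative to the given analytic core $\mathcal A_0$, which is what the definition of $\delta_O$ tests.
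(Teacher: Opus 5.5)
Your proposal is the paper's proof. You get uniqueness from vanishing oscillation on the closures, correctly taking non-emptiness from (iii) since the normal state space need not be compact. You get the bound $|\omega_*(A\sigma^{\omega_0}_{t+i}(B))-\omega_*(\sigma^{\omega_0}_t(B)A)|\le\delta_{O_\alpha}(A,B,t)$ by passing to the closure. You then invoke the analytic-core characterization of KMS states (Bratteli--Robinson, Prop.~5.3.7).

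The only place you stop short is the last step, which you leave hedged with a possibly weakened conclusion. The point you raise is real: Bratteli--Robinson's definition asks for a $\sigma$-invariant algebra, $\mathcal A_0$ is not assumed invariant, and the paper passes over this in its citation. But it needs neither Gaussian smoothing nor uniform control of $\sigma_{t+i}(B_k)$. The fact that (ii) is imposed for \emph{every} $t$ is what removes the need for invariance, rather than something that presupposes it.

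Write $\sigma=\sigma^{\omega_0}$. For $B\in\mathcal A_0$, the element $X=\sigma_t(B)$ is entire analytic with $\sigma_i(X)=\sigma_{t+i}(B)$. Your limit identity therefore reads $\omega_*(A\sigma_i(X))=\omega_*(XA)$ for $A\in\mathcal A_0$, and by norm continuity in $A$ it holds for all $A\in\mathcal M$. The set $\mathcal E$ of entire analytic $X$ with this property for all $A\in\mathcal M$ is a linear space closed under products, since for $X,Y\in\mathcal E$
\[
\omega_*\bigl(A\sigma_i(XY)\bigr)
=\omega_*\bigl((A\sigma_i(X))\sigma_i(Y)\bigr)
=\omega_*\bigl(YA\sigma_i(X)\bigr)
=\omega_*(XYA).
\]
Because $\mathcal A_0$ is $*$-closed and $\sigma_t(B)^*=\sigma_t(B^*)$, the orbit set $\{\sigma_t(B):B\in\mathcal A_0,\ t\in\mathbb R\}$ is $*$-closed and $\sigma$-invariant. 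The algebra it generates is therefore a norm-dense (it contains $\mathcal A_0$), $\sigma$-invariant $*$-subalgebra of entire analytic elements contained in $\mathcal E$. This is exactly what the definition of a $(\sigma,1)$-KMS state requires.

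Equivalently, for $A,B\in\mathcal A_0$ the function $F(z)=\omega_*(A\sigma_z(B))$ already has the KMS boundary values on both edges of the strip $0\le\operatorname{Im}z\le1$. The three-lines theorem gives $|F|\le\|A\|\,\|B\|$ there. Bilinearity and norm density then extend the strip-function form of the KMS condition to all $A,B\in\mathcal M$. With either addition, your argument proves the theorem as stated.
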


\begin{proof}
If $\rho,\eta\in\bigcap_\alpha C_\alpha$, then for every
$C\in\mathcal A_0$,
\[
|\rho(C)-\eta(C)|
\leq
\operatorname{osc}_{C_\alpha}(C)
=
\operatorname{osc}_{O_\alpha}(C)
\longrightarrow0.
\]
The equality follows from continuity of evaluation and
$C_\alpha=\overline{O_\alpha}^{\,\sigma(\mathcal M_*,\mathcal M)}$.
Norm density of $\mathcal A_0$ therefore gives $\rho=\eta$, so the
non-empty intersection is $\{\omega_*\}$.

For $A,B\in\mathcal A_0$ and fixed $t$, the two evaluations entering
$\delta_{O_\alpha}(A,B,t)$ are weak$^*$ continuous. Since
$\omega_*\in C_\alpha$, passage to the closure gives
\[
\left|
\omega_*(A\sigma_{t+i}^{\omega_0}(B))
-
\omega_*(\sigma_t^{\omega_0}(B)A)
\right|
\leq\delta_{O_\alpha}(A,B,t).
\]
Letting $\alpha$ increase yields the KMS boundary identity. The
analytic-core characterization of KMS states
\cite[Prop.~5.3.7]{brattelirobinson2} then gives the result.
\end{proof}

\begin{corollary}[Reduction to a prescribed KMS state]
\label{thm:kms_prescribed}
Let $\omega_*$ be a normal KMS state for
$\sigma_t^{\omega_0}$ at $\beta=1$, and let $\{O_\alpha\}$ be a decreasing net of normal parcels containing
$\omega_*$ such that
$\operatorname{osc}_{O_\alpha}(C)\to0$ on a norm-dense $*$-subalgebra.
Then $\bigcap_\alpha O_\alpha=\{\omega_*\}$.
\end{corollary}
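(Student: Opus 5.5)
The corollary follows Theorem~\ref{thm:prescribedstate}; the normal and KMS structure play no essential role. The plan is to check that the hypotheses of that theorem are met in the von Neumann setting.

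\emph{Reducing the hypothesis.} Let $\mathcal A_1\subseteq\mathcal M$ be the norm-dense $*$-subalgebra on which $\operatorname{osc}_{O_\alpha}(C)\to0$. Since $\omega_*\in O_\alpha$ for every $\alpha$, for each $C\in\mathcal A_1$ we have
\[
\sup_{\rho\in O_\alpha}|\rho(C)-\omega_*(C)|\leq\operatorname{osc}_{O_\alpha}(C)\longrightarrow0 .
\]
This is exactly the hypothesis of Theorem~\ref{thm:prescribedstate}.

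\emph{Applying the argument.} Clearly $\omega_*\in\bigcap_\alpha O_\alpha$. Conversely, let $\rho\in\bigcap_\alpha O_\alpha$. For every $C\in\mathcal A_1$ and every $\alpha$,
\[
|\rho(C)-\omega_*(C)|\leq\operatorname{osc}_{O_\alpha}(C).
\]
Passing to the limit along the net gives $\rho=\omega_*$ on $\mathcal A_1$. States are norm-continuous, with norm one, and $\mathcal A_1$ is norm dense, so $\rho=\omega_*$ on all of $\mathcal M$.

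The only subtlety is the ambient setting. Normal parcels live in $\Sn(\mathcal M)$ with the $\sigma(\mathcal M_*,\mathcal M)$ topology rather than in $\mathcal S(\mathcal M)$ with the weak$^*$ topology. However, the argument uses only pointwise evaluation and norm density; it uses no compactness or closures. The proof of Theorem~\ref{thm:prescribedstate} therefore transfers verbatim, and no step is a genuine obstacle.

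The KMS property of $\omega_*$ is not needed for the identification of the intersection. It enters only as the interpretation that the limiting point is the prescribed KMS state. If desired, one can note that this is consistent with Theorem~\ref{thm:kmsreduction}: there hypothesis (iii) holds automatically because $\omega_*\in O_\alpha\subseteq C_\alpha$, but condition (ii) is not assumed in the corollary. So the corollary should be proved directly as above rather than by invoking Theorem~\ref{thm:kmsreduction}.
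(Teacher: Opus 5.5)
Your proposal is correct and matches the paper's proof, which simply invokes Theorem~\ref{thm:prescribedstate}. Because $\omega_*\in O_\alpha$, your bound $\sup_{\rho\in O_\alpha}|\rho(C)-\omega_*(C)|\leq\operatorname{osc}_{O_\alpha}(C)$ is exactly that theorem's hypothesis. Your added check is also sound and is more explicit than the paper: the argument uses only pointwise evaluation and norm density, so it transfers unchanged to normal parcels in $\Sn(\mathcal M)$.
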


\begin{proof}
This is Theorem~\ref{thm:prescribedstate}; the KMS property is already
part of the hypothesis.
\end{proof}

The two reduction results serve different purposes. When the KMS state is
known in advance, only observational shrinkage is required to recover it.
When it is not known, the KMS defect supplies the additional operational
condition forcing the limiting state to satisfy the KMS boundary
identity. The next section applies the prescribed-state form to the
Bisognano--Wichmann/Unruh setting.
\section{The Parcel Unruh Effect}
\label{sec:unruh}

The Bisognano--Wichmann theorem~\cite{bisognanowichmann76} identifies the
restriction of the Minkowski vacuum to a Rindler wedge as a KMS state for
Lorentz-boost dynamics. We now express this result at finite precision:
a parcel need not consist of thermal states, but refinement around the
vacuum recovers the exact KMS relation and Unruh temperature.

Let
\[
W_R=\{x\in\mathbb R^{1+3}:x^1>|x^0|\}
\]
be the right Rindler wedge. An observer with proper acceleration $a>0$
follows
\[
x^\mu(\tau)
=
\left(
a^{-1}\sinh(a\tau),\,
a^{-1}\cosh(a\tau),\,0,\,0
\right),
\]
where the boost parameter is $s=a\tau$. If $K$ generates the boosts, write
$\alpha_s^{W_R}(C)=e^{iKs}Ce^{-iKs}$. The Bisognano--Wichmann theorem gives
\[
\sigma_t^{\omega_0}
=
\alpha_{2\pi t}^{W_R},
\]
so the restricted vacuum
$\omega_0^W:=\omega_0|_{\mathcal M(W_R)}$ is KMS for boost dynamics at
inverse temperature $2\pi$. In proper time this corresponds to the Unruh
temperature $T_{\rm U}=a/(2\pi)$.

A \emph{Rindler parcel} is a normal parcel
$O\subseteq\mathcal S^{\sf n}(\mathcal M(W_R))$. Let
$\mathcal A_0\subseteq\mathcal M(W_R)$ be a norm-dense $*$-subalgebra of
entire analytic elements for the boost dynamics.

\begin{definition}[Thermal value sets and boost KMS defect]
\label{def:boostkmsdefect}
For a Rindler parcel $O$, $A,B\in\mathcal A_0$ and $\tau\in\mathbb R$,
define
\[
\begin{aligned}
\mathcal C_O^a(A,B,\tau)
&=
\{\rho(A\alpha^{W_R}_{a\tau+2\pi i}(B)):\rho\in O\},\\
\mathcal D_O^a(A,B,\tau)
&=
\{\rho(\alpha^{W_R}_{a\tau}(B)A):\rho\in O\},
\end{aligned}
\]
and
\[
\delta_O^a(A,B,\tau)
=
\sup_{\rho\in O}
\left|
\rho(A\alpha^{W_R}_{a\tau+2\pi i}(B))
-
\rho(\alpha^{W_R}_{a\tau}(B)A)
\right|.
\]
\end{definition}

Because the boost dynamics is fixed independently of $\rho$, these are
images of the parcel under affine weak$^*$-continuous evaluation maps.

\begin{proposition}[Vacuum-centred defect bound]
\label{prop:oscillationboundsboost}
If $\omega_0^W\in O$, then
\[
\delta_O^a(A,B,\tau)
\leq
\operatorname{osc}_O
\bigl(A\alpha^{W_R}_{a\tau+2\pi i}(B)\bigr)
+
\operatorname{osc}_O
\bigl(\alpha^{W_R}_{a\tau}(B)A\bigr).
\]
\end{proposition}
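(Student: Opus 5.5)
The proof is a triangle-inequality argument anchored at the vacuum, using the exact KMS identity that $\omega_0^W$ satisfies by the Bisognano--Wichmann theorem. Write
\[
X:=A\alpha^{W_R}_{a\tau+2\pi i}(B),\qquad Y:=\alpha^{W_R}_{a\tau}(B)A.
\]
Both are bounded elements of $\mathcal M(W_R)$ because $B\in\mathcal A_0$ is entire analytic for the boost dynamics. The first step is the exact vacuum identity
\[
\omega_0^W(X)=\omega_0^W(Y).
\]
Since $\sigma_t^{\omega_0}=\alpha^{W_R}_{2\pi t}$, the state $\omega_0^W$ is KMS for $\alpha^{W_R}_s$ at inverse temperature $2\pi$. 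For entire analytic $B$, the analytic-core form of the KMS condition \cite[Prop.~5.3.7]{brattelirobinson2} gives
\[
\omega_0^W\bigl(A\alpha^{W_R}_{s+2\pi i}(B)\bigr)=\omega_0^W\bigl(\alpha^{W_R}_s(B)A\bigr)
\]
for all real $s$. Taking $s=a\tau$ gives the identity.

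The second step is, for arbitrary $\rho\in O$, to insert the vacuum value:
\[
|\rho(X)-\rho(Y)|\le|\rho(X)-\omega_0^W(X)|+|\omega_0^W(X)-\omega_0^W(Y)|+|\omega_0^W(Y)-\rho(Y)|.
\]
The middle term vanishes by the first step. Because $\omega_0^W\in O$, the first term is at most $\sup_{\rho,\eta\in O}|\rho(X)-\eta(X)|=\operatorname{osc}_O(X)$, and likewise the third is at most $\operatorname{osc}_O(Y)$. Taking the supremum over $\rho\in O$ then yields the stated bound on $\delta_O^a(A,B,\tau)$.

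The only step needing care is the first: checking that the sign and normalization conventions agree. Under Bisognano--Wichmann, modular time $t$ corresponds to boost parameter $2\pi t$. Hence the analytic continuation $\sigma_{t+i}$ corresponds to $\alpha_{2\pi t+2\pi i}$, which is exactly the shift appearing in $\delta_O^a$. The remaining points are routine: the oscillation in the statement is taken over the parcel for the (generally non-self-adjoint) elements $X,Y$, exactly as defined, and no continuity or closure arguments are needed because $\omega_0^W$ itself lies in $O$.
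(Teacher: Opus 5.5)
Your proposal is correct and is essentially the paper's own proof. The paper likewise sets $X=A\alpha^{W_R}_{a\tau+2\pi i}(B)$ and $Y=\alpha^{W_R}_{a\tau}(B)A$, uses the Bisognano--Wichmann identity $\omega_0^W(X)=\omega_0^W(Y)$, bounds $|\rho(X)-\rho(Y)|\le|\rho(X)-\omega_0^W(X)|+|\rho(Y)-\omega_0^W(Y)|\le\operatorname{osc}_O(X)+\operatorname{osc}_O(Y)$ using $\omega_0^W\in O$, and takes the supremum over $\rho\in O$.

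Your closing sign check only reproduces the paper's conventions. With $\sigma_t(C)=\Delta^{it}C\Delta^{-it}$ the modular boundary identity is actually $\psi(A\sigma_{t-i}(B))=\psi(\sigma_t(B)A)$, so the $+2\pi i$ form is the correct one when $\alpha^{W_R}$ is the future-directed boost along the observer's worldline, for which $\sigma^{\omega_0}_t=\alpha^{W_R}_{-2\pi t}$. That is nonetheless exactly the identity $\omega_0^W(X)=\omega_0^W(Y)$ that both you and the paper use, so the argument is unaffected.
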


\begin{proof}
Set
$X=A\alpha^{W_R}_{a\tau+2\pi i}(B)$ and
$Y=\alpha^{W_R}_{a\tau}(B)A$. Bisognano--Wichmann gives
$\omega_0^W(X)=\omega_0^W(Y)$. Hence, for $\rho\in O$,
\[
|\rho(X)-\rho(Y)|
\leq
|\rho(X)-\omega_0^W(X)|
+
|\rho(Y)-\omega_0^W(Y)|
\leq
\operatorname{osc}_O(X)+\operatorname{osc}_O(Y),
\]
and taking the supremum gives the result.
\end{proof}

\begin{proposition}[Finite-precision Unruh correlations]
\label{thm:intervalunruh}
If $\omega_0^W\in O$, then
$\mathcal C_O^a(A,B,\tau)$ and
$\mathcal D_O^a(A,B,\tau)$ are convex subsets of $\mathbb C$ whose real
and imaginary projections are intervals. Both contain the common vacuum
KMS value
\[
\omega_0^W(A\alpha^{W_R}_{a\tau+2\pi i}(B))
=
\omega_0^W(\alpha^{W_R}_{a\tau}(B)A).
\]
\end{proposition}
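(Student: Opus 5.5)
The plan is to view both sets as images of the convex parcel $O$ under affine maps, and then apply the Bisognano--Wichmann KMS identity at the vacuum. Fix $A,B\in\mathcal A_0$ and $\tau\in\mathbb R$. Set
\[
X=A\alpha^{W_R}_{a\tau+2\pi i}(B),\qquad Y=\alpha^{W_R}_{a\tau}(B)A.
\]
Both are elements of $\mathcal M(W_R)$. For $X$ this uses that $B$ is entire analytic for the boost dynamics, so $\alpha^{W_R}_{z}(B)$ is defined and bounded for every complex $z$. We then have
\[
\mathcal C_O^a(A,B,\tau)=\{\rho(X):\rho\in O\},\qquad \mathcal D_O^a(A,B,\tau)=\{\rho(Y):\rho\in O\}.
\]
These are the images of $O$ under the evaluation maps $\mathrm{ev}_X,\mathrm{ev}_Y:\rho\mapsto\rho(\cdot)$. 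Each such map is complex-affine on the convex set of normal states and continuous for $\sigma(\mathcal M_*,\mathcal M)$. This is the remark the paper makes right after Definition~\ref{def:boostkmsdefect}.

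The first step is convexity. If $z_1=\rho_1(X)$ and $z_2=\rho_2(X)$ with $\rho_1,\rho_2\in O$, then for $\lambda\in[0,1]$ we get $\lambda z_1+(1-\lambda)z_2=(\lambda\rho_1+(1-\lambda)\rho_2)(X)$. The state $\lambda\rho_1+(1-\lambda)\rho_2$ lies in $O$ because parcels are convex by Definition~\ref{def:quantum-parcel}. So $\mathcal C_O^a$ is convex, and the same argument works for $\mathcal D_O^a$. The real and imaginary parts are $\mathbb R$-linear maps $\mathbb C\to\mathbb R$, so their images of a convex set are convex subsets of $\mathbb R$, that is, intervals. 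Openness of $O$ is not needed for this. Unlike Proposition~\ref{prop:solid}, no connectedness or continuity argument is needed here either, because the dynamics does not depend on $\rho$.

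The second step is the common vacuum value. Since $\omega_0^W\in O$, both sets contain $\omega_0^W(X)$ and $\omega_0^W(Y)$ respectively. It remains to show that these are equal. By Bisognano--Wichmann, $\sigma_t^{\omega_0}=\alpha^{W_R}_{2\pi t}$ on $\mathcal M(W_R)$. Hence $\omega_0^W$ is a $(\sigma^{\omega_0},\beta=1)$-KMS state, and for analytic $B$ the KMS boundary identity reads
\[
\omega_0^W\bigl(A\sigma^{\omega_0}_{t+i}(B)\bigr)=\omega_0^W\bigl(\sigma^{\omega_0}_t(B)A\bigr).
\]
We take $t=a\tau/2\pi$ and use the analytic continuation identity
\[
\sigma^{\omega_0}_{t+i}(B)=\alpha^{W_R}_{2\pi(t+i)}(B)=\alpha^{W_R}_{a\tau+2\pi i}(B).
\]
This gives $\omega_0^W(X)=\omega_0^W(Y)$, which is exactly the identity already used in Proposition~\ref{prop:oscillationboundsboost}.

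The only delicate point is this last identification: the analytic extension of $\sigma^{\omega_0}$ must agree with the rescaled analytic extension of $\alpha^{W_R}$. I expect to settle it by uniqueness of analytic continuation. The two entire functions $z\mapsto\sigma^{\omega_0}_z(B)$ and $z\mapsto\alpha^{W_R}_{2\pi z}(B)$ agree on $\mathbb R$, so they agree everywhere. Combined with the analytic-element form of the KMS condition (\cite[Prop.~5.3.7]{brattelirobinson2}, as in Theorem~\ref{thm:kmsreduction}), this completes the argument.
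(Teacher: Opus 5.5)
Your proposal is correct and follows the paper's proof. $\mathcal C_O^a(A,B,\tau)$ and $\mathcal D_O^a(A,B,\tau)$ are convex as images of the convex parcel $O$ under the affine evaluations $\rho\mapsto\rho(X)$ and $\rho\mapsto\rho(Y)$. The common value comes from $\omega_0^W\in O$ and the Bisognano--Wichmann KMS identity, which the paper simply invokes and you unpack through the modular group and analytic continuation.

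One caution about that unpacking, since you inherit two sign slips from the paper's conventions. With $\sigma_t(x)=\Delta^{it}x\Delta^{-it}$, a faithful normal state actually satisfies $\omega(A\sigma_{t-i}(B))=\omega(\sigma_t(B)A)$, i.e. it is $(\sigma,-1)$-KMS in Bratteli--Robinson's convention. Also, for forward boosts Bisognano--Wichmann reads $\sigma_t=\alpha^{W_R}_{-2\pi t}$. The two sign slips cancel, so the $+2\pi i$ identity is right. It is cleaner, though, to cite directly that $\omega_0^W$ is KMS at $\beta=2\pi$ for the forward boosts, as the paper does.
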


\begin{proof}
Both defining maps are affine and weak$^*$ continuous, so their images of
the convex parcel $O$ are convex. The common value follows from
$\omega_0^W\in O$ and the Bisognano--Wichmann KMS identity.
\end{proof}

\begin{theorem}[Parcel Unruh reduction]
\label{thm:refinementunruh}
Let $\{O_\alpha\}$, for $\alpha\in I$,  be a decreasing net of Rindler parcels containing
$\omega_0^W$ such that
\[
\operatorname{osc}_{O_\alpha}(C)\longrightarrow0
\qquad(C\in\mathcal M(W_R)).
\]
Then
\[
\bigcap_\alpha O_\alpha=\{\omega_0^W\},
\]
and for every $A,B\in\mathcal A_0$ and $\tau\in\mathbb R$,
\[
\delta_{O_\alpha}^a(A,B,\tau)\longrightarrow0.
\]
Consequently the thermal value sets collapse to the common
Bisognano--Wichmann KMS value, and the limiting state has Unruh
temperature
\[
T_{\rm U}=\frac{a}{2\pi}.
\]
\end{theorem}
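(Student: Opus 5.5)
The first claim, $\bigcap_\alpha O_\alpha=\{\omega_0^W\}$, comes straight from Theorem~\ref{thm:prescribedstate}. We apply it to the von Neumann algebra $\mathcal M(W_R)$, viewed as a unital $C^*$-algebra, with prescribed state $s=\omega_0^W$ and norm-dense $*$-subalgebra $\A_0=\mathcal M(W_R)$ itself. The osc-to-sup conversion is immediate: since $\omega_0^W\in O_\alpha$,
\[
\sup_{\rho\in O_\alpha}|\rho(C)-\omega_0^W(C)|\leq\operatorname{osc}_{O_\alpha}(C)\to0 .
\]
The only point to check is that Theorem~\ref{thm:prescribedstate} is stated for parcels in $\mathcal S(\A)$, whereas here we have normal parcels. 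Its proof uses only pointwise evaluation and norm continuity, so it transfers verbatim.

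For the defect, fix $A,B\in\mathcal A_0$ and $\tau$. Both
\[
X=A\alpha^{W_R}_{a\tau+2\pi i}(B)\quad\text{and}\quad Y=\alpha^{W_R}_{a\tau}(B)A
\]
belong to $\mathcal M(W_R)$: analyticity of $B$ for the boost group makes $\alpha^{W_R}_{z}(B)$ a well-defined element of the algebra for complex $z$. Proposition~\ref{prop:oscillationboundsboost} then applies to each $O_\alpha$, since each contains $\omega_0^W$. It gives
\[
\delta^a_{O_\alpha}(A,B,\tau)\leq\operatorname{osc}_{O_\alpha}(X)+\operatorname{osc}_{O_\alpha}(Y),
\]
and both terms tend to $0$ by hypothesis. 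This step needs the oscillation hypothesis for every $C\in\mathcal M(W_R)$, not merely on a dense subalgebra, because $X$ and $Y$ are specific elements that need not lie in any prescribed subalgebra. That is why the theorem is stated with the full algebra.

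For the collapse of the thermal value sets, every element of $\mathcal C^a_{O_\alpha}(A,B,\tau)$ lies within $\operatorname{osc}_{O_\alpha}(X)$ of $\omega_0^W(X)$, and likewise every element of $\mathcal D^a_{O_\alpha}$ lies within $\operatorname{osc}_{O_\alpha}(Y)$ of $\omega_0^W(Y)$. By Bisognano--Wichmann these two values coincide (Proposition~\ref{thm:intervalunruh}). Hence both sets shrink in diameter to the single common KMS value.

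Finally, the limiting state is $\omega_0^W$. By the Bisognano--Wichmann identity $\sigma^{\omega_0}_t=\alpha^{W_R}_{2\pi t}$, it is KMS for the boost dynamics $s\mapsto\alpha^{W_R}_s$ at inverse temperature $2\pi$. Rescaling to proper time via $s=a\tau$ turns this into KMS for $\tau\mapsto\alpha^{W_R}_{a\tau}$ at inverse temperature $2\pi/a$, which is the Unruh temperature $T_{\rm U}=a/(2\pi)$. The main obstacle is bookkeeping rather than substance. One must check that the analytic continuation $\alpha_{a\tau+2\pi i}$ matches the KMS strip of width $\beta=2\pi$ in the boost parameter, and that no extra assumption (such as the faithfulness used in Theorem~\ref{thm:kmsreduction}) is needed, since this is a prescribed-state argument.
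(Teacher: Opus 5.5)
Your proposal is correct and follows the paper's proof essentially step for step: Theorem~\ref{thm:prescribedstate} with prescribed state $\omega_0^W$, then Proposition~\ref{prop:oscillationboundsboost} applied to $X,Y\in\mathcal M(W_R)$ for the defect, then the vacuum-centred bound $\sup_{\rho\in O_\alpha}|\rho(X)-\omega_0^W(X)|\leq\operatorname{osc}_{O_\alpha}(X)$ for the collapse of the value sets, and finally Bisognano--Wichmann for the temperature. One small correction to an aside: you say the hypothesis must hold on all of $\mathcal M(W_R)$ rather than on a dense subalgebra, but that is not so, since for sets of states $\operatorname{osc}_{O}(C)\leq\operatorname{osc}_{O}(C_0)+2\|C-C_0\|$, and vanishing oscillation on a norm-dense subalgebra therefore already extends to every $C$.
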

\begin{proof}
The first assertion follows from
Theorem~\ref{thm:prescribedstate}, with prescribed state $\omega_0^W$.
The defect convergence follows from
Proposition~\ref{prop:oscillationboundsboost}, since both
\[
X=A\alpha^{W_R}_{a\tau+2\pi i}(B),
\qquad
Y=\alpha^{W_R}_{a\tau}(B)A
\]
belong to $\mathcal M(W_R)$ and their oscillations therefore tend to
zero.

Moreover, since $\omega_0^W\in O_\alpha$ for every $\alpha$,
\[
\sup_{\rho\in O_\alpha}
|\rho(X)-\omega_0^W(X)|
\leq
\operatorname{osc}_{O_\alpha}(X)
\longrightarrow0,
\]
and similarly
\[
\sup_{\rho\in O_\alpha}
|\rho(Y)-\omega_0^W(Y)|
\leq
\operatorname{osc}_{O_\alpha}(Y)
\longrightarrow0.
\]
Hence $\mathcal C_{O_\alpha}^a(A,B,\tau)$ and
$\mathcal D_{O_\alpha}^a(A,B,\tau)$ both collapse to their common
vacuum KMS value. The limiting KMS identity and temperature then follow
from the Bisognano--Wichmann theorem.
\end{proof}
The finite-precision formulation clarifies the status of exact Unruh
thermality. The Bisognano--Wichmann theorem assigns an exact KMS relation
and temperature to the exactly specified vacuum restricted to the wedge,
whereas finite observations determine only a parcel of compatible states.
At finite precision the corresponding thermal correlations therefore form
value sets rather than single exact values; as the parcel is refined
around the vacuum, these sets collapse to the Bisognano--Wichmann KMS
value and the Unruh temperature $T_{\rm U}=a/(2\pi)$. Thus exact Unruh
thermality is retained as the limiting AQFT statement while IAQFT makes
explicit what remains operationally determined before that limit is
reached.
\section{Lattice Approximations in IAQFT}
\label{sec:lattice}

We consider lattice approximations that form a compatible directed system
of $C^*$-algebras whose inductive limit is the continuum observable
algebra. Under this hypothesis, finite-precision lattice data determine
continuum parcels canonically, and parcel reduction is compatible with
the continuum limit.

Let $I$ consist of pairs
\[
i=(\Lambda,a),
\]
where $\Lambda$ is a finite lattice region and $a>0$ is the lattice
spacing, ordered by
\[
(\Lambda,a)\preceq(\Lambda',a')
\quad\Longleftrightarrow\quad
\Lambda\subseteq\Lambda'
\ \text{and}\ 
a'\leq a.
\]
For $i=(\Lambda,a)\in I$, write
\[
\mathcal A_i:=\mathcal A^{(a)}(\Lambda).
\]
Suppose that $I$ is directed and that the lattice algebras are connected
by compatible injective unital $*$-homomorphisms
\[
\iota_{ij}:\mathcal A_i\longrightarrow\mathcal A_j
\qquad(i,j\in I,\ i\preceq j),
\]
with continuum quasi-local algebra
\[
\mathcal A=\varinjlim_{i\in I}\mathcal A_i.
\]
For each $i\in I$, write
$\iota_i:\mathcal A_i\to\mathcal A$ for the canonical embedding and
\[
r_i:\mathcal S(\mathcal A)\longrightarrow\mathcal S(\mathcal A_i),
\qquad
r_i(\omega)=\omega\circ\iota_i,
\]
for the corresponding restriction map.

For $i\in I$, a \emph{lattice parcel} is a non-empty convex weak$^*$ open
subset $O_i\subseteq\mathcal S(\mathcal A_i)$. A \emph{lattice
observational parcel} is one defined by finitely many finite-precision
lattice expectation constraints, exactly as in
Definition~\ref{def:quantum-parcel}.

\begin{theorem}[Lattice parcels lift to continuum parcels]
\label{thm:latticeparcel}
For every $i\in I$ and every lattice parcel
$O_i\subseteq\mathcal S(\mathcal A_i)$, its inverse image
\[
\widetilde O_i:=r_i^{-1}(O_i)
\]
is a parcel in $\mathcal S(\mathcal A)$.
\end{theorem}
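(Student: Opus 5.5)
The plan is to verify the three defining properties of a parcel (openness, convexity, non-emptiness) for $\widetilde O_i=r_i^{-1}(O_i)$, using only the structure of the restriction map $r_i$.

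\emph{Step 1: $r_i$ is affine and weak$^*$ continuous.} For $\omega,\eta\in\mathcal S(\mathcal A)$ and $\lambda\in[0,1]$ we have $r_i(\lambda\omega+(1-\lambda)\eta)=\lambda r_i(\omega)+(1-\lambda)r_i(\eta)$, directly from $r_i(\omega)=\omega\circ\iota_i$. Moreover $r_i(\omega)$ is a state, since $\iota_i$ is a unital $*$-homomorphism and hence positive and unital. For continuity, if $\omega_\lambda\to\omega$ weak$^*$, then for each $A\in\mathcal A_i$
\[
r_i(\omega_\lambda)(A)=\omega_\lambda(\iota_i(A))\to\omega(\iota_i(A))=r_i(\omega)(A).
\]

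\emph{Step 2: openness and convexity.} Openness of $\widetilde O_i$ follows from continuity of $r_i$. Convexity follows because the preimage of a convex set under an affine map is convex.

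\emph{Step 3: non-emptiness, the main obstacle.} The only substantive point is that $r_i$ is surjective, so that a state $\varphi\in O_i$ has some preimage. Here I would use that $\iota_i:\mathcal A_i\to\mathcal A$ is injective. This holds because the connecting maps $\iota_{ij}$ are injective, so the canonical maps into the $C^*$-inductive limit are injective, hence isometric. Thus $\mathcal A_i$ may be identified with a unital $C^*$-subalgebra of $\mathcal A$.

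Take $\varphi\in O_i$ and regard it as a state on $\iota_i(\mathcal A_i)$. The standard extension of states from a unital $C^*$-subalgebra then applies: by Hahn--Banach, extend $\varphi$ to a functional $\omega$ on $\mathcal A$ with the same norm, namely $1$. Since $\omega(I)=1=\|\omega\|$, $\omega$ is positive and hence a state. Then $r_i(\omega)=\varphi\in O_i$, so $\widetilde O_i\neq\varnothing$.

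Combining Steps 1–3 shows that $\widetilde O_i$ is a non-empty convex weak$^*$ open subset of $\mathcal S(\mathcal A)$, that is, a parcel. As a remark, if $O_i$ is a lattice observational parcel given by constraints $a_k<\varphi(A_k)<b_k$, then $\widetilde O_i$ is the continuum observational parcel defined by the constraints on $\iota_i(A_k)$.
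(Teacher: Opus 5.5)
Your proposal is correct and follows the paper's own proof: $r_i$ is affine and weak$^*$ continuous, which gives openness and convexity, and non-emptiness comes from extending a state on the unital $C^*$-subalgebra $\iota_i(\mathcal A_i)\subseteq\mathcal A$ to a state on $\mathcal A$. You also justify two points the paper leaves implicit, namely why the canonical embedding $\iota_i$ is injective (hence isometric) and the norm-preserving Hahn--Banach extension argument.
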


\begin{proof}
The map $r_i$ is affine and weak$^*$ continuous, so $\widetilde O_i$ is
open and convex. It is non-empty because every state on the unital
$C^*$-subalgebra $\iota_i(\mathcal A_i)\subseteq\mathcal A$ extends to a
state on $\mathcal A$.
\end{proof}

\begin{example}[A lattice expectation constraint and its continuum parcel]
\label{ex:latticeparcel}
Fix $i\in I$ and let $H_i=H_i^*\in\mathcal A_i$ be a lattice observable.
Suppose a finite-precision lattice measurement determines
\[
a<\rho_i(H_i)<b.
\]
The corresponding lattice observational parcel is
\[
O_i
=
\{\rho_i\in\mathcal S(\mathcal A_i):
a<\rho_i(H_i)<b\}.
\]
Its induced continuum parcel is
\[
\widetilde O_i
=
r_i^{-1}(O_i)
=
\{\omega\in\mathcal S(\mathcal A):
a<\omega(\iota_i(H_i))<b\}.
\]
Thus a finite-resolution lattice observation determines exactly the
continuum states whose restriction to the lattice reproduces the observed
expectation interval.

If $i,j\in I$ with $i\preceq j$ and the same observable is represented
on the refined lattice by $H_j=\iota_{ij}(H_i)$, then, writing
$r_{ij}:\mathcal S(\mathcal A_j)\to\mathcal S(\mathcal A_i)$ for
restriction along $\iota_{ij}$, the corresponding refined lattice observational parcel is
\[
O_j
=
\{\rho_j\in\mathcal S(\mathcal A_j):
a<\rho_j(H_j)<b\}
=
r_{ij}^{-1}(O_i).
\]
It induces the same continuum parcel:
\[
r_j^{-1}(O_j)=r_i^{-1}(O_i).
\]
Thus refining the lattice representation of a fixed finite-precision
observation does not by itself change the continuum observational
information.
\end{example}

Thus a finite-precision lattice experiment specifies precisely the
continuum states whose restrictions to the lattice algebra satisfy the
observed constraints. The lattice spacing and the parcel widths represent
distinct approximations: the former controls the algebraic approximation
to the continuum, while the latter controls observational precision.

The construction is consistent under lattice refinement. For
$i,j\in I$ with $i\preceq j$, let
\[
r_{ij}:\mathcal S(\mathcal A_j)\longrightarrow
\mathcal S(\mathcal A_i)
\]
be restriction along $\iota_{ij}$. Compatibility of the embeddings gives
\[
r_i=r_{ij}\circ r_j.
\]

\begin{proposition}[Compatibility with lattice refinement]
\label{thm:latticecompatibility}
Let $i,j\in I$ with $i\preceq j$. If
$O_i\subseteq\mathcal S(\mathcal A_i)$ is a lattice parcel and
\[
O_j=r_{ij}^{-1}(O_i),
\]
then
\[
r_j^{-1}(O_j)=r_i^{-1}(O_i).
\]
Thus refinement of the lattice representation does not change the
continuum parcel represented by the same observational constraints.
\end{proposition}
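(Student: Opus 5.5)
The argument is a direct set-theoretic computation from the functoriality identity $r_i=r_{ij}\circ r_j$ recorded just before the statement. I will first justify that identity briefly. Compatibility of the embeddings in the inductive system means $\iota_j\circ\iota_{ij}=\iota_i$ for $i\preceq j$. Hence, for every $\omega\in\mathcal S(\mathcal A)$,
\[
r_{ij}(r_j(\omega))=\omega\circ\iota_j\circ\iota_{ij}=\omega\circ\iota_i=r_i(\omega).
\]

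Given this, I take preimages. Since $O_j=r_{ij}^{-1}(O_i)$,
\[
r_j^{-1}(O_j)=r_j^{-1}\bigl(r_{ij}^{-1}(O_i)\bigr)=(r_{ij}\circ r_j)^{-1}(O_i)=r_i^{-1}(O_i).
\]
The middle equality is the elementary fact that preimages compose contravariantly. Equivalently, $\omega\in r_j^{-1}(O_j)$ iff $r_j(\omega)\in r_{ij}^{-1}(O_i)$ iff $r_{ij}(r_j(\omega))\in O_i$ iff $r_i(\omega)\in O_i$.

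For completeness, I will also note that $O_j$ is itself a lattice parcel on $\mathcal A_j$, so that the statement compares two genuine parcels via Theorem~\ref{thm:latticeparcel}. The map $r_{ij}$ is affine and weak$^*$ continuous, so $O_j$ is convex and open. It is non-empty because every state in $O_i$ on $\mathcal A_i\cong\iota_{ij}(\mathcal A_i)\subseteq\mathcal A_j$ extends to a state on $\mathcal A_j$, and that extension lies in $O_j$. This is the same extension argument used in the proof of Theorem~\ref{thm:latticeparcel}.

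There is no genuine obstacle. The only point requiring care is the compatibility $\iota_j\circ\iota_{ij}=\iota_i$, which is part of the definition of the inductive limit, together with the non-emptiness of $O_j$, which rests on the Hahn--Banach extension of states.
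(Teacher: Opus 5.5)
Your proposal is correct and follows the same route as the paper, which simply observes that the identity is immediate from $r_i=r_{ij}\circ r_j$; your preimage computation $r_j^{-1}(r_{ij}^{-1}(O_i))=(r_{ij}\circ r_j)^{-1}(O_i)$ is exactly that step written out. Your extra check that $O_j$ is non-empty, via extension of states, is valid but not needed, since the claim is a pure set identity.
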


\begin{proof}
This is immediate from $r_i=r_{ij}\circ r_j$.
\end{proof}

Finally, parcel reduction is compatible with the directed continuum
approximation.

\begin{theorem}[Scaling limit of parcel reduction]
\label{thm:scalingreduction}
For each $i\in I$, let
$O_i\subseteq\mathcal S(\mathcal A_i)$ be a lattice parcel and let
\[
\widetilde O_i:=r_i^{-1}(O_i)
\subseteq\mathcal S(\mathcal A)
\]
be its induced continuum parcel. Suppose that the net
$\{\widetilde O_i\}_{i\in I}$ is decreasing with respect to
$\preceq$, that is,
\[
i\preceq j
\quad\Longrightarrow\quad
\widetilde O_j\subseteq\widetilde O_i,
\]
and that
\[
\operatorname{osc}_{\widetilde O_i}(A)\longrightarrow0
\qquad(A\in\mathcal A)
\]
along the directed set $I$. Then
\[
\bigcap_{i\in I}
\overline{\widetilde O_i}^{\,w^*}
=
\{\omega_0\}
\]
for a unique continuum state
$\omega_0\in\mathcal S(\mathcal A)$.
\end{theorem}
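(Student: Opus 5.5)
The argument reduces to Theorem~\ref{thm:reduction}, applied to the net $\{\widetilde O_i\}_{i\in I}$ in the continuum state space $\mathcal S(\mathcal A)$. Two things must be checked. First, each $\widetilde O_i$ is a parcel, which is exactly Theorem~\ref{thm:latticeparcel}. Second, the family is a decreasing net indexed by the directed set $I$, which is assumed in the statement. For the oscillation hypothesis I would take $\A_0=\mathcal A$ itself, a norm-dense $*$-subalgebra, since we are given $\operatorname{osc}_{\widetilde O_i}(A)\to0$ for every $A\in\mathcal A$. A more economical choice would be $\A_0=\bigcup_i\iota_i(\mathcal A_i)$, which is a $*$-subalgebra because $I$ is directed and the embeddings are compatible, and which is norm-dense by the definition of the inductive limit. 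Either choice suffices, and the statement as given already supplies the hypothesis on all of $\mathcal A$.

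In order, the steps are as follows. First, note nonemptiness and weak$^*$ openness and convexity via Theorem~\ref{thm:latticeparcel}. Second, set $C_i=\overline{\widetilde O_i}^{\,w^*}$ and observe that these closures are decreasing, because closure preserves inclusion. Third, invoke weak$^*$ compactness of $\mathcal S(\mathcal A)$ and the finite intersection property, which holds because $I$ is directed and the net is decreasing: given finitely many indices, pick an upper bound $k$, and $C_k$ lies in all of them. This gives $\bigcap_i C_i\neq\varnothing$. Fourth, for $\omega,\eta$ in the intersection and any $A$, use weak$^*$ continuity of evaluation to get $|\omega(A)-\eta(A)|\le\operatorname{osc}_{C_i}(A)=\operatorname{osc}_{\widetilde O_i}(A)\to0$. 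Hence $\omega=\eta$, and the unique limit state is the one named $\omega_0$ in the statement.

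The only point requiring care, and hence the main (mild) obstacle, is that Theorem~\ref{thm:reduction} is phrased for a decreasing net of parcels. One must confirm that the ordering on $I$ combines inclusion of lattice regions with decreasing spacing, and that directedness of this order is exactly what the compactness argument needs. Since directedness is assumed, the theorem applies verbatim. If one wanted to weaken the hypothesis to oscillations vanishing only on lattice observables, the union $\bigcup_i\iota_i(\mathcal A_i)$ supplies the required dense $*$-subalgebra, and the same argument goes through.
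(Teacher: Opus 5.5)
Your proposal is correct and follows the paper's own proof: both apply Theorem~\ref{thm:reduction} directly to the decreasing net $\{\widetilde O_i\}_{i\in I}$ of induced continuum parcels, which are parcels by Theorem~\ref{thm:latticeparcel}. Your extra observation is also valid: since that theorem only needs a norm-dense $*$-subalgebra, $\bigcup_i\iota_i(\mathcal A_i)$ would suffice, so vanishing oscillation on lattice observables alone already gives the conclusion.
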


\begin{proof}
This is Theorem~\ref{thm:reduction}, applied to the decreasing net
$\{\widetilde O_i\}_{i\in I}$ of induced continuum parcels.
\end{proof}

Hence, for lattice regularisations satisfying the directed-system
hypotheses above, passing to the continuum is compatible with the
finite-precision parcel formalism. The continuum and observational limits
remain conceptually distinct: lattice refinement removes the
regularisation, whereas parcel refinement removes observational
uncertainty.
\section{Minimality, Finiteness and Factor Type}
\label{sec:minimality-finiteness}

We now show that the Murray--von Neumann structure of a factor can be
read from the geometry of limiting parcel faces and their canonical
conjugation actions. Minimality is detected by the geometry of the face,
finiteness by its invariant states, and in Type~$\mathrm{II}_1$ factors
the trace can be recovered by homogeneous parcel counting. Proofs of the
results in this section are collected in
Appendix~\ref{app:parcelproofs}.

Recall that two projections $p,q\in\mathcal M$ are
\emph{Murray--von Neumann equivalent}, written $p\sim q$, if there
exists a partial isometry $v\in\mathcal M$ such that
\[
v^*v=p,
\qquad
vv^*=q.
\]

Thus $p$ and $q$ are respectively the initial and final projections of
the same partial isometry. Recall, in addition, that a nonzero projection is \emph{minimal}
if it has no nontrivial subprojection and \emph{finite} if no proper
subprojection is equivalent to it. A factor is of type~$\mathrm I$ if it
contains a minimal projection, type~$\mathrm{II}$ if it contains no
minimal projection but has a nonzero finite projection, and
type~$\mathrm{III}$ if it has no nonzero finite projection
\cite[Ch.~6]{KadisonRingrose2}.

For every nonzero projection $p\in\mathcal M$, define
\[
O_n(p)
=
\{\omega\in\mathcal S^{\sf n}(\mathcal M):
1-\tfrac1n<\omega(p)\},
\qquad
F_p
=
\bigcap_{n\geq1}O_n(p)
=
\{\omega:\omega(p)=1\}.
\]
Each $O_n(p)$ is an observational parcel, determined by a
finite-precision expectation constraint on $p$. Thus $F_p$ is the
limiting face of increasingly precise observational parcels
concentrating on $p$.

\begin{proposition}[Projection faces]
\label{prop:face}
For every nonzero projection $p$, $F_p$ is the exposed face of
$\mathcal S^{\sf n}(\mathcal M)$ defined by $\omega(p)=1$, and
\[
q\leq p
\quad\Longleftrightarrow\quad
F_q\subseteq F_p,
\]
with proper inclusion when $q<p$.
\end{proposition}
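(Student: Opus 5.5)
The plan is to check the face property directly from the definitions and then reduce the order comparison to support projections of normal states.

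First, the face statement. The functional $\omega\mapsto\omega(p)$ is affine and weak$^*$ continuous on $\Sn(\mathcal M)$ and satisfies $0\le\omega(p)\le1$. So $F_p=\{\omega:\omega(p)=1\}$ is precisely the set where this functional attains its maximum value $1$. Such a set is automatically an exposed face: if a proper convex combination $\lambda\omega_1+(1-\lambda)\omega_2$ lies in $F_p$, then both $\omega_k(p)\le1$ force $\omega_1(p)=\omega_2(p)=1$. Non-emptiness follows because $p\neq0$: for any normal state $\psi$ with $\psi(p)>0$ (for instance a vector state $\langle\xi,\cdot\,\xi\rangle$ with $\xi$ in the range of $p$ in some faithful normal representation), the compression $\psi(p\,\cdot\,p)/\psi(p)$ is a normal state in $F_p$. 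The identity $F_p=\bigcap_nO_n(p)$ is immediate.

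For the order statement, the key tool is the lemma that, for a normal state $\omega$, $\omega(p)=1$ holds if and only if $\omega(x)=\omega(pxp)$ for all $x$, equivalently if and only if the support projection $s(\omega)$ satisfies $s(\omega)\le p$. The proof is Cauchy--Schwarz: from $\omega(1-p)=0$ we get $\omega(x(1-p))=0$ and $\omega((1-p)x)=0$.

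The direction $q\le p\Rightarrow F_q\subseteq F_p$ is then easy. If $\omega(q)=1$, then $1\ge\omega(p)\ge\omega(q)=1$.

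The converse is the main step. Suppose $F_q\subseteq F_p$. It suffices to show $q(1-p)q=0$, since then $(1-p)q=0$ and hence $q=pq$, which gives $q\le p$. The idea is to take, for each unit vector $\xi$ in the range of $q$ in a faithful normal representation of $\mathcal M$, the vector state $\omega_\xi$. This is a normal state with $\omega_\xi(q)=1$, so $\omega_\xi\in F_q\subseteq F_p$, and therefore $\|(1-p)\xi\|^2=\omega_\xi(1-p)=0$. Hence $(1-p)q=0$. The only subtlety is guaranteeing enough normal states in $F_q$, and faithful representability of $\mathcal M$, which holds for any von Neumann algebra, settles it. If one prefers to avoid representations, one can use norm-density of normal states supported under $q$, but the representation route is cleanest.

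Finally, proper inclusion when $q<p$ follows by applying the same construction to the nonzero projection $p-q$. A normal state $\omega$ with $\omega(p-q)=1$ lies in $F_p$, since $\omega(p)\ge\omega(p-q)=1$, but it satisfies $\omega(q)=0\neq1$, so $\omega\notin F_q$. The existence of such a state again uses a vector state on a unit vector in the range of $p-q$. I expect no real obstacle here; the only point needing care is the existence of normal states concentrated on a given nonzero projection, which I will handle uniformly with vector states in a faithful normal representation.
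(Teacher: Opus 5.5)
Your proof is correct and follows essentially the same route as the paper. The steps match: $F_p$ is the maximizing set of $\omega\mapsto\omega(p)$, the easy direction uses $\omega(q)\le\omega(p)$, the converse exhibits normal states concentrated under $q$, and proper inclusion uses a state supported on $p-q$. The only cosmetic difference is in how the normal states for the converse are obtained. You build them concretely as vector states $\omega_\xi$ with $\xi$ in the range of $q$ in a faithful normal representation, which gives $(1-p)q=0$ directly. The paper instead argues by contradiction, taking a normal state $\varphi$ on $q\mathcal Mq$ with $\varphi(q(1-p)q)>0$ and extending it by $A\mapsto\varphi(qAq)$.
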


\begin{corollary}[Minimality]
\label{cor:minimal}
A nonzero projection $p$ is minimal if and only if $F_p$ is a singleton.
\end{corollary}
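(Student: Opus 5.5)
The plan is to derive the corollary from Proposition~\ref{prop:face} together with one standard fact: for every nonzero projection $q\in\mathcal M$, the face $F_q$ is non-empty. To see this, pick a unit vector $\xi$ in the range of $q$ (in any faithful normal representation, or in the standard form). Then $\omega_\xi(A)=\langle\xi,A\xi\rangle$ is a normal state with $\omega_\xi(q)=1$.

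\emph{Forward direction.} Suppose $p$ is minimal. Then $p\mathcal M p=\mathbb C p$, since a corner of a von Neumann algebra with no nontrivial projections is one-dimensional. Any $\omega\in F_p$ satisfies $\omega(p)=1$. By Cauchy--Schwarz, $\omega(A)=\omega(pAp)$ for all $A$: indeed $|\omega(A(1-p))|^2\le\omega(AA^*)\,\omega(1-p)=0$, and similarly on the left. Writing $pAp=\lambda_A p$ gives $\omega(A)=\lambda_A$, which is independent of $\omega$. Together with non-emptiness, this shows $F_p$ is a singleton.

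\emph{Converse.} Suppose $p$ is not minimal. Then there is a projection $q$ with $0\ne q<p$, and $p-q$ is also a nonzero subprojection of $p$. By Proposition~\ref{prop:face}, $F_q\subsetneq F_p$ and $F_{p-q}\subseteq F_p$. Both faces are non-empty. They are disjoint, because a state in both would satisfy $\omega(q)=1$ and $\omega(p-q)=1$, hence $\omega(p)=2$. So $F_p$ contains at least two points. Alternatively, the proper inclusion $F_q\subsetneq F_p$ combined with $F_q\ne\varnothing$ already forces $|F_p|\ge2$.

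The only step needing care is the reduction $\omega(A)=\omega(pAp)$ together with $p\mathcal Mp=\mathbb Cp$ for minimal $p$; both are standard. Everything else is a direct use of Proposition~\ref{prop:face}.
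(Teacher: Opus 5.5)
Your proof is correct and follows essentially the same route as the paper: the forward direction via $p\mathcal Mp=\mathbb Cp$ and the support identity $\omega(A)=\omega(pAp)$, and the converse via normal states supported on $q$ and $p-q$, which give two distinct points of $F_p$. You also make explicit the non-emptiness of $F_p$ needed for it to be a singleton, which the paper leaves implicit.
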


\begin{theorem}[Type~$\mathrm I$]
\label{thm:typeI}
A factor $\mathcal M$ is of type~$\mathrm I$ if and only if $F_p$ has an
extreme point for some nonzero projection $p$. In that case every nonzero
$F_q$ has an extreme point.
\end{theorem}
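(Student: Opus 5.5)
The plan is to reduce the extreme-point condition on $F_p$ to the existence of a minimal projection under $p$, and then invoke the factor-type definition recalled above.

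\emph{Extreme points of $F_p$ are pure normal states.} Since $F_p$ is a face of $\mathcal S^{\sf n}(\mathcal M)$ (Proposition~\ref{prop:face}), an extreme point of $F_p$ is an extreme point of the normal state space, i.e.\ a pure normal state. Conversely, any pure normal state $\varphi$ with $\varphi(p)=1$ is extreme in $F_p$.

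\emph{Support projections of pure normal states are minimal.} Let $\varphi$ be a pure normal state with support projection $e=s(\varphi)$, so $\varphi\in F_e$. Suppose $0\neq q<e$. Then $0<\varphi(q)<1$, since $\varphi$ is faithful on $e\mathcal M e$. Write $\varphi(x)=\varphi(qxq)+\varphi(q^\perp x q^\perp)+\text{cross terms}$. Purity means $\varphi$ is a vector state $\omega_\xi$ in a representation with $\pi_\varphi(\mathcal M)'=\mathbb C$, hence its GNS is irreducible.

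The cleanest route is the standard fact that for a normal pure state on a von Neumann algebra, $e\mathcal M e=\mathbb C e$, so $e$ is minimal. The argument: $\varphi$ restricted to $e\mathcal M e$ is a faithful normal pure state. If $e\mathcal M e$ contained a nontrivial projection $q$, then the positive functional $x\mapsto\varphi(qxq)$ would be dominated by a multiple of $\varphi$ only when $q$ commutes suitably. Instead I would use Kadison's transitivity, or directly normality: pure normal states on a von Neumann algebra lie in irreducible normal representations, which forces a type~I summand, in which the support is a minimal projection.

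The step I expect to be hardest is exactly this identification of the support of a pure \emph{normal} state as minimal. I will cite it from \cite{KadisonRingrose2} (pure normal states are $\omega_\xi$ restricted to a type~I factor summand, with support minimal). Then $e\le p$ because $\varphi(p)=1$ gives $e\le p$, so $p$ dominates a minimal projection and $\mathcal M$ is of type~I.

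\emph{Converse.} If $\mathcal M$ is type~I, fix a minimal projection $e$. For any nonzero $q$, factoriality gives a nonzero partial isometry $v$ with $v^*v\le e$ and $vv^*\le q$ (comparison theorem; a nonzero $e'\le e$ equals $e$ by minimality). Thus $f=vv^*\le q$ is minimal, since $f\sim e$. By Corollary~\ref{cor:minimal}, $F_f$ is a singleton $\{\varphi\}$, and $F_f$ is a face of $\mathcal S^{\sf n}(\mathcal M)$, so $\varphi$ is extreme in $\mathcal S^{\sf n}(\mathcal M)$. Moreover $F_f\subseteq F_q$ by Proposition~\ref{prop:face}, so $\varphi$ is an extreme point of $F_q$. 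This proves both the reverse implication and the ``every nonzero $F_q$'' clause.
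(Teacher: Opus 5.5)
Your proof is correct and essentially the paper's: to pass from an extreme point of $F_p$ to type~$\mathrm I$, both arguments use that $F_p$ is a face, so the extreme point is a pure normal state, whose support is minimal by the standard fact the paper records as Fact~\ref{fact:extreme} (the exploratory remarks before your citation are not an argument and can be dropped). The only variation is in the direction from type~$\mathrm I$: you construct a minimal $f\le q$ explicitly via comparison and get the extreme point from Corollary~\ref{cor:minimal}, since a singleton face consists of an extreme point, whereas the paper asserts that $q$ dominates a minimal projection and cites Fact~\ref{fact:extreme} for purity, so your version of that half is slightly more self-contained.
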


Minimality is therefore detected by the convex geometry of the limiting
face alone. Finiteness requires, in addition, the canonical internal
symmetry of the corner.

The relation between unitary conjugation and tracial states is classical.
In particular, the Dixmier property and its variants study averaging
under unitary conjugation and its relation to tracial structure; see, for
example, Archbold, Robert and Tikuisis~\cite{ArchboldRobertTikuisis2017}.
Our use of unitary conjugation is different. Rather than averaging unitary
orbits in the algebra, we use the canonical action of the unitary group of
the corner $p\mathcal Mp$ on the limiting projection face $F_p$ and study
its common fixed points.

Let $\mathcal U(p\mathcal Mp)$ denote the unitary group of the corner
$p\mathcal Mp$, whose identity is $p$. Thus, for
$U\in\mathcal U(p\mathcal Mp)$,
\[
U^*U=UU^*=p.
\]
For such a $U$, set
\[
\widehat U=U+(1-p).
\]
Then $\widehat U$ is a unitary in $\mathcal M$ extending $U$, and
\[
\widehat U p\widehat U^*=p.
\]
It therefore induces an action on $F_p$ by
\[
\omega^{\widehat U}(A)
=
\omega(\widehat U A\widehat U^*),
\qquad
\omega\in F_p,\quad A\in\mathcal M.
\]
For a fixed $U\in\mathcal U(p\mathcal Mp)$, define
\[
\operatorname{Fix}(\widehat U)
=
\{\omega\in F_p:\omega^{\widehat U}=\omega\}.
\]
This is a convex weakly closed subset of $F_p$. The states invariant
under the action of the whole unitary group are precisely those in the
common fixed-point set
\[
\operatorname{Fix}\bigl(\mathcal U(p\mathcal Mp)\bigr)
:=
\bigcap_{U\in\mathcal U(p\mathcal Mp)}
\operatorname{Fix}(\widehat U).
\]
The following theorem identifies the existence of such a common fixed
point with finiteness of the projection.

\begin{theorem}[Finiteness as a fixed point]
\label{thm:fixedpoint}
Let $\mathcal M$ be a factor and $p\neq0$. Then
\[
p\text{ is finite}
\quad\Longleftrightarrow\quad
\operatorname{Fix}\bigl(\mathcal U(p\mathcal Mp)\bigr)\neq\varnothing.
\]
When $p$ is finite,
\[
\operatorname{Fix}\bigl(\mathcal U(p\mathcal Mp)\bigr)
=
\{\tau_p\},
\]
where $\tau_p$ is the normalized normal trace on the finite factor
$p\mathcal Mp$, regarded as a state in $F_p$.
\end{theorem}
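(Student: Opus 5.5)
The plan is to reduce everything to the corner $p\mathcal Mp$ by identifying $F_p$ with the normal state space of that corner. If $\omega\in F_p$, then $\omega(1-p)=0$, and Cauchy--Schwarz gives $\omega(A(1-p))=\omega((1-p)A)=0$. Hence $\omega(A)=\omega(pAp)$ for all $A\in\mathcal M$. Conversely, every normal state $\varphi$ on $p\mathcal Mp$ extends uniquely to $\omega(A):=\varphi(pAp)$ in $F_p$. Thus restriction is an affine bijection $F_p\cong\mathcal S^{\sf n}(p\mathcal Mp)$. Under this bijection the action of $\widehat U$ becomes $\varphi\mapsto\varphi(U\,\cdot\,U^*)$ on $p\mathcal Mp$, since
\[
p\widehat U A\widehat U^*p=U(pAp)U^*.
\]
So $\operatorname{Fix}(\mathcal U(p\mathcal Mp))$ corresponds exactly to the normal states $\varphi$ on $p\mathcal Mp$ with $\varphi(UXU^*)=\varphi(X)$ for all unitaries $U$ of the corner. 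Because every element is a linear combination of unitaries, $\varphi(UX)=\varphi(UXU^*U)=\varphi(XU)$ extends by linearity to $\varphi(YX)=\varphi(XY)$. The fixed points are therefore precisely the normal tracial states of $p\mathcal Mp$, which is again a factor since $\mathcal M$ is a factor.

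For ($\Rightarrow$): if $p$ is finite, then $p\mathcal Mp$ is a finite factor, because $p$ finite in $\mathcal M$ means $p\mathcal Mp$ has finite identity. Being $\sigma$-finite, or directly as a finite factor (the standard setting of \cite[Ch.~8]{KadisonRingrose2}), it carries a faithful normal tracial state $\tau_p$ that is unique among normal tracial states. Its extension lies in the fixed-point set, and uniqueness gives $\operatorname{Fix}=\{\tau_p\}$. This step just invokes the existence and uniqueness of the center-valued trace in a finite factor.

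For ($\Leftarrow$): suppose a normal tracial state $\varphi$ exists on $p\mathcal Mp$. Let $e=s(\varphi)\le p$ be its support. Traciality gives $\varphi(UeU^*)=1$, so $UeU^*\ge e$, and applying the same to $U^*$ yields $UeU^*=e$ for all unitaries. Hence $e$ is central in the factor $p\mathcal Mp$, so $e=p$ and $\varphi$ is faithful. Now suppose $q\le p$ and $q\sim p$ via $v$ with $v^*v=p$ and $vv^*=q$; here $v\in p\mathcal Mp$. Then
\[
\varphi(p-q)=\varphi(v^*v)-\varphi(vv^*)=0,
\]
and faithfulness forces $q=p$. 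So $p$ is finite. The main obstacle is bookkeeping rather than depth: checking carefully that the $F_p\leftrightarrow\mathcal S^{\sf n}(p\mathcal Mp)$ correspondence preserves normality and intertwines the actions, and citing uniqueness of the normal trace on a finite factor in the corner. Everything else is routine.
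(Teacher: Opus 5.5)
Your proposal is correct and follows essentially the same route as the paper. Both arguments use the support identity to identify $F_p$ with the normal states of the corner, identify the fixed points with the normal tracial states of $p\mathcal Mp$, and use existence and uniqueness of the normalized normal trace on a finite factor. The one difference is the direction ``normal tracial state $\Rightarrow$ $p$ finite'': you prove it by hand (the support of a tracial state is unitarily invariant, hence central in the factor $p\mathcal Mp$, hence equal to $p$, and faithfulness then forces $\varphi(p-q)=0\Rightarrow q=p$), whereas the paper simply cites both directions of the trace criterion for finite factors.
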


Combining this with Corollary~\ref{cor:minimal} gives
\[
p\text{ minimal}
\iff
F_p\text{ is a singleton},
\qquad
p\text{ finite}
\iff
F_p\text{ has a conjugation-invariant state}.
\]

These two criteria recover the complete factor classification. 

\begin{theorem}[Parcel classification of factor type]
\label{thm:classification}
For a factor $\mathcal M$, define the following properties of its
projection faces $F_p$:
\begin{itemize}
\item \textbf{Atoms:} some nonzero $F_p$ has an extreme point;
\item \textbf{Whole-finite:} the canonical conjugation action on
$F_1=\mathcal S^{\sf n}(\mathcal M)$ has a fixed point;
\item \textbf{Some-finite:} the canonical conjugation action on $F_p$
has a fixed point for some nonzero projection $p$.
\end{itemize}
Then the Murray--von Neumann type of $\mathcal M$ is determined by
\[
\begin{array}{lccc}
 & \textbf{Atoms} & \textbf{Whole-finite} & \textbf{Some-finite}\\
\hline
\mathrm I_n        & \mathrm T & \mathrm T & \mathrm T\\
\mathrm I_\infty   & \mathrm T & \mathrm F & \mathrm T\\
\mathrm{II}_1      & \mathrm F & \mathrm T & \mathrm T\\
\mathrm{II}_\infty & \mathrm F & \mathrm F & \mathrm T\\
\mathrm{III}       & \mathrm F & \mathrm F & \mathrm F .
\end{array}
\]
Thus the factor type is completely determined by the geometry and
canonical conjugation dynamics of the limiting projection faces.
\end{theorem}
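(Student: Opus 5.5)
The proof will reduce each column of the table to a standard Murray--von Neumann property, using the three criteria already established. By Theorem~\ref{thm:typeI}, \textbf{Atoms} holds exactly when $\mathcal M$ is of type~$\mathrm I$. By Theorem~\ref{thm:fixedpoint} applied with $p=1$, \textbf{Whole-finite} holds exactly when the identity $1$ is a finite projection, i.e.\ when $\mathcal M$ is a finite factor. Here we note that $F_1=\mathcal S^{\sf n}(\mathcal M)$, that $\widehat U=U$ for $U\in\mathcal U(\mathcal M)$, and that the fixed point is then the normalized normal trace. By Theorem~\ref{thm:fixedpoint} applied to arbitrary $p\neq0$, \textbf{Some-finite} holds exactly when $\mathcal M$ has a nonzero finite projection. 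Each row is therefore a translation of the standard definitions recalled before Proposition~\ref{prop:face}.

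The verification then goes row by row. For $\mathrm I_n$ with $n<\infty$: there is a minimal projection, so \textbf{Atoms} holds. The identity is a finite sum of $n$ equivalent minimal projections and hence finite ($\mathcal M\cong M_n(\mathbb C)$ has a trace), which gives \textbf{Whole-finite}. \textbf{Some-finite} follows by taking $p=1$.

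For $\mathrm I_\infty$: \textbf{Atoms} holds, and a minimal projection is finite, since its only subprojections are $0$ and itself. Meanwhile $1$ is infinite, because $B(\mathcal H)$ with $\dim\mathcal H=\infty$ has an isometry $v$ with $v^*v=1$ and $vv^*<1$.

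For $\mathrm{II}_1$: there are no minimal projections, so \textbf{Atoms} fails. The identity is finite by definition of type~$\mathrm{II}_1$, so the other two columns hold.

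For $\mathrm{II}_\infty$: \textbf{Atoms} fails and $1$ is infinite by definition, but a nonzero finite projection exists, so \textbf{Some-finite} holds.

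For $\mathrm{III}$: there is no nonzero finite projection, so \textbf{Some-finite} fails. This forces \textbf{Whole-finite} to fail (take $p=1$) and also \textbf{Atoms}, since minimal projections are finite.

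Finally, the five rows are pairwise distinct and every factor has exactly one of these types. So the triple of truth values determines the type, and the converse direction follows automatically.

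The main subtlety lies in the standard facts that are used implicitly rather than in any new argument. The first is the dichotomy that a type~$\mathrm I$ factor is $\mathrm I_n$ or $\mathrm I_\infty$ according as $1$ is finite or infinite; here I would invoke $\mathcal M\cong B(\mathcal H)$ for type~$\mathrm I$ factors \cite[Ch.~6]{KadisonRingrose2}. The second is that type~$\mathrm{II}$ splits into $\mathrm{II}_1$ and $\mathrm{II}_\infty$ by finiteness of $1$. The third is that minimal projections are finite. I would also check that the corner $p\mathcal Mp$ is again a factor, so that Theorem~\ref{thm:fixedpoint} applies uniformly to every $p$. With these facts the proof is a bookkeeping exercise.
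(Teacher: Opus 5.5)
Your proposal is correct and follows essentially the same route as the paper. It reduces \textbf{Atoms} to type~$\mathrm I$ via Theorem~\ref{thm:typeI}, and \textbf{Whole-finite}/\textbf{Some-finite} to finiteness of $1$ and to the existence of a nonzero finite projection via Theorem~\ref{thm:fixedpoint}, then checks each row and uses pairwise distinctness of the rows. The only differences are minor: you read the existence of a nonzero finite projection in type~$\mathrm{II}_\infty$ directly off the definition of type~$\mathrm{II}$ rather than citing a separate fact, and you spell out the $\mathrm I_n$/$\mathrm I_\infty$ split via $\mathcal M\cong B(\mathcal H)$.
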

For the Type~$\mathrm{III}$ local algebras characteristic of AQFT, this
classification has a direct finite-precision significance. Such factors
have no minimal projections and hence no pure normal states;
correspondingly, none of the nonzero projection faces $F_p$ is a
singleton. Thus the finite-dimensional picture in which sufficiently
sharp projection information can isolate a rank-one state has no local
Type~$\mathrm{III}$ analogue. Parcels require neither minimal projections
nor density-matrix representations and therefore provide finite
observational descriptions directly on the normal state space where this
familiar picture is unavailable.

The qualitative criteria above determine factor type but do not compare
finite projections within a Type~$\mathrm{II}$ factor. In a
Type~$\mathrm{II}_1$ factor this additional information is encoded by
homogeneous parcel counting.

Let $\tau$ be the normalized trace. For every $n\geq1$ choose a
decomposition
\[
1=e_1^{(n)}+\cdots+e_n^{(n)}
\]
into mutually orthogonal equivalent projections, each of trace $1/n$.
For a projection $p$, let $k_n(p)$ be the largest $k\leq n$ for which
there are mutually orthogonal projections $q_1,\ldots,q_k\leq p$ with
$q_j\sim e_1^{(n)}$.

\begin{theorem}[Parcel counting reconstructs the trace]
\label{thm:convergence}
For every projection $p$ in a Type~$\mathrm{II}_1$ factor,
\[
k_n(p)=\lfloor n\tau(p)\rfloor,
\qquad
\frac{k_n(p)}{n}\longrightarrow\tau(p).
\]
The limit is independent of the homogeneous partitions chosen.
\end{theorem}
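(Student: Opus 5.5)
The argument uses the standard facts about the normalized trace $\tau$ on a Type~$\mathrm{II}_1$ factor $\mathcal M$: $\tau$ is faithful and normal, $p\sim q$ if and only if $\tau(p)=\tau(q)$, $p\precsim q$ if and only if $\tau(p)\le\tau(q)$, and the range of $\tau$ on projections is all of $[0,1]$ \cite[Ch.~6, Ch.~8]{KadisonRingrose2}. Since all $e_j^{(n)}$ are mutually equivalent and sum to $1$, additivity gives $\tau(e_1^{(n)})=1/n$. I then prove $k_n(p)=\lfloor n\tau(p)\rfloor$ by two inequalities. The convergence $k_n(p)/n\to\tau(p)$ follows from $n\tau(p)-1<\lfloor n\tau(p)\rfloor\le n\tau(p)$. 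Independence of the chosen partitions is immediate, because the formula involves only $\tau(p)$.

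For the upper bound, suppose $q_1,\dots,q_k\le p$ are mutually orthogonal with $q_j\sim e_1^{(n)}$. Then $\sum_j q_j\le p$, so positivity and additivity of $\tau$ give
\[
k/n=\tau\Bigl(\sum_j q_j\Bigr)\le\tau(p).
\]
Hence $k\le\lfloor n\tau(p)\rfloor$.

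For the lower bound, set $m=\lfloor n\tau(p)\rfloor\le n$ and build the $q_j$ inductively inside $p$. Suppose $q_1,\dots,q_j$ with $j<m$ have been constructed. The remainder $r=p-\sum_{i\le j}q_i$ satisfies
\[
\tau(r)=\tau(p)-j/n\ge 1/n=\tau(e_1^{(n)}).
\]
Comparison then gives $e_1^{(n)}\precsim r$: there is a partial isometry $v$ with $v^*v=e_1^{(n)}$ and $vv^*\le r$. Set $q_{j+1}=vv^*$. It lies under $p$, is orthogonal to the earlier $q_i$, and is equivalent to $e_1^{(n)}$. After $m$ steps we have $k_n(p)\ge m$.

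The main obstacle is only the appeal to the comparison theorem in the factor, namely that $\tau(e)\le\tau(f)$ implies $e\precsim f$. This rests on the trace being a complete invariant for Murray--von Neumann equivalence in a $\mathrm{II}_1$ factor, which I cite as standard rather than reprove. If a self-contained route were wanted, the plan is to use the comparison theorem (either $e\precsim f$ or $f\precsim e$) and to rule out strict $f\prec e$ with $\tau(f)<\tau(e)$ only when $\tau(e)\le\tau(f)$, using faithfulness of $\tau$.
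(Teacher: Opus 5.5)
Your proof is correct and follows the paper's approach: the upper bound comes from additivity of $\tau$ over mutually orthogonal copies of $e_1^{(n)}$ lying under $p$, and the lower bound from the dimension theory of the $\mathrm{II}_1$ factor. The only minor difference is that you build the $q_j$ greedily inside $p$, using just the comparison $\tau(e)\le\tau(f)\Rightarrow e\precsim f$, whereas the paper takes a projection of trace $k/n$ subequivalent to $p$ and splits it into $k$ pieces of trace $1/n$; your version makes explicit that each $q_j$ actually lies under $p$ and is equivalent to $e_1^{(n)}$.
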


\begin{proof}
Each $q_j$ has trace $1/n$, so $k$ mutually orthogonal such projections
below $p$ require $k/n\leq\tau(p)$ and hence
$k_n(p)\leq\lfloor n\tau(p)\rfloor$. Conversely, if
$k=\lfloor n\tau(p)\rfloor$, the continuous dimension theory of a
Type~$\mathrm{II}_1$ factor gives a projection of trace $k/n$
subequivalent to $p$, which can be decomposed into $k$ mutually
orthogonal projections of trace $1/n$. Hence
$k_n(p)=\lfloor n\tau(p)\rfloor$, and the convergence follows.
\end{proof}
\begin{example}[Recovering trace by parcel counting]
If $\tau(p)=0.37$ in a Type~$\mathrm{II}_1$ factor, then
Theorem~\ref{thm:convergence} gives
$k_n(p)=\lfloor0.37n\rfloor$. Thus, for example,
$k_{10}(p)=3$ and $k_{100}(p)=37$, while
$k_n(p)/n\to0.37$. Thus the value $\tau(p)$ is recovered from the asymptotic homogeneous
counting ratios.
\end{example}
Geometrically, $k_n(p)$ counts mutually disjoint transported copies of
the standard face $F_{e_1^{(n)}}$ inside $F_p$: if
$q_j\sim e_1^{(n)}$ and $q_j\leq p$, then $F_{q_j}\subseteq F_p$, while
a partial isometry implementing the equivalence induces the corresponding
affine homeomorphism of faces. The transported faces, rather than the
fixed $F_{e_i^{(n)}}$ themselves, must be counted because
subequivalence need not imply $e_i^{(n)}\leq p$.

\begin{corollary}[Projection equivalence from parcel counting]
\label{cor:equivcounting}
For projections $p,q$ in a Type~$\mathrm{II}_1$ factor,
\[
p\sim q
\quad\Longleftrightarrow\quad
\lim_{n\to\infty}\frac{k_n(p)}n
=
\lim_{n\to\infty}\frac{k_n(q)}n.
\]
\end{corollary}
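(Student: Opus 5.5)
By Theorem~\ref{thm:convergence}, both limits exist and equal $\tau(p)$ and $\tau(q)$. The corollary therefore reduces to the classical fact that in a Type~$\mathrm{II}_1$ factor two projections are Murray--von Neumann equivalent if and only if they have the same trace. The plan has three steps.

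First, apply Theorem~\ref{thm:convergence} to $p$ and to $q$, using the same homogeneous partitions $\{e_i^{(n)}\}$ for both. This gives
\[
\lim_{n\to\infty}\frac{k_n(p)}{n}=\tau(p),
\qquad
\lim_{n\to\infty}\frac{k_n(q)}{n}=\tau(q).
\]
The independence-of-partition clause of that theorem ensures that the comparison does not depend on which partitions were chosen.

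Second, prove the direction ($\Rightarrow$). If $p\sim q$ via $v$ with $v^*v=p$ and $vv^*=q$, then the trace property gives
\[
\tau(p)=\tau(v^*v)=\tau(vv^*)=\tau(q),
\]
so the limits coincide. Alternatively, one can argue directly at the level of counting. Conjugation by $v$ transports any family of mutually orthogonal $q_j\leq p$ with $q_j\sim e_1^{(n)}$ to the family $vq_jv^*\leq q$, which is again mutually orthogonal and equivalent to $e_1^{(n)}$. Hence $k_n(p)=k_n(q)$ for every $n$. This is the face-transport picture described after the example.

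Third, prove the direction ($\Leftarrow$). Equality of the limits gives $\tau(p)=\tau(q)$. The comparison theorem for factors says that either $p\precsim q$ or $q\precsim p$. Suppose $p\sim q_0\leq q$. Then $\tau(q-q_0)=\tau(q)-\tau(p)=0$, and faithfulness of $\tau$ forces $q_0=q$, so $p\sim q$. The other case is symmetric.

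The only substantive step is this last one. It relies on comparability of projections in a factor and on faithfulness of the normal trace on a Type~$\mathrm{II}_1$ factor, both standard results \cite[Ch.~6, Ch.~8]{KadisonRingrose2}; I would simply cite them. Everything else is bookkeeping on top of Theorem~\ref{thm:convergence}.
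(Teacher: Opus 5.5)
Your proposal is correct and follows the intended route: Theorem~\ref{thm:convergence} identifies the two limits with $\tau(p)$ and $\tau(q)$. The classical fact that projections in a Type~$\mathrm{II}_1$ factor are equivalent exactly when their traces agree (via comparability and faithfulness of $\tau$) then finishes the argument. The paper states the corollary without a separate proof, leaving precisely this step implicit.
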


Thus the parcel description exhibits a three-level hierarchy:
minimality is encoded by static convex geometry, finiteness by the
canonical conjugation action, and comparison of finite projections by
the continuous geometry of homogeneous parcel partitions.

\section{Open Problems and Research Directions}
\label{sec:open}

Several directions remain for further development of IAQFT. First, the
geometry of parcels on Type~$\mathrm{III}$ state spaces is largely
unexplored. Since local AQFT algebras are typically Type~$\mathrm{III}$,
it would be useful to identify intrinsic geometric invariants of parcel
refinement that require neither traces nor density-matrix
representations. The operational volume and relative information gain
introduced in Section~\ref{sec:operational-volume} provide
finite-dimensional measures associated with chosen observational
coordinates. An important open problem is to develop intrinsic
entropy-like or complexity measures directly from Type~$\mathrm{III}$
parcel geometry and to determine their relation to physical entropy and
entanglement.

A related problem is to develop parcel formulations of the finer
Connes--Takesaki invariants of Type~$\mathrm{III}$ factors. The Connes
invariant $S(\mathcal M)$ distinguishes the
Type~$\mathrm{III}_0$, Type~$\mathrm{III}_\lambda$ and
Type~$\mathrm{III}_1$ cases~\cite{Connes1973}. For a normal parcel $O$,
with faithful part
\[
O_f=O\cap\mathcal S_f(\mathcal M),
\]
one may consider the common modular spectrum
\[
S_O(\mathcal M)
=
\bigcap_{\psi\in O_f}\operatorname{Sp}(\Delta_\psi),
\]
which gives an outer approximation to $S(\mathcal M)$. It would be
interesting to determine conditions under which suitable families of
parcel spectra recover the Connes invariant.

A complementary approach is suggested by the Connes--Takesaki
continuous decomposition and flow of weights
\cite{ConnesTakesaki1977}. Passing to the crossed-product core replaces
the Type~$\mathrm{III}$ algebra by a semifinite algebra equipped with a
canonical trace-scaling dual action. This raises the question whether
parcels can be transported to the core so that its semifinite trace
geometry and dual dynamics yield finite-precision invariants recovering
the flow of weights. Such a construction would connect the parcel
geometry of Type~$\mathrm{II}$ algebras developed above with the modular
parcel dynamics of Sections~\ref{sec:modular}--\ref{sec:kms}, and could
provide a finer parcel-theoretic classification of Type~$\mathrm{III}$
factors.

Parcel equivalence also admits natural extensions. The formulation in
Section~\ref{sec:haag} concerns vacuum refinement families; analogous
notions could be developed for KMS states, charged states and other
superselection sectors, and compared with the corresponding notions of
representation equivalence. Another question is the relation between the
modular parcel duality of Proposition~\ref{prop:Jparcel} and operational
locality, particularly under Haag duality.

A more fundamental problem is to derive parcel reduction from physical
dynamics. The reduction theorems above give conditions under which
decreasing observational oscillations determine a unique limiting state,
while the finite-dimensional theory provides sufficient volume-contraction
conditions. It remains to identify measurement processes for which such
shrinking follows intrinsically from repeated observation. Informational
completeness, ergodicity, mixing and related dynamical mechanisms provide
natural candidates.

Ultimately one would like an infinite-dimensional dynamical reduction
theory requiring no finite-dimensional volume: repeated local
measurements would themselves force decreasing observational oscillation
on a state-separating family of observables and thereby determine a
unique limiting state. Such a result would connect the abstract
reduction theory of IAQFT directly to physically specified measurement
dynamics.
\section{Conclusion: Conceptual Consequences of IAQFT}
\label{sec:conclude}

We have developed Interval Algebraic Quantum Field Theory (IAQFT) from
the distinction between exact mathematical states and finite
observational information. Exact states remain indispensable objects of
AQFT, but an experiment supplies only finitely many observations at
finite resolution and therefore does not operationally specify a unique
point state. Finite experimental specifications determine observational
parcels through finitely many finite-precision expectation constraints,
while quantum parcels, or simply parcels, are non-empty convex weak$^*$
open regions of state space representing finite-precision state
information. IAQFT therefore takes parcels, rather than exactly specified
point states, as the natural operational description of experimentally
available state information. The underlying local and
operator-algebraic structure of AQFT is unchanged, and exact states are
recovered under appropriate conditions as ideal limits of successive
parcel refinement.

The general framework is compatible with measurement update, the
information order, locality and no-signalling. Global and local reduction
theorems give conditions under which parcel refinement determines a unique
limiting state. Although the AQFT state space is generally
infinite-dimensional, finitely many observations determine a
finite-dimensional observational shadow. Its volume gives an operational
measure of parcel size, while volume ratios give a relative information
gain invariant under invertible changes of coordinates within a fixed
observational space. In finite dimensions, the exact Jacobian of an
invertible L\"uders update yields a general volume formula and contraction
criterion.

The results also show that finite precision interacts with characteristic
AQFT phenomena in several distinct ways. Nonzero spacelike vacuum
correlations and strict Bell violations, when present, persist throughout
sufficiently small observational parcels and are therefore robust under
finite observational uncertainty, while nonselective local measurement
preserves spacelike-separated expectation values. Reeh--Schlieder cyclicity acquires
the operational interpretation of local reachability of finite-precision
parcels around pure normal target states. Haag-type obstructions, by
contrast, are not removed by the strong notion of finite-precision
equivalence considered here: equivalence of complete vacuum refinement
structures already forces equivalence of the limiting vacuum GNS
representations. For the Type~$\mathrm{III}$ local algebras characteristic
of AQFT, the absence of minimal projections and pure normal states is
reflected in the absence of singleton nonzero projection faces, while
parcels continue to provide finite observational descriptions without
requiring density-matrix or minimal-projection representations.

The modular structure of AQFT likewise admits a finite-precision
formulation. Modular automorphisms transport parcels, measurement update
is modularly covariant, and modular conjugation induces a duality between
normal parcels of a von Neumann algebra and its commutant. State-dependent
modular correlations are continuous under predual-norm variation of
faithful normal states, while shrinking parcels with vanishing KMS defect
recover a KMS state for fixed dynamics. In the Rindler-wedge setting,
finite-precision thermal correlations form value sets whose refinement
around the vacuum recovers the exact Bisognano--Wichmann KMS relation and
the Unruh temperature.

Two further results show how the parcel formalism interacts with the
mathematical structure surrounding AQFT. For compatible directed lattice
approximations, lattice and observational refinement remain distinct:
the former concerns approximation of the continuum observable algebra,
whereas the latter concerns increasing observational precision; under the
stated inductive-limit hypotheses the two are compatible. At the
von Neumann-algebraic level, limiting projection faces and their canonical
conjugation dynamics recover substantial Murray--von Neumann structure:
singleton faces characterize minimality, invariant states characterize
finiteness, and these criteria determine factor type. In
Type~$\mathrm{II}_1$ factors, homogeneous parcel counting reconstructs
the normalized trace and projection equivalence.

Taken together, these results show that the distinction between exact
states and finite observational information has substantive consequences
throughout AQFT. Some exact-theory properties are already robust on
finite-precision regions of state space; some acquire a natural
operational interpretation in terms of parcel reachability or
finite-precision value sets; structural obstructions such as Haag-type
inequivalence survive passage to parcels; and other exact relations
emerge under successive refinement. The resulting picture is one in
which parcels describe the state information physically available at
finite experimental resolution, while exact AQFT states provide the
indispensable mathematical idealisation recovered in the limiting
theory. IAQFT thereby places finite observational information and exact
operator-algebraic structure within a single framework, rather than
identifying experimentally available state information with an exactly
specified point of state space.
\appendix

\section{Proofs for Volume Contraction}\label{sec:proofs-vo}
This appendix collects the proofs omitted from Section~\ref{sec:finite}.

\noindent{\bf Lemma}~{\bf \ref{lem:proj}}~(Projective-linear Jacobian)
\begin{proof}
Let
\[
H:=\ker\ell,
\]
which is the tangent space of \(W\) at every point. Fix
\(x\in W\), and set
\[
p:=\ell(Lx)>0,
\qquad
y:=f(x)=\frac{Lx}{p}.
\]
Since \(\ell(y)=1\), we have \(y\in W\).

For \(\dot x\in H\), differentiation gives
\[
Df(x)[\dot x]
=
\frac{L\dot x}{p}
-
\frac{\ell(L\dot x)}{p^2}Lx
=
\frac{1}{p}
\bigl(
L\dot x-\ell(L\dot x)y
\bigr).
\]
Define
\[
P_y:\mathcal V\longrightarrow H,
\qquad
P_y(v):=v-\ell(v)y.
\]
Indeed,
\[
\ell(P_y(v))
=
\ell(v)-\ell(v)\ell(y)
=
0.
\]
Thus
\[
Df(x)
=
\frac{1}{p}\,P_yL\big|_H.
\]

It remains to compute the determinant of
\[
T:=P_yL\big|_H:H\longrightarrow H.
\]

Choose a basis \(u_1,\ldots,u_d\) of \(H\). Since \(x\in W\),
the vectors
\[
u_1,\ldots,u_d,x
\]
form a basis of \(\mathcal V\). Moreover,
\[
Tu_i
=
Lu_i-\ell(Lu_i)y.
\]
Wedging these vectors with \(y\), all terms containing two copies of
\(y\) vanish, and hence
\[
Tu_1\wedge\cdots\wedge Tu_d\wedge y
=
Lu_1\wedge\cdots\wedge Lu_d\wedge y.
\]
Since \(y=Lx/p\), the right-hand side is
\[
\frac{1}{p}
Lu_1\wedge\cdots\wedge Lu_d\wedge Lx
=
\frac{\det L}{p}\,
u_1\wedge\cdots\wedge u_d\wedge x.
\]

On the other hand, if \(\det_H T\) denotes the determinant of \(T\) on
\(H\), then
\[
Tu_1\wedge\cdots\wedge Tu_d
=
(\det_H T)\,
u_1\wedge\cdots\wedge u_d.
\]
Because \(x,y\in W\), we have \(y-x\in H\), and therefore
\[
u_1\wedge\cdots\wedge u_d\wedge y
=
u_1\wedge\cdots\wedge u_d\wedge x.
\]
Consequently,
\[
Tu_1\wedge\cdots\wedge Tu_d\wedge y
=
(\det_H T)\,
u_1\wedge\cdots\wedge u_d\wedge x.
\]
Comparing the two expressions yields
\[
\det_H T
=
\frac{\det L}{p}.
\]

Finally, since \(H\) has dimension \(d\),
\[
\det Df(x)
=
\frac{1}{p^d}\det_H T
=
\frac{\det L}{p^{d+1}}.
\]
Taking absolute values and recalling that \(p=\ell(Lx)\), we obtain
\[
\left|\det Df(x)\right|
=
\frac{|\det L|}
{\ell(Lx)^{d+1}}.
\]
\end{proof}

\noindent{\bf Lemma}~{\bf \ref{lem:detL}}~(Determinant of the linear part)
\begin{proof}
Write the polar decomposition
\[
M_j=U E_j^{1/2},
\]
where \(U\) may be chosen unitary and \(E_j^{1/2}\) is positive
semidefinite. Then
\[
L=\Ad_U\circ T_{E_j^{1/2}},
\]
where
\[
\Ad_U(\rho)=U\rho U^*,
\qquad
T_{E_j^{1/2}}(\rho)
=
E_j^{1/2}\rho E_j^{1/2}.
\]

The map \(\Ad_U\) is an isometry of the real Hilbert space
\(\Herm(\mathcal H)\) equipped with the Hilbert--Schmidt inner product
\[
\langle A,B\rangle_{\mathrm{HS}}
=
\Tr(AB).
\]
Hence its matrix in an orthonormal real basis is orthogonal, so
\[
\det_{\mathbb R}\Ad_U\in\{-1,1\}.
\]
Since the unitary group \(U(n)\) is connected and
\[
\det_{\mathbb R}\Ad_I=1,
\]
continuity implies
\[
\det_{\mathbb R}\Ad_U=1.
\]

Diagonalise \(E_j^{1/2}\):
\[
E_j^{1/2}
=
V D V^*,
\qquad
D=\diag(\sqrt{\lambda_1},\ldots,\sqrt{\lambda_n}),
\]
where \(\lambda_1,\ldots,\lambda_n\geq0\) are the eigenvalues of
\(E_j\). Since
\[
T_{E_j^{1/2}}
=
\Ad_V\circ T_D\circ\Ad_{V^*},
\]
and both conjugation maps have determinant \(1\), we have
\[
\det_{\mathbb R}T_{E_j^{1/2}}
=
\det_{\mathbb R}T_D.
\]

Let \(e_1,\ldots,e_n\) be the standard orthonormal basis and write
\[
F_{kk}=e_ke_k^*.
\]
For \(k<l\), set
\[
F_{kl}^{+}
=
\frac{e_ke_l^*+e_le_k^*}{\sqrt2},
\qquad
F_{kl}^{-}
=
\frac{i(e_ke_l^*-e_le_k^*)}{\sqrt2}.
\]
Then
\[
\{F_{kk}:1\leq k\leq n\}
\cup
\{F_{kl}^{+},F_{kl}^{-}:1\leq k<l\leq n\}
\]
is a real Hilbert--Schmidt orthonormal basis of
\(\Herm(\mathcal H)\).

In this basis,
\[
T_D(F_{kk})=\lambda_kF_{kk},
\]
and, for \(k<l\),
\[
T_D(F_{kl}^{\pm})
=
\sqrt{\lambda_k\lambda_l}\,F_{kl}^{\pm}.
\]
Thus
\[
\det_{\mathbb R}T_D
=
\prod_{k=1}^n\lambda_k
\prod_{k<l}
\left(\sqrt{\lambda_k\lambda_l}\right)^2
=
\prod_{k=1}^n\lambda_k
\prod_{k<l}\lambda_k\lambda_l.
\]
Each \(\lambda_k\) occurs once in the first product and \(n-1\) times
in the second, and therefore
\[
\det_{\mathbb R}T_D
=
\prod_{k=1}^n\lambda_k^n
=
\left(\prod_{k=1}^n\lambda_k\right)^n
=
(\det E_j)^n.
\]
Since \(\det_{\mathbb R}\Ad_U=1\), it follows that
\[
\det_{\mathbb R}L
=
(\det E_j)^n.
\]
\end{proof}

\noindent{\bf Proposition}~{\bf \ref{prop:jacobian}}~(Exact Jacobian of the L\"uders map)
\begin{proof}
Apply Lemma~\ref{lem:proj} with
\[
\mathcal V=\Herm(\mathcal H),
\qquad
\ell=\Tr,
\qquad
W=\{\rho\in\Herm(\mathcal H):\Tr\rho=1\},
\]
and
\[
L(\rho)=M_j\rho M_j^*.
\]
Since \(\Herm(\mathcal H)\) has real dimension \(n^2\), the affine
hyperplane \(W\) has dimension
\[
d=n^2-1,
\]
and hence
\[
d+1=n^2.
\]

Because \(M_j\) is invertible, the real linear map \(L\) is invertible.
Moreover, by Lemma~\ref{lem:detL},
\[
\det_{\mathbb R}L=(\det E_j)^n.
\]
Therefore Lemma~\ref{lem:proj} gives
\[
\left|\det Df_j(\rho)\right|
=
\frac{|\det_{\mathbb R}L|}
{\Tr(L(\rho))^{n^2}}
=
\frac{(\det E_j)^n}
{\Tr(M_j\rho M_j^*)^{n^2}}.
\]
Finally, cyclicity of the trace and \(E_j=M_j^*M_j\) give
\[
\Tr(M_j\rho M_j^*)
=
\Tr(M_j^*M_j\rho)
=
\Tr(E_j\rho).
\]
Hence
\[
\left|\det Df_j(\rho)\right|
=
\frac{(\det E_j)^n}
{\Tr(E_j\rho)^{n^2}}.
\]
\end{proof}

\section{Relative Modular Continuity}
\label{sec:kms-continuity-proof}
\begin{lemma}[Convergence of vector representatives and modular operators]
\label{lem:cocycle_strong}
Let $\mathcal M$ be a $\sigma$-finite von Neumann algebra in standard
form $(\mathcal M,\mathcal H,J,\mathcal P)$, and let
$\omega\in\mathcal S_f(\mathcal M)$. Let
$\{\psi_n\}\subseteq\mathcal S_f(\mathcal M)$ satisfy
\[
\|\psi_n-\omega\|_{\mathcal M_*}\longrightarrow0.
\]
Then:
\begin{enumerate}
\item[\rm(i)]
\[
\|\Omega_{\psi_n}-\Omega_\omega\|_{\mathcal H}
\longrightarrow0.
\]

\item[\rm(ii)]
For every $x\in\mathcal M$,
\[
\left\|
\Delta_{\psi_n,\omega}^{1/2}x\Omega_\omega
-
\Delta_\omega^{1/2}x\Omega_\omega
\right\|_{\mathcal H}
\longrightarrow0.
\]

\item[\rm(iii)]
\[
\Delta_{\psi_n,\omega}
\longrightarrow
\Delta_\omega
\]
in the strong resolvent sense.

\item[\rm(iv)]
For every fixed $t\in\mathbb R$,
\[
[D\psi_n:D\omega]_t
\longrightarrow I
\]
strongly, and, for every $A,B\in\mathcal M$,
\[
\psi_n\bigl(A\sigma_t^{\psi_n}(B)\bigr)
\longrightarrow
\omega\bigl(A\sigma_t^\omega(B)\bigr).
\]

\end{enumerate}
\end{lemma}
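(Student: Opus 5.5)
The plan is to work throughout in the standard form $(\mathcal M,\mathcal H,J,\mathcal P)$, where every normal state $\varphi$ has a unique vector representative $\Omega_\varphi\in\mathcal P$. Each of (i)--(iv) will be reduced to the norm convergence in (i), using two standard tools: the Powers--St{\o}rmer--Araki inequality and the defining relation of the relative Tomita operator.

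For (i) I will invoke the Araki--Powers--St{\o}rmer inequality for vectors in the natural cone,
\[
\|\Omega_{\psi_n}-\Omega_\omega\|^2\leq\|\psi_n-\omega\|_{\mathcal M_*},
\]
which gives (i) at once.

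For (ii) I will use the relative Tomita operator $S_{\psi,\omega}$. It is the closure of $x\Omega_\omega\mapsto x^*\Omega_\psi$ on the dense domain $\mathcal M\Omega_\omega$; this operator is closable because $\Omega_\omega$ is cyclic and separating. Its polar decomposition in standard form is $S_{\psi,\omega}=J\Delta_{\psi,\omega}^{1/2}$, with the same modular conjugation $J$ because both representatives lie in $\mathcal P$. Hence
\[
\Delta_{\psi_n,\omega}^{1/2}x\Omega_\omega=Jx^*\Omega_{\psi_n},
\qquad
\Delta_\omega^{1/2}x\Omega_\omega=Jx^*\Omega_\omega,
\]
and their difference has norm at most $\|x\|\,\|\Omega_{\psi_n}-\Omega_\omega\|$, which tends to $0$ by (i).

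For (iii) I will apply the standard criterion in Reed--Simon, Thm.~VIII.25(a). It says: if $D$ is a core for the limit self-adjoint operator $T$, lies in the domains of all $T_n$, and $T_n\xi\to T\xi$ for $\xi\in D$, then $T_n\to T$ in the strong resolvent sense. I take $T_n=\Delta_{\psi_n,\omega}^{1/2}$, $T=\Delta_\omega^{1/2}$ and $D=\mathcal M\Omega_\omega$. $D$ lies in every domain by construction, and it is a core for $\Delta_\omega^{1/2}$ because $S_\omega$ is by definition the closure from $D$. Pointwise convergence on $D$ is (ii). Strong resolvent convergence of the square roots then gives it for $\Delta_{\psi_n,\omega}$ itself, since $\lambda\mapsto\lambda^2$ composed with bounded continuous functions stays in the functional calculus. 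I expect this step to be the main obstacle. One must verify that the hypotheses really match: $\Delta_{\psi_n,\omega}$ is nonsingular because $\psi_n$ is faithful, so $\Delta_{\psi_n,\omega}^{it}$ is unitary; and the domain/core statement must be checked for relative rather than ordinary modular operators. If Thm.~VIII.25 is inconvenient, I would instead prove resolvent convergence directly via graph convergence on the common core.

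For (iv) I will use the standard formula $[D\psi_n:D\omega]_t=\Delta_{\psi_n,\omega}^{it}\Delta_\omega^{-it}$. By (iii), the bounded continuous functions $\lambda\mapsto\lambda^{it}$ (on $(0,\infty)$, extended appropriately) give $\Delta_{\psi_n,\omega}^{it}\to\Delta_\omega^{it}$ strongly, so the cocycle $u_t^{(n)}$ tends to $I$ strongly. Since these are unitaries, also
\[
\|u_t^{(n)*}\xi-\xi\|=\|\xi-u_t^{(n)}\xi\|\to0 .
\]
Connes' theorem gives $\sigma_t^{\psi_n}(B)=u_t^{(n)}\sigma_t^\omega(B)u_t^{(n)*}$. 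Therefore
\[
\psi_n\bigl(A\sigma_t^{\psi_n}(B)\bigr)
=\bigl\langle\Omega_{\psi_n},\,A\,u_t^{(n)}\sigma_t^\omega(B)\,u_t^{(n)*}\Omega_{\psi_n}\bigr\rangle .
\]
I split the difference with $\langle\Omega_\omega,A\sigma_t^\omega(B)\Omega_\omega\rangle$ telescopically. The terms involving $\Omega_{\psi_n}-\Omega_\omega$ are controlled by (i) and uniform operator bounds. The remaining terms are $\langle A^*\Omega_\omega,(u_t^{(n)}-I)\sigma_t^\omega(B)\Omega_\omega\rangle$ and $\langle\sigma_t^\omega(B)^*A^*\Omega_\omega,(u_t^{(n)*}-I)\Omega_\omega\rangle$, which vanish by strong convergence. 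This proves (iv).
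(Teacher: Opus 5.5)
Your proposal follows the paper's proof almost step for step. Part (i) uses the same Araki (Powers--St{\o}rmer) inequality in the natural cone. Part (ii) uses the same identity $\Delta_{\psi_n,\omega}^{1/2}x\Omega_\omega=Jx^*\Omega_{\psi_n}$. For (iii), the paper reproves Reed--Simon VIII.25(a) by hand on the core $\mathcal M\Omega_\omega$ and then squares via $(\lambda^2-z)^{-1}$, exactly as you do. Part (iv) uses the same Connes cocycle argument.

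The one step that does not go through as written is the passage from (iii) to $\Delta_{\psi_n,\omega}^{it}\to\Delta_\omega^{it}$. The function $\lambda\mapsto\lambda^{it}=e^{it\log\lambda}$ oscillates without limit as $\lambda\downarrow0$, so it has no bounded continuous extension to $[0,\infty)$. Meanwhile $0$ generally lies in the spectrum of $\Delta_\omega$; this always happens in the type~$\mathrm{III}$ case. So the theorem that strong resolvent convergence gives $f(\Delta_{\psi_n,\omega})\to f(\Delta_\omega)$ strongly for bounded continuous $f$ does not apply to $f(\lambda)=\lambda^{it}$. Your phrase ``extended appropriately'' hides the only delicate point of the lemma. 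The missing ingredient is faithfulness of the \emph{limit} state, i.e.\ $E_{\Delta_\omega}(\{0\})=0$. You invoke only nonsingularity of $\Delta_{\psi_n,\omega}$, and only to make $\Delta_{\psi_n,\omega}^{it}$ unitary.

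The paper closes this gap with the function $f_z(\lambda)=(\log\lambda-z)^{-1}$ for $\lambda>0$, $f_z(0)=0$, which does lie in $C_0([0,\infty))$. Because the spectral projections at $\{0\}$ vanish, $f_z(\Delta_{\psi_n,\omega})=(\log\Delta_{\psi_n,\omega}-z)^{-1}$. Hence $\log\Delta_{\psi_n,\omega}\to\log\Delta_\omega$ in the strong resolvent sense, and $\lambda\mapsto e^{it\lambda}$ is bounded and continuous on $\mathbb R$.

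If you prefer your direct route, a cutoff also works. Take continuous $g_\varepsilon$ with $|g_\varepsilon|\le1$ and $g_\varepsilon=f$ on $[\varepsilon,\infty)$, and continuous $h_\varepsilon$ with $0\le h_\varepsilon\le1$, $h_\varepsilon=1$ on $[0,\varepsilon]$ and $h_\varepsilon=0$ on $[2\varepsilon,\infty)$. Then $\|(f-g_\varepsilon)(\Delta_{\psi_n,\omega})\xi\|\le2\|h_\varepsilon(\Delta_{\psi_n,\omega})\xi\|\to2\|h_\varepsilon(\Delta_\omega)\xi\|\le2\|E_{\Delta_\omega}([0,2\varepsilon])\xi\|$. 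This is small for small $\varepsilon$ precisely because $E_{\Delta_\omega}(\{0\})=0$.

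There is also a minor slip in your final telescoping. The first remaining term should be $\langle A^*\Omega_\omega,(u_t^{(n)}-I)\sigma_t^\omega(B)u_t^{(n)*}\Omega_\omega\rangle$, not the same expression with $\Omega_\omega$ in place of $u_t^{(n)*}\Omega_\omega$. It still tends to $0$ by uniform boundedness, which the paper packages as $u_nXu_n^*\to X$ strongly with $\|u_nXu_n^*\|=\|X\|$.
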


\begin{proof}
Throughout, we use the standard-form identities
\cite[Thm.~III.2.3]{brattelirobinson1}
\[
\Delta_{\psi_n,\omega}^{1/2}x\Omega_\omega
=
Jx^*\Omega_{\psi_n},
\qquad
\Delta_\omega^{1/2}x\Omega_\omega
=
Jx^*\Omega_\omega,
\qquad x\in\mathcal M.
\]

\medskip
\noindent
\emph{Part~(i).}
By the Araki--Connes--Haagerup inequality
\cite[Thm.~III.2.4]{brattelirobinson1},
\[
\|\Omega_{\psi_n}-\Omega_\omega\|_{\mathcal H}^2
\leq
\|\psi_n-\omega\|_{\mathcal M_*}.
\]
Hence
\[
\|\Omega_{\psi_n}-\Omega_\omega\|_{\mathcal H}
\longrightarrow0.
\]

\medskip
\noindent
\emph{Part~(ii).}
For $x\in\mathcal M$, the standard-form identities and the isometry of
$J$ give
\[
\begin{aligned}
&
\left\|
\Delta_{\psi_n,\omega}^{1/2}x\Omega_\omega
-
\Delta_\omega^{1/2}x\Omega_\omega
\right\|
\\
&\qquad
=
\left\|
Jx^*(\Omega_{\psi_n}-\Omega_\omega)
\right\|
\\
&\qquad
=
\left\|
x^*(\Omega_{\psi_n}-\Omega_\omega)
\right\|
\\
&\qquad
\leq
\|x\|\,
\|\Omega_{\psi_n}-\Omega_\omega\|.
\end{aligned}
\]
Part~\emph{(i)} therefore implies
\[
\Delta_{\psi_n,\omega}^{1/2}x\Omega_\omega
\longrightarrow
\Delta_\omega^{1/2}x\Omega_\omega
\]
for every $x\in\mathcal M$.

\medskip
\noindent
\emph{Part~(iii).}
The subspace $\mathcal M\Omega_\omega$ is a core for
$\Delta_\omega^{1/2}$. Indeed, the Tomita operator $S_\omega$ is the
closure of the operator initially defined on $\mathcal M\Omega_\omega$
by
\[
x\Omega_\omega\longmapsto x^*\Omega_\omega,
\qquad x\in\mathcal M,
\]
so $\mathcal M\Omega_\omega$ is a core for $S_\omega$. Since
\[
S_\omega=J\Delta_\omega^{1/2},
\]
with $J$ antiunitary and
$D(S_\omega)=D(\Delta_\omega^{1/2})$, the graph norms of
$S_\omega$ and $\Delta_\omega^{1/2}$ coincide. Hence
$\mathcal M\Omega_\omega$ is also a core for
$\Delta_\omega^{1/2}$.

Set
\[
S_n=\Delta_{\psi_n,\omega}^{1/2},
\qquad
S=\Delta_\omega^{1/2},
\qquad
D_0=\mathcal M\Omega_\omega.
\]
By part~\emph{(ii)}, for every $\xi\in D_0$,
\[
\xi\in D(S_n)
\qquad\text{and}\qquad
S_n\xi\longrightarrow S\xi.
\]
Since $D_0$ is a core for $S$ and $S$ is self-adjoint, the map
$\xi\mapsto(S-iI)\xi$ is an isometry of $(D(S),\|\cdot\|_S)$ onto
$\mathcal H$; hence the graph-norm density of $D_0$ in $D(S)$ transfers
to norm-density of $(S-iI)D_0$ in $\mathcal H$. Let
\[
\eta=(S-iI)\xi,
\qquad
\xi\in D_0.
\]
Then
\[
\begin{aligned}
(S_n-iI)^{-1}\eta-(S-iI)^{-1}\eta
&=
(S_n-iI)^{-1}(S-iI)\xi-\xi
\\
&=
(S_n-iI)^{-1}(S-S_n)\xi.
\end{aligned}
\]
Since each $S_n$ is self-adjoint,
\[
\|(S_n-iI)^{-1}\|\leq1,
\]
and therefore
\[
\left\|
(S_n-iI)^{-1}\eta-(S-iI)^{-1}\eta
\right\|
\leq
\|(S-S_n)\xi\|
\longrightarrow0.
\]
Thus
\[
(S_n-iI)^{-1}
\longrightarrow
(S-iI)^{-1}
\]
strongly on the dense subspace $(S-iI)D_0$. The resolvents are uniformly
bounded, so the convergence extends to all of $\mathcal H$. Since
convergence at one non-real point is equivalent to strong resolvent
convergence for self-adjoint operators, it follows that
\[
S_n\longrightarrow S
\]
in the strong resolvent sense.

Now fix $z\in\mathbb C\setminus\mathbb R$ and define
\[
g_z(\lambda)
=
\frac{1}{\lambda^2-z},
\qquad
\lambda\in\mathbb R.
\]
The function $g_z$ is bounded and continuous. By the functional calculus
for strong resolvent convergence
\cite[Thm.~VIII.20]{ReedSimon1980},
\[
g_z(S_n)\longrightarrow g_z(S)
\]
strongly. Since
\[
g_z(S_n)
=
(S_n^2-zI)^{-1}
=
(\Delta_{\psi_n,\omega}-zI)^{-1}
\]
and
\[
g_z(S)
=
(S^2-zI)^{-1}
=
(\Delta_\omega-zI)^{-1},
\]
we conclude that
\[
\Delta_{\psi_n,\omega}
\longrightarrow
\Delta_\omega
\]
in the strong resolvent sense.

\medskip
\noindent
\emph{Part~(iv).}
Fix $z\in\mathbb C\setminus\mathbb R$ and define
\[
f_z(\lambda)
=
\begin{cases}
\dfrac{1}{\log\lambda-z}, & \lambda>0,\\[1ex]
0, & \lambda=0.
\end{cases}
\]
Since $\operatorname{Im}z\neq0$, the denominator does not vanish for
$\lambda>0$. Moreover,
\[
f_z(\lambda)\longrightarrow0
\qquad(\lambda\downarrow0),
\qquad
f_z(\lambda)\longrightarrow0
\qquad(\lambda\to\infty).
\]
Hence $f_z\in C_0([0,\infty))$. Extending $f_z$ by zero to
$(-\infty,0]$ gives a function, still denoted $f_z$, in
$C_0(\mathbb R)$.

By part~\emph{(iii)} and the functional calculus for strong resolvent
convergence
\cite[Thm.~VIII.20]{ReedSimon1980},
\[
f_z(\Delta_{\psi_n,\omega})
\longrightarrow
f_z(\Delta_\omega)
\]
strongly. Because $\psi_n$ and $\omega$ are faithful,
$\Delta_{\psi_n,\omega}$ and $\Delta_\omega$ are injective. Hence
\[
E_{\Delta_{\psi_n,\omega}}(\{0\})
=
E_{\Delta_\omega}(\{0\})
=
0.
\]
Since these operators are positive and injective,
$\log\Delta_{\psi_n,\omega}$ and $\log\Delta_\omega$ are therefore
well-defined self-adjoint operators by the Borel functional calculus.
Therefore the chosen value $f_z(0)=0$ does not affect the functional
calculus, and
\[
f_z(\Delta_{\psi_n,\omega})
=
(\log\Delta_{\psi_n,\omega}-zI)^{-1},
\qquad
f_z(\Delta_\omega)
=
(\log\Delta_\omega-zI)^{-1}.
\]
Hence
\[
\log\Delta_{\psi_n,\omega}
\longrightarrow
\log\Delta_\omega
\]
in the strong resolvent sense. Note that $0$ may lie in
$\sigma(\Delta_{\psi_n,\omega})$ or $\sigma(\Delta_\omega)$ without
affecting this conclusion: what is required is not a spectral gap but
the vanishing of the spectral projection at $0$, which faithfulness
already guarantees.

Applying the functional calculus for strong resolvent convergence
\cite[Thm.~VIII.20]{ReedSimon1980}, for every fixed $t\in\mathbb R$,
\[
e^{it\log\Delta_{\psi_n,\omega}}
\longrightarrow
e^{it\log\Delta_\omega}
\]
strongly. Equivalently,
\[
\Delta_{\psi_n,\omega}^{it}
\longrightarrow
\Delta_\omega^{it}
\]
strongly.

Using the Connes cocycle formula
\[
[D\psi_n:D\omega]_t
=
\Delta_{\psi_n,\omega}^{it}\Delta_\omega^{-it},
\]
and the unitarity of $\Delta_\omega^{-it}$, we obtain
\[
[D\psi_n:D\omega]_t
\longrightarrow I
\]
strongly.

Finally, put
\[
u_n=[D\psi_n:D\omega]_t,
\qquad
X=\sigma_t^\omega(B).
\]
The Connes cocycle relation gives
\[
\sigma_t^{\psi_n}(B)
=
u_nXu_n^*.
\]
Since $u_n\to I$ strongly and the $u_n$ are unitary,
$u_n^*\to I$ strongly. Consequently
\[
u_nXu_n^*
\longrightarrow X
\]
strongly, while
\[
\|u_nXu_n^*\|=\|X\|=\|B\|.
\]

By part~\emph{(i)},
\[
\Omega_{\psi_n}\longrightarrow\Omega_\omega
\]
in norm. Therefore
\[
\begin{aligned}
\psi_n\bigl(A\sigma_t^{\psi_n}(B)\bigr)
&=
\left\langle
\Omega_{\psi_n},
Au_nXu_n^*\Omega_{\psi_n}
\right\rangle
\\
&\longrightarrow
\left\langle
\Omega_\omega,
AX\Omega_\omega
\right\rangle
\\
&=
\omega\bigl(A\sigma_t^\omega(B)\bigr).
\end{aligned}
\]
This proves part~\emph{(iv)} and completes the proof.
\end{proof}

\section{Proofs for the Parcel Characterisation of Factor Type}
\label{app:parcelproofs}

This appendix collects the proofs omitted from
Section~\ref{sec:minimality-finiteness}, together with the auxiliary
lemmas and facts needed for them.

\subsection{Projection Faces and Minimality}

\begin{lemma}[Identification of $F_r$]
\label{lem:limit}
$F_r=\{\omega\in\Sn(\mathcal M):\omega(r)=1\}$.
\end{lemma}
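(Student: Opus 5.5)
The identity is a direct unwinding of the definitions
\[
O_n(r)=\{\omega\in\Sn(\mathcal M):1-\tfrac1n<\omega(r)\},
\qquad
F_r=\bigcap_{n\geq1}O_n(r),
\]
for a nonzero projection $r\in\mathcal M$. I will prove the two set inclusions separately, using only that every normal state satisfies $0\leq\omega(r)\leq1$ because $0\leq r\leq 1$.

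\emph{Inclusion $\supseteq$.} Let $\omega\in\Sn(\mathcal M)$ with $\omega(r)=1$. Then $1-\tfrac1n<1=\omega(r)$ for every $n\geq1$. Hence $\omega\in O_n(r)$ for all $n$, and so $\omega\in F_r$.

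\emph{Inclusion $\subseteq$.} Let $\omega\in F_r$. Then $\omega(r)>1-\tfrac1n$ for every $n$, and letting $n\to\infty$ gives $\omega(r)\geq1$. Positivity of $\omega$ applied to $1-r\geq0$ gives $\omega(r)\leq\omega(1)=1$. Therefore $\omega(r)=1$.

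There is no real obstacle. The only point that needs care is the upper bound $\omega(r)\leq1$, which is what turns the limiting inequality $\omega(r)\geq 1$ into the equality $\omega(r)=1$. It is also worth noting that $F_r$ is non-empty. Pick a unit vector $\xi\in r\mathcal H$, which exists since $r\neq0$. The vector state $\langle\xi,\cdot\,\xi\rangle$ is normal and satisfies $\langle\xi,r\xi\rangle=\|r\xi\|^2=1$, so it lies in $F_r$. Consequently each $O_n(r)$ is non-empty too, so each $O_n(r)$ is a genuine observational parcel.
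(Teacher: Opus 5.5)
Your proposal is correct and matches the paper's proof, which is the one-line observation that, since $0\le\omega(r)\le1$ for every normal state, $\omega\in O_n(r)$ for all $n$ if and only if $\omega(r)=1$. Your non-emptiness remark via a vector state $\langle\xi,\cdot\,\xi\rangle$ with $\xi\in r\mathcal H$ is also correct, although the lemma does not need it.
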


\begin{proof}
Since $0\le\omega(r)\le1$ always, $\omega\in O_n(r)$ for all $n$ iff
$\omega(r)=1$.
\end{proof}

\begin{lemma}[Support identity]
\label{lem:support}
For $\omega\in F_r$ and $A\in\mathcal M$, $\omega(A)=\omega(rAr)$.
\end{lemma}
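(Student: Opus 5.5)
The plan is to show that $\omega(A-rAr)=0$. First I would reduce to statements about $\omega$ on the complement $1-r$. Write $r^\perp:=1-r$, so that
\[
A=rAr+rAr^\perp+r^\perp Ar+r^\perp Ar^\perp .
\]
By linearity, the claim reduces to proving
\[
\omega(XB)=0=\omega(BX)
\qquad\text{for every }X\in\mathcal M,\ B\in\{r^\perp,\,r^\perp X'\}.
\]
More cleanly, it suffices to show that $\omega(Xr^\perp)=0$ and $\omega(r^\perp X)=0$ for all $X\in\mathcal M$. The three unwanted terms then vanish:
\begin{itemize}
\item $rAr^\perp$ has the form $Xr^\perp$;
\item $r^\perp Ar$ has the form $r^\perp X$;
\item $r^\perp Ar^\perp$ has both forms.
\end{itemize}

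The key input is the hypothesis $\omega\in F_r$, i.e.\ $\omega(r)=1$ by Lemma~\ref{lem:limit}. It gives
\[
\omega\bigl((r^\perp)^*r^\perp\bigr)=\omega(r^\perp)=1-\omega(r)=0.
\]
I would then apply the Cauchy--Schwarz inequality for the positive functional $\omega$:
\[
|\omega(Xr^\perp)|^2\le\omega(XX^*)\,\omega\bigl((r^\perp)^*r^\perp\bigr)=0.
\]
Similarly, $\omega(r^\perp X)=\overline{\omega(X^*r^\perp)}=0$. Substituting back into the four-term decomposition yields $\omega(A)=\omega(rAr)$.

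There is no real obstacle here; the only care needed is applying Cauchy--Schwarz in the correct order, as $|\omega(Y^*Z)|^2\le\omega(Y^*Y)\,\omega(Z^*Z)$ with $Y=X^*$ and $Z=r^\perp$. Normality of $\omega$ is not used; only positivity is. Alternatively, one could argue in the GNS representation: $\|\pi_\omega(r^\perp)\Omega_\omega\|^2=0$, so $\pi_\omega(r)\Omega_\omega=\Omega_\omega$, and hence
\[
\omega(A)=\langle\pi(r)\Omega,\pi(A)\pi(r)\Omega\rangle=\omega(rAr),
\]
which is the same argument.
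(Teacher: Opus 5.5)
Your proof is correct and is essentially the paper's argument: $\omega(1-r)=0$ together with Cauchy--Schwarz kills $\omega(X(1-r))$ and $\omega((1-r)X)$ for all $X$, and expanding $A=(r+(1-r))A(r+(1-r))$ then leaves only $\omega(rAr)$. The intermediate display with $B\in\{r^\perp,r^\perp X'\}$ is garbled and unnecessary; the ``more cleanly'' reduction that follows it is the right one.
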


\begin{proof}
Since $\omega(1-r)=0$, Cauchy--Schwarz gives $\omega((1-r)A)=0$ and
$\omega(A(1-r))=0$ for every $A$; expanding
$A=(r+(1-r))A(r+(1-r))$ and applying $\omega$ leaves only $\omega(rAr)$.
\end{proof}

\begin{definition}[Face]
A convex subset $F$ of a convex set $K$ is a \emph{face} if $\omega\in
F$, $\omega=t\omega_1+(1-t)\omega_2$ with $\omega_1,\omega_2\in K$,
$t\in(0,1)$ forces $\omega_1,\omega_2\in F$.
\end{definition}

\noindent{\bf Proposition}~{\bf \ref{prop:face}}~(Projection faces)
\begin{proof}
Since
$
0\le \omega(p)\le 1
$
for every normal state $\omega$, the functional
$
\omega\mapsto\omega(p)
$
has maximum value $1$, and
\[
F_p
=
\{\omega\in\Sn(\mathcal M):\omega(p)=1\}
\]
is precisely its maximizing set. Hence $F_p$ is exposed.

If
\[
\omega=t\omega_1+(1-t)\omega_2\in F_p,
\qquad
0<t<1,
\]
then
\[
1
=
t\,\omega_1(p)+(1-t)\,\omega_2(p),
\]
with
$
\omega_1(p),\omega_2(p)\le 1.
$
Therefore
\[
\omega_1(p)=\omega_2(p)=1,
\]
so
$
\omega_1,\omega_2\in F_p.
$
Thus $F_p$ is a face.

If $q\le p$ and $\omega\in F_q$, then
\[
1=\omega(q)\le\omega(p)\le1,
\]
hence
$
\omega(p)=1,
$
so
$
F_q\subseteq F_p.
$

Conversely, suppose
\[
F_q\subseteq F_p.
\]
If $q\not\le p$, then
\[
q(1-p)q\neq0.
\]
Since $q(1-p)q$ is a nonzero positive element of the corner
$q\mathcal Mq$, there exists a normal state $\varphi$ on
$q\mathcal Mq$ such that
\[
\varphi\bigl(q(1-p)q\bigr)>0.
\]
Define a normal state on $\mathcal M$ by
\[
\omega(A)=\varphi(qAq).
\]
Then
\[
\omega(q)=1,
\]
so $\omega\in F_q$. On the other hand,
\[
1-\omega(p)
=
\omega(1-p)
=
\varphi\bigl(q(1-p)q\bigr)
>0,
\]
and hence
\[
\omega(p)<1.
\]
Thus $\omega\notin F_p$, contradicting
$F_q\subseteq F_p$. Therefore
\[
q\le p.
\]
Finally, if $q<p$, then the nonzero projection
$
p-q
$
supports a normal state belonging to
\[
F_p\setminus F_q,
\]
so the inclusion is proper.
\end{proof}

\noindent{\bf Corollary}~{\bf \ref{cor:minimal}}~(Minimality)
\begin{proof}
We prove both directions.

$(\Rightarrow)$ Suppose $p$ is minimal. Then every self-adjoint element
of the corner $p\mathcal Mp$ is a real scalar multiple of $p$: indeed,
if $X=X^*\in p\mathcal Mp$, the spectral projections of $X$ lie under
$p$, and minimality of $p$ forces every spectral projection to be either
$0$ or $p$, so $X=ap$ for some $a\in\mathbb R$.

Now let $A\in\mathcal M$. Applying this to
\[
\operatorname{Re}(A)=\frac{A+A^*}{2},
\qquad
\operatorname{Im}(A)=\frac{A-A^*}{2i},
\]
gives real numbers $a,b$ such that
\[
p\,\operatorname{Re}(A)\,p=ap,
\qquad
p\,\operatorname{Im}(A)\,p=bp.
\]
Hence
\[
pAp=(a+ib)p.
\]
Thus $p\mathcal Mp=\mathbb Cp$.

Let $\omega\in F_p$. By the support identity,
\[
\omega(A)=\omega(pAp)=(a+ib)\omega(p)=a+ib.
\]
This value depends only on $A$, not on $\omega$. Therefore all states in
$F_p$ agree on every $A\in\mathcal M$, and $F_p$ is a singleton.

\medskip
\noindent
$(\Leftarrow)$ If $F_p$ is a singleton then $p$ is minimal. We prove the
contrapositive: if $p$ is not minimal, then $F_p$ contains at least two
distinct points.

If $p$ is not minimal, there exists a projection $q$ with
\[
0<q<p,
\]
i.e.\ $q\ne0$, $q\ne p$, and $q\le p$; then $p-q$ is also a nonzero
projection with $p-q\le p$.

Since $q$ and $p-q$ are nonzero projections, each supports a normal
state: choose $\omega_1,\omega_2\in\Sn(\mathcal M)$ with
\[
\omega_1(q)=1,
\qquad
\omega_2(p-q)=1.
\]

Since $q\le p$,
\[
\omega_1(p)\ge\omega_1(q)=1,
\]
hence $\omega_1(p)=1$ and $\omega_1\in F_p$. Since $p-q\le p$, similarly
\[
\omega_2(p)\ge\omega_2(p-q)=1,
\]
hence $\omega_2(p)=1$ and $\omega_2\in F_p$.

Finally $\omega_1\ne\omega_2$: indeed $\omega_1(q)=1$, while
\[
\omega_2(q)
=
\omega_2\bigl(p-(p-q)\bigr)
=
\omega_2(p)-\omega_2(p-q)
=
1-1
=
0.
\]
Hence $F_p$ contains the two distinct points $\omega_1,\omega_2$,
completing the proof.
\end{proof}

\begin{fact}[Extreme points are pure normal states]
\label{fact:extreme}
The extreme points of $\Sn(\mathcal M)$ are exactly the pure normal
states, and $\omega$ is pure iff its support projection is minimal.
(\cite[Thm.~7.3.4]{KadisonRingrose1}.)
\end{fact}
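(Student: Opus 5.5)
The plan is to derive both assertions from the standard theory of normal states and support projections. We treat this as a recalled fact, citing \cite[Thm.~7.3.4]{KadisonRingrose1}, and sketch the argument using Proposition~\ref{prop:face} and Corollary~\ref{cor:minimal}. We work with the support projection $s(\omega)$ of a normal state $\omega$. This is the smallest projection $s$ with $\omega(s)=1$, and it exists by normality: $1-s(\omega)$ is the supremum of the projections annihilated by $\omega$. Thus $\omega\in F_{s(\omega)}$, and by Proposition~\ref{prop:face} we have $F_q\subseteq F_{s(\omega)}$ whenever $q\le s(\omega)$.

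\emph{Minimal support implies pure.} Suppose $s(\omega)=p$ is minimal. If $\omega=t\omega_1+(1-t)\omega_2$ with normal $\omega_i$ and $0<t<1$, then $F_p$ is a face by Proposition~\ref{prop:face}, so $\omega_1,\omega_2\in F_p$. Corollary~\ref{cor:minimal} makes $F_p$ a singleton, hence $\omega_1=\omega_2=\omega$. So $\omega$ is extreme in $\Sn(\mathcal M)$.

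\emph{Pure implies minimal support.} Suppose $p=s(\omega)$ is not minimal, and pick $0<q<p$. Faithfulness of $\omega$ on $p\mathcal Mp$ gives $\lambda:=\omega(q)\in(0,1)$. The natural decomposition $\omega=\omega(q\cdot q)+\omega((p-q)\cdot(p-q))$ only works if the cross terms $\omega(qA(p-q))$ vanish. That holds when $q$ lies in the centralizer of $\omega$, and it need not hold in general. This is the main obstacle. To handle it, we use the standard route. If $\omega$ is extreme, then every positive normal functional $\phi\le\omega$ is a scalar multiple of $\omega$; writing $\omega=\phi+(\omega-\phi)$ with $\phi\ne 0,\omega$ exhibits a convex decomposition otherwise. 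By the Radon--Nikodym theorem in the commutant, in the GNS representation $\pi_\omega$ the operators $0\le T\le 1$ in $\pi_\omega(\mathcal M)'$ give such $\phi_T=\langle\Omega,T\cdot\,\Omega\rangle$. So $\pi_\omega(\mathcal M)'=\mathbb C$, and $\pi_\omega$ is irreducible. For normal $\omega$, $\pi_\omega(\mathcal M)\cong \mathcal M\,z$ for a central projection $z$, so $\mathcal Mz$ is a type~I factor $B(\mathcal K)$ acting irreducibly. Then $\omega$ is a normal vector state $\langle\xi,\cdot\,\xi\rangle$ on $B(\mathcal K)$, and its support is the rank-one projection onto $\xi$, which is minimal. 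Hence $s(\omega)$ is minimal in $\mathcal M$.

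\emph{Extreme points.} Finally, extreme points of $\Sn(\mathcal M)$ coincide with pure normal states. A pure normal state is extreme among all states, hence among normal ones. Conversely, by the preceding paragraph an extreme normal state has minimal support, and by the first step it is then extreme in the whole state space, so it is pure. We expect the irreducibility step in the converse direction to be the only genuinely nontrivial part, and we would defer it to the cited reference.
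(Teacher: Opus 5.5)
The paper does not prove this Fact; it only cites the Kadison--Ringrose theorem. Your sketch therefore takes a different and more explicit route, and it is essentially sound.

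\begin{itemize}
\item \emph{Minimal support $\Rightarrow$ extremality.} This follows from the face property in Proposition~\ref{prop:face} together with the singleton property in Corollary~\ref{cor:minimal}.
\item \emph{Extremality $\Rightarrow$ minimal support.} Your domination argument ($\phi_T\le\omega$, with $\omega-\phi_T$ normal) needs only extremality in $\Sn(\mathcal M)$, and it yields irreducibility of $\pi_\omega$. Then $\ker\pi_\omega=\mathcal M(1-z)$ and $\pi_\omega(\mathcal M)=B(\mathcal H_\omega)$ identify $s(\omega)$, which lies under $z$, with the preimage in $\mathcal Mz$ of the rank-one projection onto $\Omega_\omega$.
\end{itemize}

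These two steps give exactly the equivalence ``extreme in $\Sn(\mathcal M)$ $\Leftrightarrow$ minimal support'' that Theorem~\ref{thm:typeI} actually uses. Your route buys a transparent link with the paper's face machinery; the citation buys brevity. The irreducibility step is actually the elementary part. The step resting on deeper theory is that the image of the normal representation $\pi_\omega$ is a von Neumann algebra $*$-isomorphic to $\mathcal Mz$. That input is of the same depth as the cited theorem, so your sketch is self-contained only relative to it.

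One step needs repair. Your first step allows only decompositions $\omega=t\omega_1+(1-t)\omega_2$ with $\omega_1,\omega_2$ normal, so it proves extremality in $\Sn(\mathcal M)$, not in $\mathcal S(\mathcal M)$. Yet in your last paragraph you invoke it to conclude that $\omega$ is extreme in the whole state space, i.e.\ pure. That is precisely the non-automatic direction of the first assertion. The fix is one line, and either of the following works:
\begin{itemize}
\item The proofs of the face property in Proposition~\ref{prop:face} and of Lemma~\ref{lem:support} use only $0\le\omega(p)\le1$ and Cauchy--Schwarz. Hence $\{\omega\in\mathcal S(\mathcal M):\omega(p)=1\}$ is a face of $\mathcal S(\mathcal M)$, and it is a singleton when $p\mathcal Mp=\mathbb Cp$.
\item Alternatively, $\Sn(\mathcal M)$ is itself a face of $\mathcal S(\mathcal M)$, because a positive functional dominated by a normal one is normal.
\end{itemize}
With this added, and with the abandoned cross-term attempt deleted, the argument is complete.
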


\noindent{\bf Theorem}~{\bf \ref{thm:typeI}}~(Type~$\mathrm I$)
\begin{proof}
Suppose first that $\mathcal M$ is of type $\mathrm I$. Then every
nonzero projection $p\in\mathcal M$ dominates a minimal projection
$q\le p$. Any normal state $\omega$ supported on $q$ is pure by
Fact~\ref{fact:extreme}, and
\[
\omega\in F_q\subseteq F_p
\]
by Proposition~\ref{prop:face}. Since $\omega$ is an extreme point of
$\Sn(\mathcal M)$, it is in particular an extreme point of $F_p$.
Thus every nonzero exposed face $F_p$ has an extreme point.

Conversely, suppose that $F_p$ has an extreme point $\omega$ for some
nonzero projection $p\in\mathcal M$. Since $F_p$ is a face of
$\Sn(\mathcal M)$, every extreme point of $F_p$ is also an extreme
point of $\Sn(\mathcal M)$. Hence $\omega$ is a pure normal state.
By Fact~\ref{fact:extreme}, its support projection $s(\omega)$ is
minimal. Thus $\mathcal M$ contains a minimal projection. Since
$\mathcal M$ is a factor, this implies that $\mathcal M$ is of type
$\mathrm I$.

We have therefore proved that $\mathcal M$ is of type $\mathrm I$ if
and only if $F_p$ has an extreme point for some nonzero projection
$p$. Moreover, if this condition holds, then $\mathcal M$ is of type
$\mathrm I$, and hence every nonzero projection $q$ dominates a
minimal projection. By the first part of the proof, $F_q$ therefore
has an extreme point for every nonzero $q\in\mathcal M$.
\end{proof}
\subsection{Finiteness and Invariant States}

\begin{lemma}[The lifted action preserves $F_p$]
\label{lem:lift}
For a unitary $U\in p\mathcal Mp$, $\widehat U:=U+(1-p)$ is unitary in
$\mathcal M$, and $\omega\mapsto\omega^{\widehat U}$ maps $F_p$ affinely
and weak*-continuously into $F_p$.
\end{lemma}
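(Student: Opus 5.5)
The plan is to verify three things in order: that $\widehat U$ is unitary, that the pulled-back functional $\omega^{\widehat U}$ again lies in $F_p$, and that $\omega\mapsto\omega^{\widehat U}$ is affine and weak$^*$-continuous.

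For unitarity, use that $U\in p\mathcal Mp$ gives $U=pU=Up$, and similarly $U^*=pU^*=U^*p$. Hence $U(1-p)=(1-p)U=0$, and likewise for $U^*$. Expanding,
\[
\widehat U^*\widehat U=(U^*+(1-p))(U+(1-p))=U^*U+(1-p)=p+(1-p)=1,
\]
since all cross terms vanish. The same computation gives $\widehat U\widehat U^*=1$. The same vanishing of cross terms also yields $\widehat Up=Up=U$, and therefore
\[
\widehat Up\widehat U^*=UU^*=p.
\]

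Next, show that the action preserves normal states. The map $A\mapsto\widehat UA\widehat U^*$ is an inner $*$-automorphism of $\mathcal M$, so it is unital, positive and ultraweakly continuous. Consequently $\omega^{\widehat U}=\omega\circ\Ad_{\widehat U}$ is positive with $\omega^{\widehat U}(1)=\omega(\widehat U\widehat U^*)=1$, and it is normal. For $\omega\in F_p$ we then get $\omega^{\widehat U}(p)=\omega(\widehat Up\widehat U^*)=\omega(p)=1$, so $\omega^{\widehat U}\in F_p$ by Lemma~\ref{lem:limit}.

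Affinity is immediate, because $\omega\mapsto\omega\circ\Ad_{\widehat U}$ is the restriction of a linear map on $\mathcal M_*$. For weak$^*$-continuity, take a net $\omega_\lambda\to\omega$ in $\sigma(\mathcal M_*,\mathcal M)$. Then for each $A\in\mathcal M$,
\[
\omega_\lambda(\widehat UA\widehat U^*)\to\omega(\widehat UA\widehat U^*),
\]
because $\widehat UA\widehat U^*$ is a fixed element of $\mathcal M$. This is the same argument as in Proposition~\ref{prop:modularliftsparcel}.

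No step is a real obstacle. The one point that needs care is the bookkeeping showing that the cross terms $U(1-p)$ and $(1-p)U^*$ vanish, and this follows from $U\in p\mathcal Mp$.
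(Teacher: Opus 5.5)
Your proposal is correct and follows essentially the paper's argument: the same cancellations $U(1-p)=(1-p)U=0$ give unitarity of $\widehat U$ and $\widehat U p\widehat U^*=p$, hence $\omega^{\widehat U}(p)=1$. You also explicitly check normality, affinity and weak$^*$-continuity, which the paper's proof leaves implicit.
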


\begin{proof}
Since $U\in p\mathcal Mp$,
\[
U=pUp=Up=pU,
\qquad
U^*(1-p)=(1-p)U=0.
\]
Hence
\[
\widehat U^*\widehat U
=
U^*U+(1-p)
=
p+(1-p)
=
1,
\]
and symmetrically $\widehat U\widehat U^*=1$, so $\widehat U$ is unitary.

Also,
\[
\widehat Up\widehat U^*
=
(Up+(1-p)p)(U^*+1-p)
=
UU^*+U(1-p)
=
p+0
=
p.
\]
Thus, for $\omega\in F_p$,
\[
\omega^{\widehat U}(p)
=
\omega(\widehat Up\widehat U^*)
=
\omega(p)
=
1,
\]
so $\omega^{\widehat U}\in F_p$.
\end{proof}

\begin{lemma}[Fixed points are tracial elements]
\label{lem:tracial}
$\omega\in F_p$ is a fixed point of the conjugation action of
$\mathcal U(p\mathcal Mp)$ iff $\omega|_{p\mathcal Mp}$ is tracial.
\end{lemma}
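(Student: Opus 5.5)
The plan is to use the support identity (Lemma~\ref{lem:support}) to reduce everything to the corner $p\mathcal Mp$, and then compare the two conditions on the generating set $\mathcal U(p\mathcal Mp)$.

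First I would compute the lifted action on a state $\omega\in F_p$. For $A\in\mathcal M$, Lemma~\ref{lem:lift} gives $\omega^{\widehat U}\in F_p$, so Lemma~\ref{lem:support} yields
\[
\omega^{\widehat U}(A)=\omega^{\widehat U}(pAp)=\omega(\widehat U pAp\widehat U^*).
\]
Since $\widehat U p=Up=U$ and $p\widehat U^*=(\widehat U p)^*=U^*$, this equals $\omega(UpApU^*)$. Likewise $\omega(A)=\omega(pAp)$. Writing $X=pAp$, which ranges over all of $p\mathcal Mp$ as $A$ ranges over $\mathcal M$, the fixed-point condition $\omega^{\widehat U}=\omega$ for every $U\in\mathcal U(p\mathcal Mp)$ becomes
\[
\omega(UXU^*)=\omega(X)\qquad\text{for all }X\in p\mathcal Mp,\ U\in\mathcal U(p\mathcal Mp).
\]
In other words, $\omega|_{p\mathcal Mp}$ is invariant under inner automorphisms of the unital von Neumann algebra $p\mathcal Mp$, whose unit is $p$.

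Next I would show that this invariance is equivalent to being tracial.

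\emph{Tracial implies invariant.} If $\omega(XY)=\omega(YX)$ on the corner, then $\omega(UXU^*)=\omega(XU^*U)=\omega(Xp)=\omega(X)$.

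\emph{Invariant implies tracial.} Substitute $X\mapsto XU$, which stays in the corner, to get $\omega(UX)=\omega(XU)$ for every unitary $U$ of the corner and every $X\in p\mathcal Mp$. Every element of a unital $C^*$-algebra is a linear combination of at most four unitaries. For a self-adjoint $Y$ with $\|Y\|\le1$ one uses $Y=\tfrac12\bigl(W+W^*\bigr)$ with $W=Y+i(p-Y^2)^{1/2}$, the square root taken in the corner. By linearity, $\omega(YX)=\omega(XY)$ for all $X,Y\in p\mathcal Mp$.

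Nothing here is deep. The only care needed is the bookkeeping that $\widehat U$ acts on the relevant compressions exactly as $U$ does, namely $\widehat U p=U$. One should also note that "tracial" refers to the restriction to the corner with unit $p$. Normality plays no role in this lemma.
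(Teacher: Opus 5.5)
Your proof is correct and follows essentially the same route as the paper. The support identity reduces the fixed-point condition to $\omega(UXU^*)=\omega(X)$ on $p\mathcal Mp$, traciality gives invariance directly, and the substitution $X\mapsto XU$ together with the fact that unitaries span the corner gives the converse. Your version is only more explicit in its bookkeeping: you justify $\widehat U X\widehat U^*=UXU^*$ via $\widehat Up=U$ and you write out the unitary decomposition.
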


\begin{proof}
If $\omega|_{p\mathcal Mp}$ is tracial, then for any unitary
$U\in p\mathcal Mp$ and $A\in p\mathcal Mp$,
\[
\omega^{\widehat U}(A)=\omega(UAU^*)=\omega(U^*UA)=\omega(A),
\]
using traciality on the pair $(UA,U^*)$ and $U^*U=p$; agreement on
$p\mathcal Mp$ extends to all of $\mathcal M$ by Lemma~\ref{lem:support}.

Conversely, suppose that $\omega$ is fixed under the conjugation action.
Then for every unitary $U\in p\mathcal Mp$ and every $A\in p\mathcal Mp$,
\[
\omega(UAU^*)=\omega(A).
\]
Replacing $A$ by $AU$ gives
\[
\omega(UA)=\omega(AU).
\]
Since every element of a unital $C^*$-algebra is a linear combination
of unitaries, it follows by linearity that
\[
\omega(AB)=\omega(BA)
\]
for all $A,B\in p\mathcal Mp$. Hence
$\omega|_{p\mathcal Mp}$ is tracial.
\end{proof}

\begin{remark}[A shorter route to the converse]
The converse direction above can be shortened to a single citation: a
normal state invariant under every inner automorphism of a von Neumann
algebra is exactly a tracial state (\cite[Prop.~8.5.16]{KadisonRingrose2}; \cite[\S1.9]{Sakai}). The proof above spells this out in full for $p\mathcal Mp$ so
that the lemma is self-contained.
\end{remark}

\begin{fact}[Finiteness via existence of a trace]
\label{fact:finitetrace}
Let $\mathcal N$ be a factor. Then $\mathcal N$ is finite if and only
if it admits a normal tracial state. When it exists, the normalised
normal tracial state is unique: the dimension function on projections
of $\mathcal N$ uniquely determines the value of any normal trace on
projections, and hence on all positive operators by the spectral
theorem, so two normalised normal tracial states must agree.
(\cite[Ch.~III, \S7, Thm.~1--2]{Dixmier}; \cite[Thm.~2.5.3]{Sakai}.)
\end{fact}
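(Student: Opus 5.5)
The statement combines two directions and a uniqueness clause. I would treat the easy direction and uniqueness by hand, and obtain existence from the standard dimension theory of finite factors.

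\emph{Trace implies finite.} Let $\tau$ be a normal tracial state on the factor $\mathcal N$. First I would show that $\tau$ is faithful. Its support projection $s(\tau)$ is unitarily invariant, since $\tau(uAu^*)=\tau(A)$ for every unitary $u$. Hence $s(\tau)$ is central, and because $\mathcal N$ is a factor and $\tau\neq0$, this forces $s(\tau)=1$. Now suppose $v^*v=1$ and $vv^*=q\le1$. Traciality gives $\tau(1-q)=\tau(v^*v)-\tau(vv^*)=0$, and faithfulness then gives $q=1$. So $1$ is finite. The same argument applied to $v^*v=p$ and $vv^*=q\le p$ shows that every projection is finite.

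\emph{Finite implies trace.} This is the substantial direction, and I would build the trace from the Murray--von Neumann dimension function. In a factor any two projections are comparable. When $1$ is finite, one first obtains a finitely additive dimension function $d$ on projections with $d(1)=1$ and $p\sim q\iff d(p)=d(q)$. In type $\mathrm I_n$ this is $\operatorname{rank}/n$. In type $\mathrm{II}_1$ one uses repeated halving to get homogeneous partitions $1=\sum_i e_i^{(n)}$ and defines $d(p)=\lim_n k_n(p)/n$. This is the same counting as in Theorem~\ref{thm:convergence}, but now with the comparison theorem replacing the trace.

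One then extends $d$ to self-adjoint operators by spectral approximation: define $\tau(\sum\lambda_i p_i)=\sum\lambda_i d(p_i)$ on finite spectral sums and pass to norm limits. One must then prove that $\tau$ is linear, tracial, and normal. Normality comes from complete additivity of $d$ on orthogonal families, which uses finiteness of $1$. I expect \emph{additivity of $\tau$ on non-commuting operators} to be the main obstacle, since $A+B$ has a spectral resolution unrelated to those of $A$ and $B$. For that step I would follow the standard approach via the Dixmier approximation property, or simply cite \cite[Ch.~III, \S7]{Dixmier} and \cite[Thm.~8.2.8]{KadisonRingrose2}, as the paper does.

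\emph{Uniqueness.} Let $\tau_1,\tau_2$ be normalized normal tracial states. Both are constant on equivalence classes of projections, since $\tau(v^*v)=\tau(vv^*)$.
\begin{itemize}
\item In type $\mathrm I_n$, every minimal projection gets the value $1/n$.
\item In type $\mathrm{II}_1$, each $e_i^{(n)}$ gets the value $1/n$. Comparing $p$ with sums of the $e_i^{(n)}$, exactly as in the proof of Theorem~\ref{thm:convergence}, gives $\lfloor n\tau_k(p)\rfloor/n\le d(p)\le(\lfloor n\tau_k(p)\rfloor+1)/n$ for $k=1,2$. Letting $n\to\infty$ yields $\tau_1(p)=\tau_2(p)=d(p)$.
\end{itemize}
So the two states agree on all projections. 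They then agree on all self-adjoint operators, because finite linear combinations of spectral projections are norm dense and states are norm continuous. Finally they agree on all of $\mathcal N$ by writing each element as $A+iB$ with $A,B$ self-adjoint.
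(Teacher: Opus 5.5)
Your proposal is correct and follows essentially the paper's route. The paper also defers the existence of a normal tracial state on a finite factor to Dixmier and Sakai, and it argues uniqueness the way you do: invariance under $\sim$ together with the dimension function fixes any normalized normal trace on projections, and norm density of finite spectral sums then extends agreement to all of $\mathcal N$. Your explicit proof of the direction ``trace $\Rightarrow$ finite'' is a correct, self-contained supplement to what the paper leaves to the citations: the support of a normal tracial state is unitarily invariant, hence central, hence equal to $1$, so $v^*v=1$ forces $vv^*=1$.
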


\noindent{\bf Theorem}~{\bf \ref{thm:fixedpoint}}~(Finiteness as a fixed point)
\begin{proof}
Since $\mathcal M$ is a factor and $p\ne0$, the corner
$p\mathcal Mp$ is again a factor, with identity $p$.

By Lemma~\ref{lem:tracial}, a fixed point of the conjugation action on
$F_p$ exists if and only if $p\mathcal Mp$ admits a normal tracial
state. By Fact~\ref{fact:finitetrace}, this holds if and only if
$p\mathcal Mp$ is finite.

Now $p$ is finite in $\mathcal M$ if and only if $p\mathcal Mp$ is
finite. Indeed, partial isometries whose initial and final projections
are dominated by $p$ belong to $p\mathcal Mp$, so Murray--von Neumann
equivalence of subprojections of $p$ is the same whether computed in
$\mathcal M$ or in $p\mathcal Mp$. Hence a fixed point exists in $F_p$
if and only if $p$ is finite.

Finally, when $p$ is finite, the finite factor $p\mathcal Mp$ has a
unique normalised normal tracial state. By the support identity, a
state in $F_p$ is uniquely determined by its restriction to
$p\mathcal Mp$. Therefore uniqueness of the normalised normal tracial
state on $p\mathcal Mp$ gives uniqueness of the fixed point in $F_p$.
\end{proof}

\subsection{Factor Classification}

\begin{fact}[Type $\mathrm{II}_\infty$ contains finite projections;
type $\mathrm{III}$ does not]
\label{fact:IIinfinity}
A type $\mathrm{II}_\infty$ factor contains nonzero finite projections.
In a type $\mathrm{III}$ factor every nonzero projection is properly
infinite. (\cite[Ch.~V, \S1]{Takesaki1}.)
\end{fact}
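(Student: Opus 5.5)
Both assertions come straight from the Murray--von Neumann definitions recalled at the start of Section~\ref{sec:minimality-finiteness}, combined with the triviality of the centre of a factor. The plan is to unwind those definitions, and to invoke the halving lemma only if the stronger ``halving'' form of proper infiniteness is wanted.

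\emph{First assertion.} A type~$\mathrm{II}_\infty$ factor is, by definition, a type~$\mathrm{II}$ factor whose identity is not finite. By the definition of type~$\mathrm{II}$ recalled above (no minimal projection, but some nonzero finite projection), such a factor contains a nonzero finite projection $e$. So the first claim is definitional. If one prefers the definition in which $\mathrm{II}_\infty$ means ``semifinite and not finite'', semifiniteness gives an increasing family of finite projections with supremum $1$, and any nonzero member of that family works.

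\emph{Second assertion.} Let $\mathcal M$ be a type~$\mathrm{III}$ factor and let $p\neq0$ be a projection. By definition $\mathcal M$ has no nonzero finite projection, so $p$ is infinite. Recall that $p$ is \emph{properly infinite} if $zp$ is infinite for every central projection $z$ with $zp\neq0$. Since $\mathcal M$ is a factor, its centre is $\mathbb C1$, so the only central projections are $0$ and $1$. The condition therefore reduces to the single requirement that $1\cdot p=p$ be infinite, which we already have.

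If the statement is meant in the stronger sense that $p\sim p_1\sim p-p_1$ for some $p_1\le p$, I would cite the halving lemma for properly infinite projections \cite[Ch.~6]{KadisonRingrose2}, applied to $p$ inside the factor $p\mathcal Mp$. The reduction to $p\mathcal Mp$ uses the same observation as in the proof of Theorem~\ref{thm:fixedpoint}: equivalence of subprojections of $p$ is the same whether computed in $\mathcal M$ or in $p\mathcal Mp$.

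The only real point of care is matching conventions. Some sources define type~$\mathrm{III}$ as ``purely infinite'' or via the absence of a faithful normal semifinite trace. In that case one must pass through the standard equivalence with ``no nonzero finite projection''. I would handle this by a citation, namely \cite[Ch.~V]{Takesaki1}, rather than reprove it.
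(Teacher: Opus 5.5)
Your proposal is correct and takes essentially the paper's approach: the paper just cites \cite[Ch.~V, \S1]{Takesaki1}, and under the definitions of type~$\mathrm{II}$ and type~$\mathrm{III}$ recalled in Section~\ref{sec:minimality-finiteness} both claims are definitional, with your trivial-centre observation supplying the one line reducing proper infiniteness of a nonzero projection to infiniteness. The halving-lemma digression is harmless but unnecessary, since the only use of this fact (in the proof of Theorem~\ref{thm:classification}) needs just the existence of a nonzero finite projection in type~$\mathrm{II}_\infty$ and its absence in type~$\mathrm{III}$.
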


\noindent{\bf Theorem}~{\bf \ref{thm:classification}}~(Parcel classification of factor type)
\begin{proof}
By Theorem~\ref{thm:typeI}, \textbf{Atoms} holds exactly for factors of
type $\mathrm I$.

By Theorem~\ref{thm:fixedpoint} applied to $p=1$,
\textbf{Whole-finite} holds exactly when the identity projection is
finite, equivalently when $\mathcal M$ is a finite factor. Thus it
holds exactly for types $\mathrm I_n$ and $\mathrm{II}_1$.

Again by Theorem~\ref{thm:fixedpoint}, \textbf{Some-finite} holds
exactly when $\mathcal M$ contains a nonzero finite projection. This is
true for types $\mathrm I_n$ and $\mathrm{II}_1$ by taking $p=1$; for
type $\mathrm I_\infty$ because it contains minimal projections, which
are finite; and for type $\mathrm{II}_\infty$ by
Fact~\ref{fact:IIinfinity}. It fails for type $\mathrm{III}$, since a
type $\mathrm{III}$ factor has no nonzero finite projections.

The resulting truth values are therefore exactly those displayed in the
table. The five rows are pairwise distinct. Reading off the table gives
\[
\mathcal M\text{ is type }\mathrm I
\iff
\textbf{Atoms},
\]
\[
\mathcal M\text{ is type }\mathrm{III}
\iff
\neg\textbf{Atoms}\wedge\neg\textbf{Some-finite},
\]
and
\[
\mathcal M\text{ is type }\mathrm{II}
\iff
\neg\textbf{Atoms}\wedge\textbf{Some-finite}.
\]
\end{proof}

\section*{Declarations}

\subsection*{Funding}
No funding was received for this work.

\subsection*{Competing interests}
The author declares no competing interests.

\bibliographystyle{plain}
\bibliography{library_ord}

@article{ArchboldRobertTikuisis2017,
  author  = {Archbold, Robert and Robert, Leonel and Tikuisis, Aaron},
  title   = {The {Dixmier} Property and Tracial States for
             {$C^*$}-Algebras},
  journal = {Journal of Functional Analysis},
  volume  = {273},
  number  = {8},
  pages   = {2655--2718},
  year    = {2017},
  doi     = {10.1016/j.jfa.2017.06.026}
}

@article{ConnesTakesaki1977,
  author  = {Connes, Alain and Takesaki, Masamichi},
  title   = {The Flow of Weights on Factors of Type {III}},
  journal = {Tohoku Mathematical Journal},
  series  = {2},
  volume  = {29},
  number  = {4},
  pages   = {473--575},
  year    = {1977},
  doi     = {10.2748/tmj/1178240493}
}

@article{Halvorson2001,
  author  = {Halvorson, Hans},
  title   = {{Reeh--Schlieder} Defeats {Newton--Wigner}: On Alternative
             Localization Schemes in Relativistic Quantum Field Theory},
  journal = {Philosophy of Science},
  volume  = {68},
  number  = {1},
  pages   = {111--133},
  year    = {2001},
  doi     = {10.1086/392869}
}

@article{Connes1973,
  author  = {Connes, Alain},
  title   = {Une classification des facteurs de type {III}},
  journal = {Annales scientifiques de l'\'Ecole Normale Sup\'erieure},
  series  = {4},
  volume  = {6},
  number  = {2},
  pages   = {133--252},
  year    = {1973},
  doi     = {10.24033/asens.1247}
}

@article{Yngvason2005,
  author  = {Yngvason, Jakob},
  title   = {The Role of type {III} factors in Quantum Field Theory},
  journal = {Reports on Mathematical Physics},
  volume  = {55},
  number  = {1},
  pages   = {135--147},
  year    = {2005},
  doi     = {10.1016/S0034-4877(05)80009-6}
}

@article{EarmanFraser2006,
  author  = {Earman, John and Fraser, Doreen},
  title   = {Haag's Theorem and Its Implications for the Foundations
             of Quantum Field Theory},
  journal = {Erkenntnis},
  volume  = {64},
  number  = {3},
  pages   = {305--344},
  year    = {2006},
  doi     = {10.1007/s10670-005-5814-y}
}

@article{Unruh1976,
  author  = {Unruh, W. G.},
  title   = {Notes on Black-Hole Evaporation},
  journal = {Physical Review D},
  volume  = {14},
  number  = {4},
  pages   = {870--892},
  year    = {1976},
  doi     = {10.1103/PhysRevD.14.870}
}

@article{CrispinoHiguchiMatsas2008,
  author  = {Crispino, Lu{\'i}s C. B. and Higuchi, Atsushi and
             Matsas, George E. A.},
  title   = {The Unruh Effect and Its Applications},
  journal = {Reviews of Modern Physics},
  volume  = {80},
  number  = {3},
  pages   = {787--838},
  year    = {2008},
  doi     = {10.1103/RevModPhys.80.787}
}

@incollection{FewsterRejzner2020,
  author    = {Fewster, Christopher J. and Rejzner, Kasia},
  title     = {Algebraic Quantum Field Theory: An Introduction},
  booktitle = {Progress and Visions in Quantum Theory in View of Gravity:
               Bridging Foundations of Physics and Mathematics},
  editor    = {Finster, Felix and Giulini, Domenico and Kleiner, Johannes
               and Tolksdorf, J{\"u}rgen},
  pages     = {1--61},
  publisher = {Birkh{\"a}user},
  address   = {Cham},
  year      = {2020},
  doi       = {10.1007/978-3-030-38941-3_1},
  eprint    = {1904.04051},
  archivePrefix = {arXiv},
  primaryClass  = {hep-th}
}

@article{Witten2018,
  author  = {Witten, Edward},
  title   = {{APS} Medal for Exceptional Achievement in Research: Invited article on entanglement properties of quantum field theory},
  journal = {Reviews of Modern Physics},
  volume  = {90},
  number  = {4},
  pages   = {045003},
  year    = {2018},
  doi     = {10.1103/RevModPhys.90.045003}
}

@article{VALENTE2014147,
title = {Does the {Reeh–Schlieder} theorem violate relativistic causality?},
journal = {Studies in History and Philosophy of Science Part B: Studies in History and Philosophy of Modern Physics},
volume = {48},
pages = {147-155},
year = {2014},
issn = {1355-2198},
doi = {https://doi.org/10.1016/j.shpsb.2014.05.006},
url = {https://www.sciencedirect.com/science/article/pii/S1355219814000574},
author = {Giovanni Valente}
}

@article{DelSantoGisin2025,
  author  = {Del Santo, Flavio and Gisin, Nicolas},
  title   = {Which Features of Quantum Physics Are Not Fundamentally Quantum but Are Due to Indeterminism?},
  journal = {Quantum},
  volume  = {9},
  pages   = {1686},
  year    = {2025},
  doi     = {10.22331/q-2025-04-03-1686}
}

@book{Rejzner2016,
  author    = {Rejzner, Kasia},
  title     = {Perturbative Algebraic Quantum Field Theory: An Introduction for Mathematicians},
  series    = {Mathematical Physics Studies},
  publisher = {Springer},
  address   = {Cham},
  year      = {2016},
  doi       = {10.1007/978-3-319-25901-7},
  isbn      = {978-3-319-25899-7}
}

@book{KadisonRingrose1,
  author    = {Kadison, Richard V. and Ringrose, John R.},
  title     = {Fundamentals of the Theory of Operator Algebras,
               Volume {I}: Elementary Theory},
  series    = {Graduate Studies in Mathematics},
  volume    = {15},
  publisher = {American Mathematical Society},
  address   = {Providence, RI},
  year      = {1997},
  note      = {Reprint of the 1983 original}
}

@misc{EdalatErgodicity2026,
  author       = {Edalat, Abbas},
  title        = {Quantum Ergodicity and Thermalization in Interval Quantum Mechanics},
  year         = {2026},
  howpublished = {arXiv:2606.00749 [quant-ph]}
}

@article{Treppiedi2026,
  author  = {Treppiedi, Attilio},
  title   = {Finite-resolution physics from additivity symmetry},
  journal = {International Journal of Geometric Methods in Modern Physics},
  year    = {2026},
  volume  = {2026},
  pages   = {2650117},
  doi     = {10.1142/S0219887826501173}
}

@book{sontag2013mathematical,
  title={Mathematical control theory: deterministic finite dimensional systems},
  author={Sontag, Eduardo D},
  volume={6},
  year={2013},
  publisher={Springer Science \& Business Media}
}

@book{KadisonRingrose2,
  author    = {Kadison, Richard V. and Ringrose, John R.},
  title     = {Fundamentals of the Theory of Operator Algebras,
               Volume {II}: Advanced Theory},
  series    = {Graduate Studies in Mathematics},
  volume    = {16},
  publisher = {American Mathematical Society},
  address   = {Providence, RI},
  year      = {1997},
  note      = {Reprint of the 1986 original}
}

@book{Takesaki1,
  author    = {Takesaki, Masamichi},
  title     = {Theory of Operator Algebras {I}},
  series    = {Encyclopaedia of Mathematical Sciences},
  volume    = {124},
  publisher = {Springer-Verlag},
  address   = {Berlin},
  year      = {2002},
  note      = {Reprint of the 1979 original}
}

@article{HaagKastler,
  author  = {Haag, Rudolf and Kastler, Daniel},
  title   = {An Algebraic Approach to Quantum Field Theory},
  journal = {Journal of Mathematical Physics},
  volume  = {5},
  pages   = {848--861},
  year    = {1964}
}

@book{Araki,
  author    = {Araki, Huzihiro},
  title     = {Mathematical Theory of Quantum Fields},
  publisher = {Oxford University Press},
  address   = {Oxford},
  year      = {1999}
}

@book{brattelirobinson2,
  author    = {Bratteli, Ola and Robinson, Derek W.},
  title     = {Operator Algebras and Quantum Statistical Mechanics~2:
               Equilibrium States. Models in Quantum Statistical
               Mechanics},
  series    = {Texts and Monographs in Physics},
  edition   = {Second},
  publisher = {Springer-Verlag},
  address   = {Berlin},
  year      = {1997},
  isbn      = {3-540-61443-5},
  doi       = {10.1007/978-3-662-03444-6},
}

@book{brattelirobinson1,
  author    = {Bratteli, Ola and Robinson, Derek W.},
  title     = {Operator Algebras and Quantum Statistical Mechanics~1:
               {C}*- and {W}*-Algebras. Symmetry Groups.
               Decomposition of States},
  series    = {Texts and Monographs in Physics},
  edition   = {Second},
  publisher = {Springer-Verlag},
  address   = {Berlin},
  year      = {1987},
  isbn      = {0-387-17093-6},
  doi       = {10.1007/978-3-662-02520-8},
}

@article{SummersWerner,
  author  = {Summers, Stephen J. and Werner, Reinhard F.},
  title   = {The Vacuum Violates {B}ell's Inequalities},
  journal = {Physics Letters A},
  volume  = {110(5)},
  pages   = {257--259},
  year    = {1985},
doi={10.1016/0375-9601(85)90093-3}
}

@book{Sakai,
  author    = {Sakai, Sh\^{o}ichir\^{o}},
  title     = {{$C^*$}-Algebras and {$W^*$}-Algebras},
  series    = {Ergebnisse der Mathematik und ihrer Grenzgebiete},
  volume    = {60},
  publisher = {Springer-Verlag},
  address   = {Berlin},
  year      = {1971}
}

@book{Dixmier,
  author    = {Dixmier, Jacques},
  title     = {von {N}eumann Algebras},
  series    = {North-Holland Mathematical Library},
  volume    = {27},
  publisher = {North-Holland Publishing Co.},
  address   = {Amsterdam},
  year      = {1981},
  note      = {Translated from the French by F.~Jellett}
}

@book{streaterWightman,
  author    = {Streater, Raymond F. and Wightman, Arthur S.},
  title     = {{PCT}, Spin and Statistics, and All That},
  publisher = {W. A. Benjamin},
  address   = {New York},
  year      = {1964}
}

@article{bisognanowichmann76,
  author  = {Bisognano, Joseph J. and Wichmann, Eyvind H.},
  title   = {On the Duality Condition for Quantum Fields},
  journal = {Journal of Mathematical Physics},
  volume  = {17},
  number  = {3},
  pages   = {303--321},
  year    = {1976},
  doi     = {10.1063/1.522898}
}

@article{Wightman1956,
  author  = {Wightman, Arthur S.},
  title   = {Quantum Field Theory in Terms of Vacuum Expectation Values},
  journal = {Physical Review},
  volume  = {101},
  number  = {2},
  pages   = {860--866},
  year    = {1956},
  doi     = {10.1103/PhysRev.101.860}
}

@misc{Edalat2026,
  author        = {Abbas Edalat},
  title         = {Finite-Precision Quantum Mechanics: Quantum Parcels, Information and Geometry},
  year          = {2026},
  note          = {arXiv:2605.19706 [quant-ph], \url{https://arxiv.org/abs/2605.19706}}
}

@book{vonNeumann1955,
  author    = {John von Neumann},
  title     = {Mathematical Foundations of Quantum Mechanics},
  publisher = {Princeton University Press},
  year      = {1955},
  note      = {Classical rigorous treatment of density operators and quantum statistics.}
}

@book{ReedSimon1980,
  author    = {Michael Reed and Barry Simon},
  title     = {Methods of Modern Mathematical Physics, {Vol. I:} Functional Analysis},
  edition   = {2nd},
  publisher = {Academic Press},
  year      = {1980}
  }

@article{Heisenberg1925,
    author = {Heisenberg, Werner},
    title = {Quantum-theoretical re-interpretation of kinematic and mechanical relations},
    journal = {Zeitschrift für Physik},
    volume = {33},
    pages = {879--893},
    year = {1925},
    note = {English translation in: B. L. van der Waerden (ed.), \emph{Sources of Quantum Mechanics}, North-Holland, 1967, pp. 261--276}
}

@inproceedings{smyth1983,
  title={Power domains and predicate transformers: A topological view},
  author={Smyth, Michael B},
  booktitle={International Colloquium on Automata, Languages, and Programming},
  pages={662--675},
  year={1983},
  organization={Springer}
}

@article{abramsky1991domain,
  title={Domain theory in logical form},
  author={Abramsky, Samson},
  journal={Annals of pure and applied logic},
  volume={51},
  number={1-2},
  pages={1--77},
  year={1991},
  publisher={Elsevier}
}

@incollection{abramsky_domain_1994,
  author    = {Abramsky, Samson and Jung, Achim},
  title     = {Domain {Theory}},
  booktitle = {Handbook of {Logic} in {Computer} {Science}},
  volume    = {3},
  editor    = {Abramsky, Samson and Gabbay, Dov M. and Maibaum, Thomas S. E.},
  publisher = {Oxford University Press},
  year      = {1994},
  pages     = {1--168},
  url       = {https://www.cs.bham.ac.uk/~axj/pub/papers/handy1.pdf}
}

@book{johnstone1982stone,
  title={Stone spaces},
  author={Johnstone, Peter T},
  year={1982},
  publisher={Cambridge University Press}
}

@book{haag2012local,
  title={Local quantum physics: Fields, particles, algebras},
  author={Haag, Rudolf},
  year={2012},
  publisher={Springer Science \& Business Media}
}

@book{scott1970outline,
  title={Outline of a mathematical theory of computation},
  author={Scott, Dana},
  year={1970},
  publisher={Oxford University Computing Laboratory, Programming Research Group Oxford, UK}
}

@book{vic89,
  author =	 {Vickers, Steve},
  publisher =	 {Cambridge University Press},
  title =	 {{Topology via Logic}},
  year=1989
}

\end{document}